\newtheorem{definition}{Definition} 
\newtheorem{assumption}{Assumption} 
\newtheorem{theorem}{Theorem} 
\newtheorem{Proposition}{Proposition}
\newenvironment{proof}{{\indent \indent \it Proof:}}{\hfill $\square$\par}
\newenvironment{prf}{{\noindent \noindent \it }}{\hfill $\square$\par}
\begin{document}
\title{A Multi-Dimensional Matrix Pencil-Based \\
Channel Prediction Method for \\
Massive MIMO with Mobility}
\author{Weidong~Li,~Haifan~Yin,~Ziao~Qin,~Yandi~Cao,
        and~M\'{e}rouane~Debbah,~\IEEEmembership{Fellow,~IEEE}
\thanks{W. Li, H. Yin, Z. Qin and Y. Cao are with Huazhong University of Science and Technology, 430074 Wuhan, China (e-mail: weidongli@hust.edu.cn, yin@hust.edu.cn, ziao\_qin@hust.edu.cn, yandicao@hust.edu.cn).}
\thanks{M. Debbah is with the Technology Innovation Institute, and also with the Mohamed Bin Zayed University of Artificial Intelligence, 9639 Masdar City, Abu Dhabi, United Arab Emirates (e-mail: merouane.debbah@tii.ae).}
\thanks{A part of this work \cite{Li22ICC} was presented in the IEEE International Conference on Communications (IEEE ICC 2022).}
\thanks{This work is supported by the National Natural Science Foundation of China under Grant 62071191.}}
\maketitle
\begin{abstract}
This paper addresses the mobility problem in massive multiple-input multiple-output systems, which leads to significant performance losses in the practical deployment of the fifth generation mobile communication networks. We propose a novel channel prediction method based on multi-dimensional matrix pencil (MDMP), which estimates the path parameters by exploiting the angular-frequency-domain and angular-time-domain structures of the wideband channel. The MDMP method also entails a novel path pairing scheme to pair the delay and Doppler, based on the super-resolution property of the angle estimation. 
Our method is able to deal with the realistic constraint of time-varying path delays introduced by user movements, which has not been considered so far in the literature. 
We prove theoretically that in the scenario with time-varying path delays, the prediction error converges to zero with the increasing number of the base station (BS) antennas, providing that only two arbitrary channel samples are known. 
We also derive a lower-bound of the number of the BS antennas to achieve a satisfactory performance.
Simulation results under the industrial channel model of 3GPP demonstrate that our proposed MDMP method approaches the performance of the stationary scenario even when the users' velocity reaches 120 km/h and the latency of the channel state information is as large as 16 ms.
\end{abstract}

\begin{IEEEkeywords}
Massive MIMO, mobility, channel prediction, CSI delay, matrix pencil, MDMP prediction method, channel structure.
\end{IEEEkeywords}

%
\IEEEpeerreviewmaketitle

\section{Introduction}
%
%
%
%
\IEEEPARstart{M}{assive} multiple-input multiple-output (MIMO) technology is playing a key role in enhancing the spectral efficiency of the fifth generation (5G) mobile communication systems \cite{Marzetta10TCom}. Compared to the conventional MIMO, massive MIMO greatly improves the capacity and reliability of mobile communication system by deploying more antenna elements at the base station (BS) \cite{Heath14Mag}.

\par 
Theoretically, the performance of massive MIMO is governed by the accuracy of the channel state information (CSI). In practice, there are several causes of inaccurate CSI, including pilot contamination \cite{jose2011TWC}, imperfect CSI feedback in frequency division duplexing (FDD) mode \cite{Marzetta10TCom}, the mobility problem \cite{Yin20JSAC} (or the ``curse of mobility"), etc. 
Although a rich body of literature has addressed the problems of pilot contamination and CSI feedback in FDD, e.g., in \cite{Yin13JSAC, Muller14JSTSP, Adhikary2013, Jiang15TWC}, the mobility problem has received relatively little attention so far, and is still a challenging problem to be solved. Such a problem mainly results from the mobility of user equipment (UE) and the delay of CSI \cite{Yin20JSAC}. 
The movement of the UE makes the estimation of CSI outdated and unusable for multi-user precoding, particularly in high-mobility scenario with large CSI delay. 
Some papers have focused on the effect of CSI delay and prove that it is harmful for the spectral efficiency performance of massive MIMO \cite{Heath13JCN,Ai21TWC,Larsson18TWC,Papazafeiropoulos15TCom}. 

One way to overcome this problem is by channel prediction, which has been investigated in the literature. 
The work in \cite{18GaoTWC} proposes a spatial-temproal basis expansion model (ST-BEM) method to predict the downlink (DL) channel. 
The authors of \cite{19OgawaTVT} propose a compressed sensing channel prediction method. 
An efficient approximated maximum likelihood estimator is proposed in \cite{Caire20TWC}.
However, the aforementioned literature generally assumes the channel parameters are time-invariant, which might be questionable in practical mobility scenarios. 
The time-varying delay in Wi-Fi communication scenario is studied in \cite{12TomasTVT}, where the authors propose an iterative channel estimator basing on a two-dimensional (2-D) subspace spanned by the discrete prolate spheroidal (DPS) sequence. The DPS sequence has less spectrum leakage than the discrete Fourier transform (DFT) sequence.
However, the DPS sequence needs to meet two specific requirements in applications: ${\omega _{\max }}{T} \ll 1$ and ${\tau _{\max }}\Delta f \ll 1$, where ${\omega _{\max }}$ is the maximum Doppler, ${\tau _{\max }}$ denotes the maximum delay, ${T}$ is the duration of sample, and $\Delta f$ represents the sub-carrier spacing. These requirements might not be easily full-filled in practice.

\par Recently, machine-learning (ML) and neural network (NN) algorithms, such as recurrent NN, conventional NN  and deep NN, have been applied to predict the channel through the trained network with the historical data \cite{19ZhuTCom,20YuanTWC,20GesbertTWC,21GaoJSAC}. The authors in \cite{21GaoJSAC} propose a data-driven channel prediction method by adopting complex-valued NN algorithm. 
However, these algorithms usually need enormous data, which cannot be collected easily in realistic systems. Additionally, these algorithms may perform well in static and low-mobility scenarios, yet they might not achieve the expected performance in high-mobility scenario due to the long training time.
Moreover, the trained network may not perform ideally in the varying environment, due to the challenge of network generalization. 

By transforming the wideband channel into angular-delay domain, the paper \cite{Yin20JSAC} proposes a Prony-based angular-delay domain (PAD) prediction method and proves that it can overcome the problem of CSI aging. However, it does not take the effect of time-varying path delay into consideration and assumes the CSI delay to be an integral multiple of the pilot interval. 
To the best of our knowledge, the realistic effect of time-varying path delay is overlooked in most of the existing literature. Such an effect makes the path delay difficult to track and leads to the distortion of the traditional time-domain multipath channel structure. 


\par In order to break the limitations above and tackle the mobility problem, we propose a novel matrix pencil (MP) based super-resolution channel prediction method. The super-resolution property of our method comes from the super-accurate parameters estimated by MP algorithm.
In the literature, some papers adopt MP algorithm to estimate channel parameters. In \cite{92TSP} the authors adopt 2-D MP algorithm to estimate 2-D frequencies. The work in \cite{15TWC} estimates the path angle and delay in the IEEE 802.11ac context. The elevation angle of departure (EOD), azimuth angle of departure (AOD) and delay are estimated in \cite{19DSP}. 
Compared to other traditional super-resolution algorithms, e.g., multiple signal classification (MUSIC) \cite{86SchmidtTAP} and estimation of signal parameters via rational invariance techniques (ESPRIT) \cite{89RoyTASSP}, MP algorithm has some distinct advantages: it does not search and find peak value in space, and it needs less channel samples.
However, the traditional MP algorithm cannot be directly applied to solve the mobility problem of massive MIMO, because of the time-varying path delay and the multi-dimensional structure of the wideband massive MIMO channel. 


\par In order to achieve more accurate channel predictions in this paper, we extend the traditional MP method to multi-dimensional algorithm that extracts the EOD, AOD, time-varying delay and Doppler simultaneously. 
Specifically, we first introduce a three-dimensional (3-D) MP algorithm to estimate the EOD, AOD, and delay on the angular-frequency domain, and then  put forward another 3-D MP algorithm to estimate the EOD, AOD and Doppler on the angular-time domain. 
Based on the super-resolution property of the estimated EOD and AOD, we propose a pairing algorithm to pair the path delay and Doppler. 
The future CSI is reconstructed with the estimated parameters. 
To the best of our knowledge, our proposed multi-dimensional matrix pencil (MDMP) method is the first attempt to estimate the EODs, AODs, Doppler and the time-varying delays of multiple paths simultaneously.

\par The contributions of this paper are summarized as follows:
\begin{itemize}
\item We propose a super-resolution MDMP prediction method to address the mobility problem with time-varying path delays, which estimates the multipath EODs, AODs, delays and Doppler simultaneously through angular-time-domain and angular-frequency-domain structures of the channel.
Compared to the traditional methods, significant gains of our proposed method are confirmed in simulations.
\item We prove that the prediction error of the MDMP method converges to zero in mobility scenario with arbitrary CSI delay, when the number of the BS antennas and the bandwidth are large enough.
Our proposed method breaks the limitation of the PAD method whose CSI delay is an integral multiple of the pilot interval.

\item We also prove that the prediction error converges to zero providing that only two arbitrary samples are known, when the number of the BS antennas is large enough. Our proposed MDMP method does not necessarily require the samples to be neighboring ones, yet it is a common assumption in most existing works.
\item We derive a lower-bound of the number of the BS antennas in a wideband channel to give a satisfactory performance of the MDMP method. 
The lower-bound is correlated with the number of samples. 
As the number of samples increases, the lower-bound decreases. 

\end{itemize}

\par This paper is organized as follows: Sec. \ref{sec:system model} introduces the channel model. In Sec. \ref{sec:MDMP method}, MDMP prediction method is proposed. The performance of MDMP method is analyzed in Sec. \ref{sec:performance analysis}. The simulation results are shown in Sec. \ref{sec:simulation}, and Sec. \ref{sec:conclusion} concludes the paper. 

\par Notations: We use boldface to represent vectors and matrices, where ${{\bf{0}}_{{M_1} \times {N_1}}}$, ${{\bf{I}}_{{M_2}}}$, ${\Upsilon _{{M_3}}}$ and ${{\bf{1}}_{{M_4} \times {N_5}}}$ denote ${M_1} \times {N_1}$ zero matrix, ${M_2} \times {M_2}$ identity matrix, ${M_3} \times {M_3}$ anti-identity matrix and ${M_4} \times {N_5}$ all-ones matrix. 
Let ${({\bf{X}})^T}$, ${({\bf{X}})^*}$, ${({\bf{X}})^H}$, ${({\bf{X}})^{ - 1}}$ and ${({\bf{X}})^\dag }$ denote the transpose, conjugate, conjugate transpose, inverse and Moore-Penrose pseudoinverse of the matrix ${\bf{X}}$, respectively. ${\rm{card}}(\cdot )$ denotes the number of the elements in a set. $r\{  \cdot \} $ denotes the rank of a matrix. 
${\left\|  \cdot  \right\|_F}$ stands for the Frobenius norm. 
${\mathop{\rm Re}\nolimits} \left\{  \cdot  \right\}$ and ${\mathop{\rm Im}\nolimits} \left\{  \cdot  \right\}$ take the real and imaginary components of a complex number. $E\{  \cdot \} $ is the expectation operation. ${\rm{diag}}\{  \cdot \} $ denotes the diagonal operation of a matrix. $[{\bf{X}}:{\bf{Y}}]$ is the extending matrix of ${\bf{X}}$ and ${\bf{Y}}$. 
${\bf{X}} \odot {\bf{Y}}$ and ${\bf{X}} \otimes {\bf{Y}}$ denote the Hadamard product and Kronecker product of ${\bf{X}}$ and ${\bf{Y}}$.

\section{Channel Model}\label{sec:system model}
\par We consider a wideband time division duplexed (TDD) massive MIMO system, where a BS serves multiple UEs. In such a system, the BS estimates the CSI from the uplink (UL) pilot, and the DL CSI is acquired based on channel reciprocity. 
\begin{figure}[htb]
\centering
\includegraphics[width=3.5in]{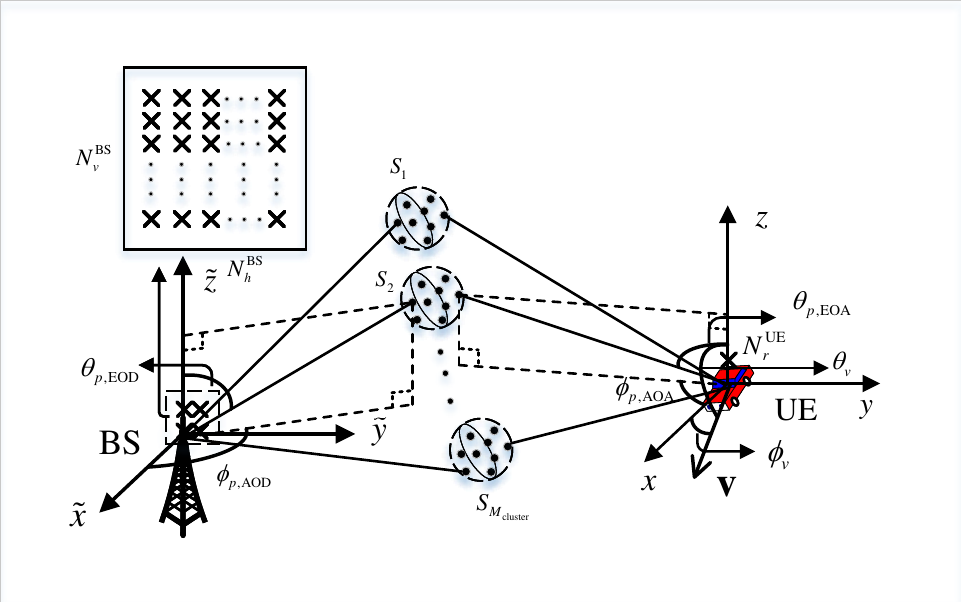}
\caption{The multipath channel between the BS and the UE.}
\vspace{-0.3cm}
\label{figure_speed_se}
\end{figure}
\par Fig. \ref{figure_speed_se} illustrates the channel between the BS and the UE. In most 5G commercial systems, the BS is equipped with a uniform planar array (UPA), which consists of $N_h^{{\rm{BS}}}$ columns and $N_v^{{\rm{BS}}}$ rows of antenna elements.
The numbers of the BS antennas and the UE antennas are respectively $N_t={N_h^{{\rm{BS}}}}{N_v^{{\rm{BS}}}}$ and $N_r^{{\rm{UE}}}$.
There are ${M_{\rm{cluster}}}$ different clusters, which are denoted by ${S_i}$, $i = 1,2, \cdots ,{M_{\rm{cluster}}}$. Each cluster contains many scattering rays.
The UL and DL channels share the same bandwidth, which is composed of ${N_f}$ subcarriers with spacing $\Delta f$.
\par The DL MIMO channel between the BS and the UE is expressed as ${\bf{H}}(t,f) = {\left[ {{h_{u,{s_h},{s_v}}}(t,f)} \right]_{N_r^{{\rm{UE}}} \times N_h^{{\rm{BS}}}N_v^{{\rm{BS}}}}}$, where ${h_{u,{s_h},{s_v}}}(t,f)$ denotes the channel frequency response between the ${s_h}$-th column and the ${s_v}$-th row  of the BS antenna array and the $u$-th antenna of the UE, and is modeled as \cite{3GPPR16}
\begin{equation}
\begin{array}{l}
\!\!\!{h_{u,{s_h},{s_v}}}(t,f)\\
\!\!\!= \sum\limits_{p = 1}^P {} {\beta _p}{e^{\frac{{j2\pi ({\bf{\hat r}}{{_p^{{\rm{rx}}}})^T}{\bf{\bar d}}_u^{{\rm{rx}}}}}{{{\lambda _0}}}}}{e^{\frac{{j2\pi ({\bf{\hat r}}{{_p^{{\rm{tx}}}})^T}{\bf{\bar d}}_{{s_h},{s_v}}^{{\rm{tx}}}}}{{{\lambda _0}}}}}{e^{j2\pi {\omega _p}t}}{e^{-j2\pi f{\tau _p}(t)}}
\end{array},\!\!
\!\label{ChannelModel}
\end{equation}
where $P$ denotes the number of paths, and ${\beta _p}$ is the complex amplitude of the $p$-th path. Also, ${\lambda _0} = \frac{{\rm{c}}}{{{f_c}}}$ is the wavelength, where ${f_c}$ is the central carrier frequency and ${\rm{c}}$ is the speed of light. Furthermore, ${\bf{\hat r}}_p^{{\rm{rx}}}$ and ${\bf{\hat r}}_p^{{\rm{tx}}}$ denote the spherical unit vectors of the UE and the BS, and are expressed as
\begin{equation}
\begin{array}{l}
{\bf{\hat r}}_p^{{\rm{rx}}} = \left[ \begin{array}{l}
\cos {\theta _{p,{\rm{EOA}}}}\cos {\phi _{p,{\rm{AOA}}}}\\
\cos {\theta _{p,{\rm{EOA}}}}\sin {\phi _{p,{\rm{AOA}}}}\\
\ \ \ \ \ \ \ \sin {\theta _{p,{\rm{EOA}}}}
\end{array} \right],
\end{array}
\!\label{R_rx}
\end{equation}
\begin{equation}
\begin{array}{l}
{\bf{\hat r}}_p^{{\rm{tx}}} = \left[ \begin{array}{l}
\cos {\theta _{p,{\rm{EOD}}}}\cos {\phi _{p,{\rm{AOD}}}}\\
\cos {\theta _{p,{\rm{EOD}}}}\sin {\phi _{p,{\rm{AOD}}}}\\
\ \ \ \ \ \ \ \sin {\theta _{p,{\rm{EOD}}}}
\end{array} \right],
\end{array}
\!\label{R_tx}
\end{equation}
where ${\theta _{p,{\rm{EOA}}}}$, ${\phi _{p,{\rm{AOA}}}}$, ${\theta _{p,{\rm{EOD}}}}$ and ${\phi _{p,{\rm{AOD}}}}$ are the elevation angle of arrival (EOA), azimuth angle of arrival (AOA), EOD and AOD, respectively. Additionally, ${\bf{\bar d}}_{{s_h}{s_v}}^{{\rm{tx}}}$ and ${\bf{\bar d}}_u^{{\rm{rx}}}$ are the location vectors of the BS and the UE antennas:
\begin{equation}
{\bf{\bar d}}_{{s_h},{s_v}}^{{\rm{tx}}} = {\left[ {\begin{array}{*{20}{c}}{0,}&{d_h^{{\rm{tx}}}({s_h} - 1),}&{d_v^{{\rm{tx}}}({s_v} - 1)} \end{array}} \right]^T},
\!\label{Vector_BS}
\end{equation}
where $d_h^{{\rm{tx}}}$ is the horizontal antenna spacing, and $d_v^{{\rm{tx}}}$ is the vertical antenna spacing. Moreover, ${\omega _p} = \frac{{ ({\bf{\hat r}}{{_p^{{\rm{rx}}}})^T}{\bf{v}}}}{{{\lambda _0}}}$ represents the Doppler, and ${\bf{v}}$ is the velocity vector of the UE:
\begin{equation}
{\bf{v}} = v{\left[ {\begin{array}{*{20}{c}}{\cos {\theta _v}\cos {\phi _v},}&{\cos {\theta _v}\sin {\phi _v},}&{\sin {\theta _v}} \end{array}} \right]^T},
\!\label{Velocity}
\end{equation}
where $v$ is the speed, ${\theta _v}$ is the elevation angle of velocity, and ${\phi _v}$ is the azimuth angle of velocity. The $p$-th path delay ${\tau _p}(t)$ is time-varying and modeled as \cite{3GPPR16}
\begin{equation}
{\tau _p}(t) = {\tau _{p,0}} + {k_{{\tau _p}}} t = {\tau _{p,0}} - \frac{{({\bf{\hat r}}{{_p^{{\rm{rx}}}})^T}{\bf{v}}}}{{\rm{c}}} t = {\tau _{p,0}} - \frac{{{\omega _p}}}{{{f_c}}} t,
\!\label{Delay}
\end{equation}
where ${\tau _{p,0}}$ is the initial value, and ${k_{{\tau _p}}}$ denotes the changing rate of the path delay. Notice that, besides the time domain, Doppler also has an effect on the frequency domain.
The 3-D steering vector ${{\bf{a}}^{{\rm{tx}}}}({\theta _{p}},{\phi _{p}})$ is expressed as
\begin{equation}
\begin{array}{l}
{{\bf{a}}^{{\rm{tx}}}}({\theta _{p}},{\phi _{p}})= {\left[ {\begin{array}{*{20}{c}}{{(a_h^{{\rm{tx}}})}^0,}&{{(a_h^{{\rm{tx}}})}^1,}&{\cdots,}&{{(a_h^{{\rm{tx}}})}^{(N_h^{{\rm{BS}}} - 1)}} \end{array}} \right]^T}\\
\ \ \ \ \ \ \ \ \ \ \ \ \ \ \otimes {\left[ {\begin{array}{*{20}{c}}{{(a_v^{{\rm{tx}}})}^0,}&{{(a_v^{{\rm{tx}}})}^1,}&{\cdots,}&{{(a_v^{{\rm{tx}}})}^{(N_v^{{\rm{BS}}} - 1)}} \end{array}} \right]^T},
\end{array}
\!\label{SteeringVector}
\end{equation}
where ${\theta _p}$ and ${\phi _p}$ are  ${\theta _{p,{\rm{EOD}}}}$ and ${\phi _{p,{\rm{AOD}}}}$ for simplicity.
Furthermore, $a_h^{{\rm{tx}}}$ and $a_v^{{\rm{tx}}}$ denote the spatial signatures in the directions of $\cos ({\theta _{p}})\sin({\phi _{p}})$ and $\sin ({\theta _{p}})$, which are expressed as
\begin{equation}
a_h^{{\rm{tx}}} = {\rm{exp}}({\frac{{j2\pi d_h^{{\rm{tx}}}\cos ({\theta _p})\sin({\phi _p})}}{{{\lambda _0}}}}),
\!\label{ah}
\end{equation}
\begin{equation}
a_v^{{\rm{tx}}} = {\rm{exp}}({\frac{{j2\pi d_v^{{\rm{tx}}}\sin ({\theta _p})}}{{{\lambda _0}}}}).
\!\label{av}
\end{equation}
Let ${{\bf{h}}_u}(t,f)$ denote the channel between all BS antennas and the $u$-th UE antenna at time $t$ and frequency $f$:
\begin{equation}
\begin{array}{l}
\!\!\!\!\!\!\!\!\!{{\bf{h}}_u}(t,f) \!=\!\! \sum\limits_{p = 1}^P {} \!\!{\beta _p}{e^{\frac{{j2\pi ({\bf{\hat r}}{{_p^{{\rm{rx}}}}})^T{\bf{\bar d}}_u^{{\rm{rx}}}}}{{{\lambda _0}}}}}\!\!{e^{j2\pi {\omega _p}t}}{e^{-j2\pi f{\tau _p}(t)}}{{\bf{a}}^{{\rm{tx}}}}({\theta _{p}},{\phi _{p}}).\!\!
\!\!\!\end{array}
\!\!\!\label{Channel_UE_frequency}
\end{equation}
In the time domain, the channels at all subcarriers are expressed as 
\begin{equation}
\begin{array}{l}
\!\!\!\!\!\!\!\!\!{{\bf{H}}_u}(t) = [{\begin{array}{*{20}{c}}{{\bf{h}}_u^T(t,{f_1}),}&{{\bf{h}}_u^T(t,{f_2}),}&{\cdots,}&{{\bf{h}}_u^T(t,{f_{{N_f}}})} \end{array}}],
\!\!\!\end{array}
\!\!\!\label{Channel_UE_all subcarriers}
\end{equation}
where ${f_{{n_f}}}$ is the frequency of the ${n_f}$-th subcarrier:
\begin{equation}
{f_{{n_f}}} = {f_1} + ({n_f} - 1)\Delta f,{\rm{ }}{n_f} = 1,2, \cdots {N_f}.
\!\label{Frequency}
\end{equation}
Based on the angular-frequency-domain channel structure, ${{\bf{H}}_u}(t)$ is rewritten as
\begin{equation}
{{\bf{H}}_u}(t) = {\bf{A}}_u^{{\rm{tx}}}{{\bf{C}}_u}{{\bf{B}}_u},
\!\label{Channel_UE_angular-frequency}
\end{equation}
where ${\bf{A}}_u^{{\rm{tx}}}$ contains the 3-D steering vectors of all paths:
\begin{equation}
\begin{array}{l}
\!\!\!\!\!\!\!\!\!\!{\bf{A}}_u^{{\rm{tx}}} = [{\begin{array}{*{20}{c}}{{{\bf{a}}^{{\rm{tx}}}}({\theta _{1}},\!{\phi _{1}}),}&{\cdots,}&{{{\bf{a}}^{{\rm{tx}}}}({\theta _{P}},\!{\phi _{P}})} \end{array}}].
\!\!\!\!\!\!\!\end{array}
\!\!\!\!\!\!\label{Steering_vector_EOD-AOD}
\end{equation}
The frequency-domain matrix ${{\bf{B}}_u}$ is expressed as
\begin{equation}
{{\bf{B}}_u} = \left[ {\begin{array}{*{20}{c}}{{\bf{b}}({f_1}),}&{{\bf{b}}({f_2}),}&{\cdots,}&{{\bf{b}}({f_{{N_f}}})} \end{array}} \right],
\!\label{Delay_matrix}
\end{equation}
where ${\bf{b}}({f_{{n_f}}})$ denotes the delay response vector at the ${n_f}$-th subcarrier:
\begin{equation}
{\bf{b}}({f_{{n_f}}}) = {\left[ {\begin{array}{*{20}{c}}{e^{ - j2\pi {f_{{n_f}}} {\tau _1}(t)},}&{\cdots,}&{{e^{ - j2\pi {f_{{n_f}}} {\tau _P}(t)}}} \end{array}} \right]^T}.\!\!
\!\!\label{Delay_vector}
\end{equation}
The diagonal matrix ${{\bf{C}}_u}$ is expressed as
\begin{equation}
\begin{array}{l}
\!\!\!\!\!\!\!\!{{\bf{C}}_u}\\
\!\!\!\!\!\!\!=\! {\rm{diag}}\!\!\left\{\! {{\beta _1}{e^{\frac{{j2\pi ({\bf{\hat r}}{{_1^{{\rm{rx}}}}})^T{\bf{\bar d}}_u^{{\rm{rx}}}}}{{{\lambda _0}}}}}\!{e^{j2\pi {\omega _1}t}}, \!\cdots \!,{\beta _P}{e^{\frac{{j2\pi ({\bf{\hat r}}{{_P^{{\rm{rx}}}}})^T{\bf{\bar d}}_u^{{\rm{rx}}}}}{{{\lambda _0}}}}}\!{e^{j2\pi {\omega _P}t}}} \!\right\}\!\!.
\!\!\end{array}
\!\!\!\!\!\!\label{Diagonal_Cu}
\end{equation}
According to the angular-time-domain channel structure, ${{\bf{H}}_u}(t)$ is also rewritten as 
\begin{equation}
{{\bf{H}}_u}(t) = {\bf{A}}_u^{{\rm{tx}}}{{\bf{D}}_u}({{\bf{E}}_u} \odot {{\bf{F}}_u}),
\!\!\label{Channel_UE_angular-time}
\end{equation}
where ${{\bf{D}}_u}$ is a diagonal matrix:
\begin{equation}
{{\bf{D}}_u} = {\rm{diag}}\left\{ {\begin{array}{*{20}{c}}{{\beta _1}{e^{\frac{{j2\pi ({\bf{\hat r}}{{_1^{{\rm{rx}}}}})^T{\bf{\bar d}}_u^{{\rm{rx}}}}}{{{\lambda _0}}}}},}&{\cdots,}&{{\beta _P}{e^{\frac{{j2\pi ({\bf{\hat r}}{{_P^{{\rm{rx}}}}})^T{\bf{\bar d}}_u^{{\rm{rx}}}}}{{{\lambda _0}}}}}} \end{array}} \right\}.
\!\!\label{Diagonal_Du}
\end{equation}
The matrix ${{\bf{E}}_u}$ is composed of the initial delay response vectors at all subcarriers:
\begin{equation}
{{\bf{E}}_u} = \left[ {\begin{array}{*{20}{c}}{{\bf{e}}({f_1}),}&{{\bf{e}}({f_2}),}&{\cdots,}&{{\bf{e}}({f_{{N_f}}})} \end{array}} \right],
\!\!\label{Delay_initial_matrix}
\end{equation}
where ${\bf{e}}({f_{{n_f}}})$ is for the ${n_f}$-th subcarrier:
\begin{equation}
{\bf{e}}({f_{{n_f}}}) = {\left[ {\begin{array}{*{20}{c}}{e^{ - j2\pi {f_{{n_f}}} {\tau _{1,0}}},}&{\cdots,}&{e^{ - j2\pi {f_{{n_f}}} {\tau _{P,0}}}} \end{array}} \right]^T}.
\!\!\label{Delay_initial_vector}
\end{equation}
The time-domain matrix ${{\bf{F}}_u}$ is expressed as
\begin{equation}
{{\bf{F}}_u} = {\left[ {\begin{array}{*{20}{c}}{{\bf{f}}({f_1}),}&{{\bf{f}}({f_2}),}&{\cdots,}&{{\bf{f}}({f_{{N_f}}})} \end{array}} \right]^T},
\!\!\label{Time_matrix}
\end{equation}
where ${\bf{f}}({f_{{n_f}}}), 1 \leq n_f \leq N_f$, is defined as:
\begin{equation}
\!{\bf{f}}({f_{{n_f}}}) = {\left[ {\begin{array}{*{20}{c}}{e^{j2\pi ({\omega _1} - {f_{{n_f}}} {k_{{\tau _1}}})t},}&{\cdots,}&{e^{j2\pi ({\omega _P} - {f_{{n_f}}} {k_{{\tau _P}}})t}} \end{array}} \right]^T}\!\!\!.\!\!\!
\!\!\!\label{Time_vector}
\end{equation}
Notice that Eq. (\ref{Channel_UE_angular-time}) is different from the result in \cite{Yin20JSAC}, as we simultaneously consider the effect of Doppler on time domain and frequency domain.
The multipath delays, angles and Doppler will be estimated jointly by our proposed MDMP method in the next section.

\section{The Proposed MDMP Prediction Method}\label{sec:MDMP method}
\par In this section, we introduce our proposed MDMP super-resolution method for channel prediction. In general, our approach first estimates the EOD, AOD, delay, and Doppler information by exploiting the angular-frequency-domain and angular-time-domain structures of the channel, and then performs a path pairing procedure to determine the corresponding parameters of the paths. Finally the BS predicts the CSI by reconstructing the future channel with the estimated parameters.
To ease the exposition, we briefly introduce here the derivations in Sec. \ref{sec:angle-delay estimation} and Sec. \ref{sec:angle-Doppler estimation}. Since Sec. \ref{sec:angle-delay estimation} and Sec. \ref{sec:angle-Doppler estimation} have similar derivations, we take Sec. \ref{sec:angle-delay estimation} for example. The derivation mainly contains five steps, i.e., generating the 3-D MP matrix from the channels, transforming the 3-D complex MP matrix to a real matrix, determining the number of paths, estimating the path delays, and estimating the EODs and the AODs.

\subsection{The 3-D Angle-Delay Estimation}\label{sec:angle-delay estimation}
\par We first generate a 3-D MP matrix. For ease of exposition, we start with the one-dimensional (1-D) setting of the matrix pencil method, and then move on to the 2-D and 3-D cases.
More specifically, we first generate a 1-D MP matrix only containing the information of $\cos {\theta _p}\sin {\phi _p}$, by windowing the BS antenna panel in the horizontal direction. Then, we generate a 2-D MP matrix by adding the second window in the vertical direction of the BS antenna panel, which also contains the information of $\sin {\theta _p}$. Finally, based on the 2-D MP matrix, we generate a 3-D MP matrix by adding the third window in the frequency domain, which contains the information of path delay. More details will be shown below. 

\par Based on the angular-frequency-domain channel structure, a 1-D MP matrix is generated by sliding a horizontal window within the BS antenna panel. 
Such a MP matrix ${{\bf{G}}_{u,r}}(t,{n_f}) \in {{\mathbb{C}}^{L \times (N_h^{{\rm{BS}}} - L + 1)}}$ is defined as
\begin{equation}
{{\bf{G}}_{u,r}}(t,{n_f}) \!=\!\! \left[\!\! \begin{array}{l}
{h_{u,1,r}}(t,{n_f}), \cdots \!,{h_{u,(N_h^{{\rm{BS}}} \!-\! L \!+\! 1),r}}(t,{n_f})\\
 \ \ \ \ \ \ \vdots \ \ \ \ \ \ \ \ \ \ddots \ \ \ \ \ \ \ \ \vdots \\
{h_{u,L,r}}(t,{n_f}), \cdots ,\ \ {h_{u,N_h^{{\rm{BS}}},r}}(t,{n_f})
\end{array} \!\!\!\right]\!\!,
\!\!\!\!\label{1DMP}
\end{equation}
which is composed of the channels between the $u$-th UE antenna and all horizontal antennas in the $r$-th row of the BS antenna panel at the ${n_f}$-th subcarrier, and $L$ is the pencil size that satisfies $P < L < N_h^{{\rm{BS}}} -P+1$. 
Based on Eq. (\ref{Channel_UE_angular-frequency}), the 1-D MP matrix is rewritten as 
\begin{equation}
{{\bf{G}}_{u,r}}(t,{n_f}) = {{\bf{X}}_{\theta ,\phi ,1}}{{\bf{Z}}_{\theta}^{r}}{\bf{Y}}{{\bf{Z}}_{\tau}^{{n_f}-1}}{{\bf{X}}_{\theta ,\phi ,2}},
\!\label{1D-MP-Gu}
\end{equation}
where the $p$-th column of ${{\bf{X}}_{\theta ,\phi ,1}} \in {{\mathbb{C}}^{L \times P}}$ and the $p$-th row of ${{\bf{X}}_{\theta ,\phi ,2}} \in {{\mathbb{C}}^{P \times (N_h^{{\rm{BS}}} - L + 1)}}$, are the steering vectors of two subsets of antenna array with size $1 \times L$ and $1 \times (N_h^{{\rm{BS}}} - L + 1)$ in the BS antenna panel, respectively. The two matrices ${{\bf{X}}_{\theta ,\phi ,1}}$ and ${{\bf{X}}_{\theta ,\phi ,2}}$ are defined as
\begin{equation}
{{\bf{X}}_{\theta ,\phi ,1}} = \left[ {\begin{array}{*{20}{c}}{{\bf{a}}_{h,L-1}^{{\rm{tx}}}({\theta _{1}},{\phi _{1}}),}&{\cdots,}&{{\bf{a}}_{h,L-1}^{{\rm{tx}}}({\theta _{P}}\!,\!{\phi _{P}})} \end{array}} \right],
\!\!\!\!\label{Matrix_X1}
\end{equation}
\begin{equation}
\!\!\!\!\!\!{{\bf{X}}_{\theta ,\phi ,2}} = {\left[ {\begin{array}{*{20}{c}}{{\bf{a}}_{h,N_h^{{\rm{BS}}} - L}^{{\rm{tx}}}({\theta _{1}},{\phi _{1}}),}&{\cdots,}&{{\bf{a}}_{h,N_h^{{\rm{BS}}} - L}^{{\rm{tx}}}({\theta _{P}},{\phi _{P}})} \end{array}} \!\!\right]^T}\!\!,\!\!\!\!\!\!
\!\!\!\!\label{Matrix_X2}
\end{equation}
where ${\bf{a}}_{h,{l_1}}^{{\rm{tx}}}({\theta _p},{\phi _p}) = {\left[ {{{(a_h^{{\rm{tx}}})}^0},{{(a_h^{{\rm{tx}}})}^1}, \cdots ,{{(a_h^{{\rm{tx}}})}^{{l_1}}}} \right]^T}$, ${l_1} \in \{ L-1,N_h^{{\rm{BS}}} - L\}$. 
The matrix ${{\bf{Z}}_\tau}$ contains the phase differences between two neighboring subcarriers for the paths:
\begin{equation}
{{\bf{Z}}_\tau} = {\rm{diag}}\left\{ {\begin{array}{*{20}{c}}{e^{ - j2\pi \Delta f{\tau _1}(t)},}&{\cdots,}&{e^{ - j2\pi \Delta f{\tau _P}(t)}} \end{array}} \right\}.
\!\label{Diagonal_matrix_delay}
\end{equation}
Likewise, ${{\bf{Z}}_\theta}$ denotes the matrix of the phase differences between two neighboring vertical BS antennas for the paths: 
\begin{equation}
{{\bf{Z}}_\theta} = {\rm{diag}}\left\{ {\begin{array}{*{20}{c}}{e^{\frac{{j2\pi d_v^{{\rm{tx}}}\sin ({\theta _{1}})}}{{{\lambda _0}}}},}&{\cdots,}&{e^{\frac{{j2\pi d_v^{{\rm{tx}}}\sin({\theta _{P}})}}{{{\lambda _0}}}}} \end{array}} \right\}.
\!\label{Diagonal_matrix_angle}
\end{equation}
The matrix ${\bf{Y}}$ is defined as
\begin{equation}
{\bf{Y}} = {\rm{diag}}\left\{ {{{\bf{C}}_u} {\bf{b}}({f_1})} \right\},
\!\label{S_frequency}
\end{equation}
where ${\bf{b}}({f_1})$ and ${{\bf{C}}_u}$ are defined in Eq. (\ref{Delay_vector}) and Eq. (\ref{Diagonal_Cu}).
\par By adding a new window along the vertical direction of the BS antenna panel, a 2-D MP matrix ${{\bf{G}}_u}(t,{n_f}) \in {{\mathbb{C}}^{LR \times (N_h^{{\rm{BS}}} - L + 1)(N_v^{{\rm{BS}}} - R + 1)}}$ is introduced
\begin{equation}
{{\bf{G}}_u}(t,{n_f}) \!=\!\!\! \left[\!\! \begin{array}{l}
{{\bf{G}}_{u,1}}(t,{n_f}), \cdots\! ,\ {{\bf{G}}_{u,(N_v^{{\rm{BS}}}\! - \!R + 1)}}(t,{n_f})\\
 \ \ \ \ \vdots \ \ \ \ \ \ \ \ \ \ \ddots \ \ \ \ \ \ \ \ \ \ \ \vdots \\
{{\bf{G}}_{u,R}}(t,{n_f}), \cdots ,\ \ \ \ {{\bf{G}}_{u,N_v^{{\rm{BS}}}}}(t,{n_f})
\end{array} \!\right]\!,
\!\!\!\label{2D-MP}
\end{equation}
which is composed of the channels between the $u$-th UE antenna and all BS antennas at the ${n_f}$-th subcarrier, and contains the angular information of the paths. The corresponding pencil size is denoted by $R$ that satisfies $P < R < N_v^{{\rm{BS}}}-P+1$. 
Substituting Eq. (\ref{1D-MP-Gu}) into Eq. (\ref{2D-MP}), ${{\bf{G}}_u}(t,{n_f})$ is rewritten as
\begin{equation}
{{\bf{G}}_u}(t,{n_f}) = {{\check{\bf{E}}}_{1}}{\bf{Y}}{{\bf{Z}}_{\tau}^{{n_f}-1}}{{\check{\bf{F}}}_{1}},
\!\label{2D-MP-Gu}
\end{equation}
where the $p$-th column of ${{\check{\bf{E}}}_{1}} \in {{\mathbb{C}}^{LR \times P}}$ and the $p$-th row of ${{\check{\bf{F}}}_{1}} \in {{\mathbb{C}}^{P \times (N_h^{{\rm{BS}}} - L + 1)(N_v^{{\rm{BS}}} - R + 1)}}$ are the 3-D steering vectors of two subsets of antenna array with size $R \times L$ and $(N_v^{{\rm{BS}}} - R + 1) \times (N_h^{{\rm{BS}}} - L + 1)$ on the BS antenna panel, respectively. 
They are defined as
\begin{equation}
{{\check{\bf{E}}}_{1}} = {\left[ {\begin{array}{*{20}{c}}{({{\bf{X}}_{\theta ,\phi ,1}}{{\bf{Z}}_{\theta}^{0}})^T,}&{\cdots,}&{({{\bf{X}}_{\theta ,\phi ,1}}{{\bf{Z}}_{\theta}^{R - 1}})^T} \end{array}} \right]^T},
\!\label{Eleft_1}
\end{equation}
\begin{equation}
{{\check{\bf{F}}}_{1}} = \left[ {\begin{array}{*{20}{c}}{{{\bf{Z}}_{\theta}^0}{{\bf{X}}_{\theta ,\phi ,2}},}&{\cdots,}&{{{\bf{Z}}_{\theta}^{N_v^{{\rm{BS}}} - R}}{{\bf{X}}_{\theta ,\phi ,2}}} \end{array}} \right].
\!\label{Eright_1}
\end{equation}

Adding the third window in the frequency domain, a 3-D MP matrix ${{\bf{G}}_u}(t) \in {{\mathbb{C}}^{LRK \times (N_h^{{\rm{BS}}} - L + 1)(N_v^{{\rm{BS}}} - R + 1)({N_f} - K + 1)}}$ is generated from ${{\bf{G}}_u}(t,{n_f})$ and expressed as
\begin{equation}
{{\bf{G}}_u}(t) \!=\!\! \left[ \!\!\begin{array}{l}
{{\bf{G}}_u}(t,1), \ \cdots ,{{\bf{G}}_u}(t,{N_f} \!-\! K\! +\! 1)\\
 \ \ \ \ \ \vdots \ \ \ \ \ \ \ddots \ \ \ \ \ \ \ \ \ \ \ \vdots \\
{{\bf{G}}_u}(t,K), \cdots ,\ \ \ \ {{\bf{G}}_u}(t,{N_f})
\end{array} \right]\!\!,
\!\!\!\!\!\label{3D-MP}
\end{equation}
where $K$ is the pencil size that satisfies $P < K < {N_f}-P+1$. 
The matrix ${{\bf{G}}_u}(t)$ is composed of the channels between the $u$-th UE antenna and all BS antennas at all subcarriers, and also contains the angular and delay information of the paths.
For notational simplicity, we define $\mu_1=LRK$ and $\mu_2=(N_h^{{\rm{BS}}} - L + 1)(N_v^{{\rm{BS}}} - R + 1)({N_f} - K + 1)$.
According to the channel angular-frequency-domain structure in Eq. (\ref{Channel_UE_angular-frequency}), ${{\bf{G}}_u}(t)$ is rewritten as
\begin{equation}
{{\bf{G}}_u}(t) = {{\check{\bf{E}}}_{2}}{\bf{Y}}{{\check{\bf{F}}}_{2}},
\!\label{3D-MP-Gu}
\end{equation}
where ${{\check{\bf{E}}}_{2}} \in {{\mathbb{C}}^{\mu_1 \times P}}$ and ${{\check{\bf{F}}}_{2}} \in {{\mathbb{C}}^{P \times \mu_2}}$ are defined as
\begin{equation}
{{\check{\bf{E}}}_{2}} = {\left[ {\begin{array}{*{20}{c}}{({{\check{\bf{E}}}_{1}}{{\bf{Z}}_{\tau}^{0}})^T,}&{\cdots,}&{({{\check{\bf{E}}}_{1}}{{\bf{Z}}_{\tau}^{K - 1}})^T} \end{array}} \right]^T},
\!\label{Eleft2}
\end{equation}
\begin{equation}
{{\check{\bf{F}}}_{2}} = \left[ {\begin{array}{*{20}{c}}{{{\bf{Z}}_{\tau}^{0}}{{\check{\bf{F}}}_{1}},}&{\cdots,}&{{{\bf{Z}}_{\tau}^{{N_f} - K}}{{\check{\bf{F}}}_{1}}} \end{array}} \right].
\!\label{Eright2}
\end{equation}
The physical meaning of ${{\check{\bf{E}}}_{1}}{{\bf{Z}}_{\tau}^{{k_1}}},{k_1} = 0, \cdots ,K - 1$ is the EOD-AOD steering vectors of a subset of antenna array with size $R \times L$ on the BS antenna panel at the $({k_1} + 1)$-th subcarrier of the paths. 
Likewise, ${{\bf{Z}}_{\tau}^{{k_2}}}{{\check{\bf{F}}}_{1}},{k_2} = 0, \cdots ,{N_f} - K$ is the EOD-AOD steering vectors of a subset of antenna array with size $(N_v^{{\rm{BS}}} - R + 1) \times (N_h^{{\rm{BS}}} - L + 1)$ on the BS antenna panel at the $({k_2} + 1)$-th subcarrier for the paths.
The 3-D MP matrix ${{\bf{G}}_u}(t)$ contains the information of multipaths, if its rank satisfies $r({{\bf{G}}_u}(t)) = P$.
In this case, the three pencil sizes yield
\begin{equation}
\left\{ \begin{array}{l}
\ \ \ \ \ \ \ \ \ \ \ \ \ \ \ \ \ \ LR(K - 1) \ge P\\
\ \ \ \ \ \ \ \ \ \ \ \ \ \ \ \ \ \ LK(R - 1) \ge P\\
\ \ \ \ \ \ \ \ \ \ \ \ \ \ \ \ \ \ RK(L - 1) \ge P\\
(N_h^{{\rm{BS}}} - L + 1)(N_v^{{\rm{BS}}} - R + 1)({N_f} - K + 1) \ge P
\end{array} \!\!\!\!\right..
\!\!\!\label{NC_3D-MP}
\end{equation}

\par Then, we transform the 3-D MP matrix ${{\bf{G}}_u}(t)$ to a real matrix ${{\bf{G}}_{\rm{re}}}(t) \in {{\mathbb{C}}^{{\mu_1} \times 2{\mu_2}}}$ by the unitary matrix pencil (UMP) method \cite{91KehTSP}, which reduces the computational complexity without losing accuracy. The real matrix ${{\bf{G}}_{\rm{re}}}(t)$ is transformed by 
\begin{equation}
{{\bf{G}}_{\rm{re}}}(t) = {\bf{Q}}_{\mu_1}^H{{\bf{G}}_{\rm{ex}}}(t){{\bf{Q}}_{2{\mu_2}}},
\!\!\!\label{3D-Gre}
\end{equation}
where
\begin{equation}
{{\bf{G}}_{\rm{ex}}}(t) = \left[ {{{\bf{G}}_u}(t):{\Upsilon_{\mu_1}} {{\bf{G}}_u^*(t)}{\Upsilon_{{\mu_2}}} } \right].
\!\!\!\!\!\label{3D-Gex}
\end{equation}
We define ${{\bf{Q}}_{\mu_1 }}$ and ${{\bf{Q}}_{2{\mu_2 }}}$ as two unitary matrices depending on the size of $\mu_1$ and $2{\mu_2}$.
Taking ${{\bf{Q}}_{\mu_1 }}$ for example, if $\mu_1$ is even,
\begin{equation}
{{\bf{Q}}_{\mu_1} } = \frac{1}{{\sqrt 2 }}\left[ \begin{array}{l}
{{\bf{I}}_{\frac{{\mu_1} }{2}}}\ \ \ \ \ j{{\bf{I}}_{\frac{{\mu_1} }{2}}}\\
{\Upsilon _{\frac{{\mu_1} }{2}}}\  -j{\Upsilon _{\frac{{\mu_1} }{2}}}
\end{array} \right],
\!\!\!\!\!\label{Q:even}
\end{equation}
and if it is odd,
\begin{equation}
{{\bf{Q}}_{\mu_1} } = \frac{1}{{\sqrt 2 }}\left[ \begin{array}{l}
\ {{\bf{I}}_{\frac{{{\mu_1}  - 1}}{2}}}\ \ \ \ {{\bf{0}}_{(\frac{{{\mu_1}  - 1}}{2}) \times 1}}\ \ \ j{{\bf{I}}_{\frac{{{\mu_1}  - 1}}{2}}}\\
{{\bf{0}}_{1 \times (\frac{{{\mu_1}  - 1}}{2})}}\ \ {\sqrt 2} \ \ \ \ \ \ \ \ {{\bf{0}}_{1 \times (\frac{{{\mu_1}  - 1}}{2})}}\\
\ {\Upsilon _{\frac{{{\mu_1}  - 1}}{2}}}\ \ \ {{\bf{0}}_{(\frac{{{\mu_1}  - 1}}{2}) \times 1}}\ -j{\Upsilon _{\frac{{{\mu_1}  - 1}}{2}}}
\end{array} \right].
\!\!\!\!\!\label{Q:odd}
\end{equation}

Next we will determine the number of paths by the singular value decomposition (SVD) of ${{\bf{G}}_{\rm{re}}}(t)$: ${{\bf{G}}_{\rm{re}}}(t) = {{\bf{U}}_\tau }{{\bf{S}}_\tau }{\bf{V}}_\tau ^H$, where ${{\bf{U}}_\tau } = [{{\bf{u}}_1}, \cdots ,{{\bf{u}}_{\mu_1}}]$, ${{\bf{S}}_\tau } = {\rm{diag}}\{ {s_1}, \cdots ,{s_{{M_\tau }}}\} $, and ${M_\tau } = \min (\mu_1,2\mu_2)$. 
Without loss of generality, we assume the diagonal elements of ${{\bf{S}}_\tau }$ are in non-increasing order.
To determine the number of non-negligible paths, we define a set
\begin{equation}
\begin{array}{l}
{\mathfrak{M}} = \{ {m_\tau }|\lvert {{s_{{m_\tau }}}} \rvert \ge {\gamma _1},1 \le {m_\tau } \le {M_\tau }\} \\
\ \ \ \ = \{ {\begin{array}{*{20}{c}}{m_1,}&{\cdots,}&{m_P} \end{array}} \},
\end{array}
\!\!\!\!\!\label{Set:M}
\end{equation}
where the positive threshold ${\gamma _1}$ is close to zero. The number of non-negligible paths $P = {\rm{card}}({\mathfrak{M}})$.

Hereinafter, we estimate the path delay, which contains two substeps, i.e., generating a real matrix, and the EVD.
We first define a matrix ${{\bf{U}}_s} = [{{\bf{u}}_{{m_1}}}, \cdots ,{{\bf{u}}_{{m_P}}}]$, which is generated by selecting the $P$ columns from ${{\bf{U}}_\tau }$.
A real matrix related to the path delays is defined as
\begin{equation}
{{\bf{\Psi }}_\tau } = {\left[ {{\mathop{\rm Re}\nolimits} ({\bf{Q}}_{\mu_3 }^H{{\bf{J}}_1}{\bf{Q}}_{\mu_1 }^{}){{\bf{U}}_s}} \right]^\dag }{\mathop{\rm Im}\nolimits}({\bf{Q}}_{\mu_3}^H{{\bf{J}}_1}{\bf{Q}}_{\mu_1}^{}){{\bf{U}}_s},
\!\label{RealMatrix_delay}
\end{equation}
where ${\mu_3} = KRL - RL$ and ${{\bf{J}}_1} = \left[ {{{\bf{I}}_{\mu_3}}:{{\bf{0}}_{{\mu_3} \times RL}}} \right]$. The unitary matrix ${{\bf{Q}}_{\mu_3}}$ only depends on the size ${\mu_3}$, and is expressed as Eq. (\ref{Q:even}) or Eq. (\ref{Q:odd}). We also introduce a matrix 
${{\bf{\hat Z}}_\tau}$, which contains the delay information:
\begin{equation}
{{\bf{\hat Z}}_\tau} = {\rm{diag}}\left\{\!\! {\begin{array}{*{20}{c}}{\tan \left( {\pi \Delta f{\tau _1}(t)} \right),}\!\!&\!\!{\cdots,}\!\!&\!\!{\tan \left( {\pi \Delta f{\tau _{{P}}}(t)} \right)} \end{array}} \!\!\right\}.
\!\label{Tan_Delay}
\end{equation}
The two matrices ${{\bf{\Psi }}_\tau }$ and ${{\bf{\hat Z}}_\tau}$ are similarity matrices that share the same eigenvalues \cite{91KehTSP}. Then, the path delay will be estimated by the EVD of ${{\bf{\Psi }}_\tau }$: ${{\bf{\Psi }}_\tau } = {{\bf{W}}_\tau}{{\bf{\hat Z}}_\tau}{{\bf{W}}_\tau ^{ - 1}}$, where ${{\bf{W}}_\tau}$ is an eigenvectors collection of ${{\bf{\Psi }}_\tau }$. 
Since ${{\bf{\Psi }}_\tau }$ and ${{\bf{\hat Z}}_\tau}$ are similarity matrices,  ${{\bf{\hat Z}}_\tau}$ can be calculated by ${{\bf{\hat Z}}_\tau} = {\bf{W}}_\tau ^{ - 1}{{\bf{\Psi }}_\tau }{{\bf{W}}_\tau }$. 
According to the angular-frequency-domain channel structure in Eq. (\ref{Channel_UE_angular-frequency}), the estimated diagonal elements in ${{\bf{\hat Z}}_\tau}$ may reflect the multipath delay phase differences between two neighboring subcarriers. Thus, the $p$-th path delay ${\hat \tau _p}(t)$ is estimated as
\begin{equation}
{\hat \tau _p}(t) = \frac{{{{\tan }^{ - 1}}\left( {{{\hat z}_{\tau,p}}} \right)}}{{\pi \Delta f}},
\!\label{Estimation_delay}
\end{equation}
where ${\hat z_{\tau,p}}$ denotes the $p$-th estimated diagonal element of ${{\bf{\hat Z}}_\tau}$. 
Moreover, the delay changing rate ${\hat k_{{\tau _p}}}$ is estimated with two different time samples: 
\begin{equation}
{\hat k_{{\tau _p}}} = \frac{{{{\hat \tau }_p}({t_2}) - {{\hat \tau }_p}({t_1})}}{{{t_2} - {t_1}}}.
\!\label{Estimation_kdelay}
\end{equation}

\par Up to now, the multipath delays have been estimated. Next, we estimate the multipath EODs and AODs mainly by two substeps, i.e., introducing some shuffling matrices, generating two real matrices related to the EODs and the AODs. Since ${{\bf{G}}_u}(t)$ contains the angular information of the paths, we only need to permute ${{\bf{G}}_u}(t)$ to generate two new 3-D MP matrices ${\bf{G}}_h(t)$ and ${\bf{G}}_v(t)$.
First, four shuffling matrices ${{\bf{S}}_{{\rm{left}},h}}$, ${{\bf{S}}_{{\rm{right}},h}}$, ${{\bf{S}}_{{\rm{left}},v}}$ and ${{\bf{S}}_{{\rm{right}},v}}$ are introduced as 
\begin{equation}
\begin{array}{l}
\!\!\!\!\!\!\!\!{{\bf{S}}_{{\rm{left}},h}}\! =\! [{\bf{s}}(1),\! \cdots\! ,{\bf{s}}(1\! +\! \mu_1\! -\! L),{\bf{s}}(2), \cdots ,\\
\ \ {\bf{s}}(2 \!+\! \mu_1\! -\! L),\! \cdots\! ,{\bf{s}}(L),\! \cdots\! {\bf{s}}(\mu_1){]^T},
\end{array}
\!\!\!\!\label{Shuffling_matrix_h}
\end{equation}
\begin{equation}
\begin{array}{l}
\!\!\!\!\!{{\bf{S}}_{{\rm{right}},h}}\! =\! [{\bf{s}}(1),\! \cdots\! ,{\bf{s}}(\mu_2\! -\! (N_h^{{\rm{BS}}} - L)),{\bf{s}}(2), \cdots ,\\
\ \ \ \ \ \ \ {\bf{s}}(\mu_2\! -\! (N_h^{{\rm{BS}}} \!-\! L \!-\! 1)),\! \cdots\! ,{\bf{s}}(N_h^{{\rm{BS}}} \!-\! L \!+\! 1),\! \cdots\! {\bf{s}}(\mu_2){]^T},
\end{array}
\!\!\!\!\label{Shuffling_matrix_hright}
\end{equation}
\begin{equation}
\begin{array}{l}
\!\!\!\!\!\!{{\bf{S}}_{{\rm{left}},v}} \!=\!\! [{\bf{s}}(1),\! \cdots\! ,\!{\bf{s}}(L),\! \cdots\! ,{\bf{s}}(1\!\! +\!\! \mu_1\!\! -\!\! LR), \!\cdots\! ,{\bf{s}}(L\!\! +\!\! \mu_1\!\! -\!\! LR),\! \cdots\! ,\\
\ \ \ \ \ {\bf{s}}(1\!\! +\!\! (R\!\! -\!\! 1)L),\! \cdots\! ,{\bf{s}}(L\!\! +\!\! (R\!\! -\!\! 1)L),\! \cdots\! ,\\
\ \ \ \ \ {\bf{s}}(1\!\! +\!\! (R\!\! -\!\! 1)L\!\! +\!\! (K\!\! -\!\! 1)LR),\! \cdots\! ,{\bf{s}}(\mu_1){]^T},
\end{array}
\!\!\!\!\label{Shuffling_matrix_v}
\end{equation}
\begin{equation}
\begin{array}{l}
\!\!\!\!\!\!{{\bf{S}}_{{\rm{right}},v}} \!=\!\! [{\bf{s}}(1),\! \cdots\! ,\!{\bf{s}}(N_h^{{\rm{BS}}} \!\!-\!\! L\!\! +\!\! 1),\! \cdots\! ,\\
\ \ \ \ \ {\bf{s}}((N_v^{{\rm{BS}}}\!\! -\!\! R\!\! +\!\! 1)(N_h^{{\rm{BS}}} \!\!-\!\! L \!\!+\!\! 1) \!\!-\!\! (N_h^{{\rm{BS}}} \!\!-\!\! L)), \!\cdots\!,\\ 
\ \ \ \ \ {\bf{s}}((N_v^{{\rm{BS}}} - R + 1)(N_h^{{\rm{BS}}} - L + 1)),\! \cdots\! ,\\
\ \ \ \ \ {\bf{s}}(1\!\! -\!\! (N_h^{{\rm{BS}}} \!\!-\!\! L \!\!+\!\! 1) \!\!+\!\! \mu_2),\! \cdots\! ,{\bf{s}}(\mu_2){]^T},
\end{array}
\!\!\!\!\label{Shuffling_matrix_vright}
\end{equation}
where ${\bf{s}}(m), m = 1, \cdots ,LRK$, is an $LRK \times 1$ unit vector with the $m$-th element being one. 
Permuted from ${{\bf{G}}_u}(t)$, the two matrices ${\bf{G}}_h(t)$ and ${\bf{G}}_v(t)$ are expressed as ${\bf{G}}_h(t) = {{\bf{S}}_{{\rm{left}},h}}{{\bf{G}}_u}(t){{{\bf{S}}_{{\rm{right}},h}^H}}$ and ${\bf{G}}_v(t) = {{\bf{S}}_{{\rm{left}},v}}{{\bf{G}}_u}(t){{{\bf{S}}_{{\rm{right}},v}^H}}$, respectively.
Notice that in practice, the two 3-D MP matrices ${\bf{G}}_h(t)$ and ${\bf{G}}_v(t)$, as well as the two shuffling matrices ${{\bf{S}}_{{\rm{right}},h}}$ and ${{\bf{S}}_{{\rm{right}},v}}$, are not needed. In fact, the EODs and the AODs might be estimated by only introducing ${{\bf{S}}_{{\rm{left}},h}}$ and ${{\bf{S}}_{{\rm{left}},v}}$.

Then, we will focus on the generation of the two real matrices related to the EODs and the AODs. By following a similar procedure to generate ${{\bf{\Psi }}_\tau }$ from ${{\bf{G}}_u}(t)$, i.e., from Eq. (\ref{3D-Gre}) to Eq. (\ref{RealMatrix_delay}), we may obtain two real matrices ${{\bf{\Psi }}_{\phi }}$ and ${{\bf{\Psi }}_\theta }$,  which are generated from ${\bf{G}}_h(t)$ and ${\bf{G}}_v(t)$ respectively.
The matrices ${{\bf{\Psi }}_{\phi }}$ and ${{\bf{\Psi }}_\theta }$ are defined by
\begin{equation}
\begin{array}{l}
{{\bf{\Psi }}_{\phi }} = {\left( {{\mathop{\rm Re}\nolimits} ({\bf{Q}}_{\mu_4}^H{{\bf{J}}_2}{\bf{Q}}_{\mu_1}^{}){\bf{Q}}_{\mu_1}^{}{{\bf{S}}_{{\rm{left}},h}}{\bf{Q}}_{\mu_1}^H{{\bf{U}}_s}} \right)^\dag }\\
\ \ \ \ \ \ \ \ \ {\mathop{\rm Im}\nolimits}({\bf{Q}}_{\mu_4}^H{{\bf{J}}_2}{\bf{Q}}_{\mu_1}^{}){\bf{Q}}_{\mu_1}^{}{{\bf{S}}_{{\rm{left}},h}}{\bf{Q}}_{\mu_1}^H{{\bf{U}}_s},
\end{array}
\!\!\!\!\label{RealMatrix_EOD_AOD}
\end{equation}
\begin{equation}
\begin{array}{l}
{{\bf{\Psi }}_\theta } = {\left( {{\mathop{\rm Re}\nolimits} ({\bf{Q}}_{\mu_5}^H{{\bf{J}}_3}{\bf{Q}}_{\mu_1}^{}){\bf{Q}}_{\mu_1}^{}{{\bf{S}}_{{\rm{left}},v}}{\bf{Q}}_{\mu_1}^H{{\bf{U}}_s}} \right)^\dag }\\
\ \ \ \ \ \ \ {\mathop{\rm Im}\nolimits}({\bf{Q}}_{\mu_5}^H{{\bf{J}}_3}{\bf{Q}}_{\mu_1}^{}){\bf{Q}}_{\mu_1}^{}{{\bf{S}}_{{\rm{left}},v}}{\bf{Q}}_{\mu_1}^H{{\bf{U}}_s},
\end{array}
\!\!\!\!\label{RealMatrix_EOD}
\end{equation}
\noindent where ${\mu_4}=KRL - KR$, ${\mu_5}=KRL - KL$, ${{\bf{J}}_2} = \left[ {{{\bf{I}}_{\mu_4}}:{{\bf{0}}_{{\mu_4} \times KR}}} \right]$, and ${{\bf{J}}_3} = \left[ {{{\bf{I}}_{\mu_5}}:{{\bf{0}}_{{\mu_5} \times KL}}} \right]$. The two unitary matrices ${{\bf{Q}}_{\mu_4}}$ and ${{\bf{Q}}_{\mu_5}}$ are defined in Eq. (\ref{Q:even}) or Eq. (\ref{Q:odd}) according to their sizes ${\mu_4}$ and ${\mu_5}$.

Define two real diagonal matrices ${{\bf{\hat Z}}_{\phi }}$ and ${{\bf{\hat Z}}_\theta}$, which contain the angular information
\begin{equation}
\begin{array}{l}
{{{\bf{\hat Z}}}_{\phi }} = {\rm{diag}}\{ \tan (\frac{{\pi d_h^{{\rm{tx}}}\cos ({\theta _1})\sin({\phi _1})}}{{{\lambda _0}}}), \\
\ \ \ \ \ \ \ \ \cdots ,\tan (\frac{{\pi d_h^{{\rm{tx}}}\cos ({\theta _{{P}}})\sin({\phi _{{P}}})}}{{{\lambda _0}}})\},
\end{array}
\!\!\!\!\label{Tan_EOD_AOD}
\end{equation}
\begin{equation}
\begin{array}{l}
\!\!\!\!\!\!\!\!{{\bf{\hat Z}}_\theta } = {\rm{diag}}\{ \!{\begin{array}{*{20}{c}}{\tan (\frac{{\pi d_v^{{\rm{tx}}}\sin ({\theta _1})}}{{{\lambda _0}}}),}\!&\!{\cdots,}\!&\!{\tan (\frac{{\pi d_v^{{\rm{tx}}}\sin ({\theta _{{P}}})}}{{{\lambda _0}}})} \end{array}}\!\}.
\end{array}
\!\!\!\!\label{Tan_EOD}
\end{equation}
The diagonal elements in ${{\bf{\hat Z}}_{\phi }}$ reflect the phase differences between the two neighboring antenna elements in the horizontal direction of the BS antenna panel of the paths. 
Likewise, ${{\bf{\hat Z}}_\theta }$ reflects the phase differences between the two neighboring antenna elements, in the vertical direction of the BS antenna panel for the paths. 

Finally, we will estimate the EODs and the AODs as follows: The two real matrices ${{\bf{\Psi }}_{\phi }}$ and ${{\bf{\Psi }}_\theta }$ are similar to ${{\bf{\hat Z}}_{\phi }}$ and ${{\bf{\hat Z}}_\theta}$, respectively, i.e., ${{\bf{\Psi }}_{\phi }}$ and ${{\bf{\hat Z}}_{\phi }}$ share the same eigenvalues, 
${{\bf{\Psi }}_\theta }$ and ${{\bf{\hat Z}}_\theta}$ also share the same eigenvalues. Thus, we may obtain ${{\bf{\hat Z}}_{\phi }}$ and ${{\bf{\hat Z}}_\theta}$ via the EVDs of ${{\bf{\Psi }}_{\phi }}$ and  ${{\bf{\Psi }}_\theta }$. Note that ${{\bf{\Psi }}_{\phi }}$, ${{\bf{\Psi }}_\theta }$, and ${{\bf{\Psi }}_\tau }$ share the same eigenvectors, since ${\bf{G}}_h(t)$ and ${\bf{G}}_v(t)$ are simply permuted versions of ${{\bf{G}}_u}(t)$. 
More specifically, we may denote the EVDs by ${{\bf{\Psi }}_{\phi }} = {{\bf{W}}_\tau}{{\bf{\hat Z}}_{\phi }}{{\bf{W}}_\tau ^{ - 1}}$ and 
${{\bf{\Psi }}_\theta } = {{\bf{W}}_\tau }{{\bf{\hat Z}}_\theta}{{\bf{W}}_\tau ^{ - 1}}$, where ${\bf{W}}_\tau$ contains the eigenvectors of ${{\bf{\Psi }}_\tau }$. 
As a result, ${{\bf{\hat Z}}_{\phi }}$ and ${{\bf{\hat Z}}_\theta}$ are calculated as ${{\bf{\hat Z}}_{\phi }} = {\bf{W}}_\tau ^{ - 1}{{\bf{\Psi }}_{\phi }}{{\bf{W}}_\tau }$ and ${{\bf{\hat Z}}_\theta} = {\bf{W}}_\tau ^{ - 1}{{\bf{\Psi }}_\theta }{{\bf{W}}_\tau }$.
The EOD and the AOD are estimated by
\begin{equation}
{\hat \theta _p} = {\sin ^{ - 1}}\left( {\frac{{{\tan ^{-1}}({{\hat z}_{\theta,p}}) {\lambda _0}}}{{\pi d_v^{{\rm{tx}}}}}} \right),
\!\!\!\!\label{Estimation_EOD}
\end{equation}
\begin{equation}
{\hat \phi _p} = {\sin ^{ - 1}}\left( {\frac{{{\tan ^{-1}}({{\hat z}_{\phi,p}}) {\lambda _0}}}{{\pi d_h^{{\rm{tx}}} \cos \left( {{{\sin }^{ - 1}}\left( {\frac{{{\tan ^{-1}}({{\hat z}_{\theta,p}}) {\lambda _0}}}{{\pi d_v^{{\rm{tx}}}}}} \right)} \right)}}} \right),
\!\!\!\!\label{Estimation_AOD}
\end{equation}
where ${\hat z_{\phi,p}}$ and ${\hat z_{\theta,p}}$ denote the $p$-th estimated diagonal elements of ${{\bf{\hat Z}}_{\phi }}$ and ${{\bf{\hat Z}}_\theta}$, respectively. 
So far, the EODs, AODs and delays have been estimated by our proposed 3-D angle-delay estimation scheme, which is summarized in Algorithm \ref{algorithm_1}.
Notice that ${{\bf{H}}_u}(t)$ as an input of Algorithm \ref{algorithm_1} is a historical sample obtained from the UL sounding reference signal (SRS).
With Algorithm \ref{algorithm_1}, we will estimate the channel parameters, which are used for reconstructing and predicting the future DL channel.

We now analyze the computational complexity of this algorithm in terms of time and memory. The time complexity is dominated by step 3 - step 8. The complexity of step 3 is $\mathcal{O}(2\mu _1^2{\mu _2}) + \mathcal{O}(4{\mu _1}\mu _2^2)$. The main complexity of step 4 is the SVD, i.e., $\mathcal{O}(4{\mu _1}\mu _2^2)$. Step 5 contains the matrix inversion and multiplications, and has a complexity order of $\mathcal{O}(\mu _3^2{\mu _1}) + \mathcal{O}({\mu _3}\mu _1^2) + \mathcal{O}({P^{2.37}})$. In step 6, the computation complexity is dominated by the EVD, i.e., $\mathcal{O}({P^3})$. The complexity of step 7 is $\mathcal{O}(\mu _4^2{\mu _1}) + \mathcal{O}({\mu _4}\mu _1^2) + \mathcal{O}(\mu _1^3) + \mathcal{O}({P^{2.37}})$, which mainly contains the matrix inversion and multiplication. Similar to step 6, step 8 has a complexity order of $\mathcal{O}({P^3})$. According to the conditions of pencil sizes in Eq. (\ref{NC_3D-MP}), we may obtain the global time complexity of Algorithm \ref{algorithm_1} as $\mathcal{O}(2{\mu _1^2}{\mu _2}) + \mathcal{O}(4{\mu _1}{\mu _2^2}) + \mathcal{O}({\mu _1^3})$. 
Then, the memory complexity of Algorithm \ref{algorithm_1} is dominated by step 3, i.e., the two matrices ${{\bf{Q}}_{\mu_1 }}$ and ${{\bf{Q}}_{\mu_2 }}$ in step 3 lead to the total memory complexity order of $\mathcal{S}(\mu _1^2) + \mathcal{S}(4\mu _2^2)$.
\begin{algorithm}[htb]
\renewcommand{\algorithmicrequire}{\textbf{Input:}}
\renewcommand{\algorithmicensure}{\textbf{Output:}}
\caption{Proposed 3-D angle-delay estimation scheme.}
\label{algorithm_1}
\begin{algorithmic}[1]
\REQUIRE ${{\bf{H}}_u}(t)$, $L$, $R$, $K$, ${\gamma _1}$;
\FOR{$t = \left\{ {{t_1},{t_2}} \right\}$}
\STATE Generate a 3-D MP matrix ${{\bf{G}}_u}(t)$ following the procedure between Eq. (\ref{1DMP}) and Eq. (\ref{3D-MP-Gu});
\STATE Convert ${{\bf{G}}_u}(t)$ to 3-D real matrix ${{\bf{G}}_{\rm{re}}}(t)$ according to Eq. (\ref{3D-Gre}) and Eq. (\ref{3D-Gex});
\STATE Determine the number of paths by Eq. (\ref{Set:M});
\STATE Generate the real matrix ${{\bf{\Psi }}_\tau }$ with Eq. (\ref{RealMatrix_delay});
\STATE Estimate the delay ${\hat \tau _p}(t)$ as Eq. (\ref{Estimation_delay}) by the EVD of ${{\bf{\Psi }}_\tau }$;
\STATE Define two shuffling matrices as Eq. (\ref{Shuffling_matrix_h}) and Eq. (\ref{Shuffling_matrix_v}), and generate ${{\bf{\Psi }}_{\phi }}$ and ${{\bf{\Psi }}_\theta }$ with Eq. (\ref{RealMatrix_EOD_AOD}) and Eq. (\ref{RealMatrix_EOD});
\STATE Estimate the EOD ${\hat \theta _p}$ and the AOD ${\hat \phi _p}$ by Eq. (\ref{Estimation_EOD}) and Eq. (\ref{Estimation_AOD}), respectively; 
\ENDFOR
\STATE Estimate the parameter ${\hat k_{{\tau _p}}}$ by Eq. (\ref{Estimation_kdelay});
\ENSURE ${\hat \tau _p}(t)$, ${\hat \theta _p}$, ${\hat \phi _p}$, ${\hat k_{{\tau _p}}}$, $p \in \left\{ {1, \cdots ,{P}} \right\}$
\end{algorithmic}
\end{algorithm}

\subsection{The 3-D Angle-Doppler Estimation}\label{sec:angle-Doppler estimation}
\par With the angular and delay information obtained in Sec. \ref{sec:angle-delay estimation}, we now estimate the multipath angles and Doppler in this section. Since the duration of the channel sample is $T$, the ${n_s}$-th sample time is denoted by $t = {n_s}{T}$. Similar to Sec. \ref{sec:angle-delay estimation}, we start by defining a 3-D MP matrix ${{\bm{\mathcal G}}_u}({f_1}) \in {{\mathbb{C}}^{LRQ \times (N_h^{{\rm{BS}}} - L + 1)(N_v^{{\rm{BS}}} - R + 1)({N_s} - Q + 1)}}$ as
\begin{equation}
{{\bm{\mathcal G}}_u}({f_1}) = \left[ \begin{array}{l}
{{\bm{\mathcal G}}_u}(1,{f_1}), \ \cdots ,{{\bm{\mathcal G}}_u}({N_s} \!-\! Q \!+\! 1,{f_1})\\
 \ \ \ \ \vdots \ \ \ \ \ \ \ \ \ddots \ \ \ \ \ \ \ \ \ \vdots \\
{{\bm{\mathcal G}}_u}(Q,{f_1}), \cdots ,\ \ \ \ {{\bm{\mathcal G}}_u}({N_s},{f_1})
\end{array} \right],
\!\!\!\!\label{3D-MP_angular_frequency}
\end{equation}
where ${N_s}$ is the number of samples, and $Q$ is the pencil size satisfying $P < Q < {N_s}-P+1$.
The matrix ${{\bm{\mathcal G}}_u}({f_1})$ is composed of the channels between the $u$-th UE antenna and all BS antennas at the first subcarrier and all sample times. For simplicity, we define ${{\omega}_{\mu_1}}=LRQ$ and ${{\omega}_{\mu_2}}=(N_h^{{\rm{BS}}} - L + 1)(N_v^{{\rm{BS}}} - R + 1)({N_s} - Q + 1)$.
The 2-D MP matrix ${{\bm{\mathcal G}}_u}({n_s},{f_1}) \in {{\mathbb{C}}^{LR \times (N_h^{{\rm{BS}}} - L + 1)(N_v^{{\rm{BS}}} - R + 1)}}$, ${n_s} = 1, \cdots ,{N_s}$ consists of the channels between the $u$-th UE antenna and all BS antennas at the first subcarrier and the ${n_s}$-th sample. 
We rewrite ${{\bm{\mathcal G}}_u}({n_s},{f_1})$ as
\begin{equation}
{{\bm{\mathcal G}}_u}({n_s},{f_1}) = {{\check{\bf{E}}}_{1}}{{\bf{Y}}_w}{\bf{Z}}_{w_\tau}^{{n_s}}{{\check{\bf{F}}}_{1}},
\!\!\!\!\label{Gu_f1_nt}
\end{equation}
where ${\bf{Z}}_{\omega_\tau}$ denotes the phase differences between two neighboring samples of the paths:
\begin{equation}
{\bf{Z}}_{\omega_\tau} = {\rm{diag}}\left\{ {\begin{array}{*{20}{c}}{e^{j2\pi {\omega_{\tau _1}} {T}},}&{\cdots,}&{e^{j2\pi {\omega_{\tau _P}} {T}}} \end{array}} \right\},
\!\!\!\!\label{Tan_doppler}
\end{equation}
where
\begin{equation}
{\omega_{\tau _p}} = {\omega_p} - {f_1} {k_{{\tau _p}}}.
\!\!\!\!\label{doppler_f1_kdelay}
\end{equation}
The matrix ${{\bf{Y}}_\omega}$ is defined as
\begin{equation}
{{\bf{Y}}_\omega} = {\rm{diag}}\left\{ {{{\bf{D}}_u}{\bf{e}}({f_1})} \right\},
\!\!\!\!\label{S_doppler}
\end{equation}
where ${{\bf{D}}_u}$ and ${\bf{e}}({f_1})$ are shown in Eq. (\ref{Diagonal_Du}) and Eq. (\ref{Delay_initial_vector}).
According to the angular-time-domain channel structure in Eq. (\ref{Channel_UE_angular-time}), ${{\bm{\mathcal G}}_u}({f_1})$ is rewritten as
\begin{equation}
{{\bm{\mathcal G}}_u}({f_1}) = {{\check{\bf{E}}}_{\omega,2}}{{\bf{Y}}_\omega}{{\check{\bf{F}}}_{\omega,2}},
\!\!\!\!\label{3D-MP_angular_frequency_f1}
\end{equation}
where 
${{\check{\bf{E}}}_{\omega,2}} \in {{\mathbb{C}}^{{{\omega}_{\mu_1}} \times P}}$ and ${{\check{\bf{F}}}_{\omega,2}} \in {{\mathbb{C}}^{P \times {{\omega}_{\mu_2}}}}$ are defined as
\begin{equation}
{{\check{\bf{E}}}_{\omega,2}} = {\left[\! {\begin{array}{*{20}{c}}{({{\check{\bf{E}}}_{1}}{\bf{Z}}_{\omega_\tau}^{0})^T,}\!&\!{\cdots,}\!&\!{({{\check{\bf{E}}}_{1}}{\bf{Z}}_{\omega_\tau}^{Q - 1})^T} \end{array}} \!\right]^T},
\!\!\!\!\label{Eleft2_doppler}
\end{equation}
\begin{equation}
{{\check{\bf{F}}}_{\omega,2}} = \left[ {\begin{array}{*{20}{c}}{{\bf{Z}}_{\omega_\tau}^{0}{{\check{\bf{F}}}_{1}},}&{\cdots,}&{{\bf{Z}}_{\omega_\tau}^{{N_s} - Q}{{\check{\bf{F}}}_{1}}} \end{array}} \right].
\!\!\!\!\label{Eright2_doppler}
\end{equation}
The physical meaning of ${{\check{\bf{E}}}_{1}}{\bf{Z}}_{\omega_\tau}^{q_1}$, ${q_1} = 0, \cdots ,Q - 1$ is the 3-D steering vectors of a subset of antenna array with size $R \times L$ on the BS antenna panel at the $({q_1} + 1)$-th sample. 
Likewise, ${\bf{Z}}_{\omega_\tau}^{q_2}{{\check{\bf{F}}}_{1}}$, ${q_2} = 0, \cdots ,{N_s} - Q$ is the 3-D steering vectors of a subset of antenna array with size $(N_v^{{\rm{BS}}} - R + 1) \times (N_h^{{\rm{BS}}} - L + 1)$ on the BS antenna panel at the $({q_2} + 1)$-th sample.

The three pencil sizes in ${{\bm{\mathcal G}}_u}({f_1})$ yield
\begin{equation}
\left\{ \begin{array}{l}
\ \ \ \ \ \ \ \ \ \ \ \ \ \ \ \ \ \ LR(Q - 1) \ge P\\
\ \ \ \ \ \ \ \ \ \ \ \ \ \ \ \ \ \ LQ(R - 1) \ge P\\
\ \ \ \ \ \ \ \ \ \ \ \ \ \ \ \ \ \ RQ(L - 1) \ge P\\
(N_h^{{\rm{BS}}} - L + 1)(N_v^{{\rm{BS}}} - R + 1)({N_s} - Q + 1) \ge P
\end{array} \right.\!\!\!\!\!\!.
\!\!\!\!\label{NC_3D-MP_angular_frequency}
\end{equation}
\par By following the similar estimation procedures of Sec. \ref{sec:angle-delay estimation}, we then transform ${{\bm{\mathcal G}}_u}({f_1})$ to a real matrix as ${{\bm{\mathcal G}}_{\rm{re}}}({f_1})$. The step of determining the number of paths is not needed, because this work has been finished in Sec. \ref{sec:angle-delay estimation}. Next, we will estimate the parameter ${\hat \omega_{\tau _p}}$ by generating a real matrix and the EVD. More specifically, a real matrix related to Doppler is given by
\begin{equation}
\ \!\!{{\bf{\Psi }}_{\omega_\tau}} \!\!=\!\! {\left( \!{{\mathop{\rm Re}\nolimits} ({\bf{Q}}_{{\omega}_{\mu_3}}^H{{\bf{J}}_{\omega,1}}{{\bf{Q}}_{{\omega}_{\mu_1}}}){{\bf{U}}_{\omega,s}}} \!\right)^\dag }\!{\rm Im}({\bf{Q}}_{{\omega}_{\mu_3}}^H{{\bf{J}}_{\omega,1}}{{\bf{Q}}_{{\omega}_{\mu_1}}}){{\bf{U}}_{\omega,s}},\!\!
\!\!\label{RealMatrix_doppler_f1_kdelay}
\end{equation}
where ${{\omega}_{\mu_3}} = QRL - RL$, and ${{\bf{J}}_{\omega,1}}= \left[ {{{\bf{I}}_{{\omega}_{\mu_3}}}:{{\bf{0}}_{{{\omega}_{\mu_3}} \times RL}}} \right]$. 
The two unitary matrices ${{\bf{Q}}_{{\omega}_{\mu_1}}}$ and ${{\bf{Q}}_{{\omega}_{\mu_3}}}$ depending on the size of ${{\omega}_{\mu_1}}$ and ${{\omega}_{\mu_3}}$, are expressed as Eq. (\ref{Q:even}) or Eq. (\ref{Q:odd}).
Perform the SVD of ${{\bm{\mathcal G}}_{\rm{re}}}({f_1})$: ${{\bm{\mathcal G}}_{\rm{re}}}({f_1}) = {{\bf{U}}_\omega}{{\bf{S}}_\omega}{\bf{V}}_\omega^H$. We select the ${P}$ columns from ${{\bf{U}}_\omega}$ as ${{\bf{U}}_{\omega,s}}$, which follows the similar procedure of ${{\bf{U}}_s}$ in Sec. \ref{sec:angle-delay estimation}.
The matrix ${{\bf{\Psi }}_{\omega_\tau}}$ and a real diagonal matrix ${{\bf{\hat Z}}_{\omega_\tau}}$ are similarity matrices, and share the same eigenvalues. The matrix ${{\bf{\hat Z}}_{\omega_\tau}}$ containing the Doppler information is defined as
\begin{equation}
{{{\bf{\hat Z}}}_{{\omega _\tau }}} = {\rm{diag}}\{\! {\begin{array}{*{20}{c}}{- \tan ({\omega _{{\tau _1}}}\pi T),}\!&\!{\cdots,}\!&\!{- \tan ({\omega _{{\tau _{{P}}}}}\pi T)} \end{array}} \!\}.
\!\!\!\!\label{Z_doppler_delay}
\end{equation}
The diagonal elements in ${{{\bf{\hat Z}}}_{{\omega _\tau }}}$ reflect the phase differences between two samples of the paths. 

Until now, the real matrix ${{\bf{\Psi }}_{\omega_\tau}}$ has been generated, and the following step is the EVD.
We perform the EVD of ${{\bf{\Psi }}_{\omega_\tau}}$: ${{\bf{\Psi }}_{\omega_\tau}} = {{\bf{W}}_{\omega_\tau}}{{\bf{\hat Z}}_{\omega_\tau}}{{\bf{W}}_{\omega_\tau}^{ - 1}}$, where ${{\bf{W}}_{\omega_\tau}}$ is composed of the eigenvectors of ${{\bf{\Psi }}_{\omega_\tau}}$.
The matrix ${{\bf{\hat Z}}_{\omega_\tau}}$ is calculated as ${{\bf{\hat Z}}_{\omega_\tau}} = {\bf{W}}_{\omega_\tau}^{ - 1}{{\bf{\Psi }}_{\omega_\tau}}{{\bf{W}}_{\omega_\tau}}$.
The parameter ${\hat \omega_{\tau _p}}$ is thus estimated by
\begin{equation}
{\hat \omega_{\tau _p}} =  - \frac{{{{\tan }^{ - 1}}({{\hat z}_{\omega_\tau,p}})}}{{\pi {T}}},
\!\!\!\!\label{Estimation_doppler_f1_kdelay}
\end{equation}
where ${\hat z_{\omega_\tau,p}}$ denotes the $p$-th estimated diagonal element of ${{\bf{\hat Z}}_{\omega_\tau}}$. 
Likewise, we will estimate the EODs and the AODs as follows: The two real matrices related to the EODs and the AODs are given by
\begin{equation}
\begin{array}{l}
\!\!\!\!\!\!\!\!{{\bf{\Psi }}_{\omega_\tau,\phi }} \!=\!\! {\left( \!{{\mathop{\rm Re}\nolimits} ({\bf{Q}}_{{\omega}_{\mu_4}}^H\!{{\bf{J}}_{\omega,2}}{{\bf{Q}}_{{\omega}_{\mu_1}}}\!){\bf{Q}}_{{\omega}_{\mu_1}}\!{{\bf{S}}_{{\rm{left}},\omega,h}}{\bf{Q}}_{{\omega}_{\mu_1}}^H\!\!{{\bf{U}}_{\omega,s}}} \!\right)^\dag }\\
\ \ \ \ \ \ \ {\mathop{\rm Im}\nolimits}({\bf{Q}}_{{\omega}_{\mu_4}}^H\!{{\bf{J}}_{\omega,2}}{\bf{Q}}_{{\omega}_{\mu_1}}^{}\!){\bf{Q}}_{{\omega}_{\mu_1}}^{}\!\!{{\bf{S}}_{{\rm{left}},\omega,h}}{\bf{Q}}_{{\omega}_{\mu_1}}^H{{\bf{U}}_{\omega,s}},
\end{array}
\!\!\label{RealMatrix_EOD_f1_kdelay}
\end{equation}
\begin{equation}
\begin{array}{l}
\!\!\!\!\!\!\!{{\bf{\Psi }}_{\omega_\tau,\theta }} \!=\!\! {\left( \!{{\mathop{\rm Re}\nolimits} ({\bf{Q}}_{{\omega}_{\mu_5}}^H{{\bf{J}}_{\omega,3}}{{\bf{Q}}_{{\omega}_{\mu_1}}}\!){\bf{Q}}_{{\omega}_{\mu_1}}{{\bf{S}}_{{\rm{left}},\omega,v}}{\bf{Q}}_{{\omega}_{\mu_1}}^H{{\bf{U}}_{\omega,s}}} \!\right)^\dag }\\
\ \ \ \ \ \ {\mathop{\rm Im}\nolimits}({\bf{Q}}_{{\omega}_{\mu_5}}^H{{\bf{J}}_{\omega,3}}{\bf{Q}}_{{\omega}_{\mu_1}}^{}\!){\bf{Q}}_{{\omega}_{\mu_1}}^{}{{\bf{S}}_{{\rm{left}},\omega,v}}{\bf{Q}}_{{\omega}_{\mu_1}}^H{{\bf{U}}_{\omega,s}},
\end{array}
\!\!\label{RealMatrix_AOD_f1_kdelay}
\end{equation}
where ${{\omega}_{\mu_4}}=QRL - QR$, ${{\omega}_{\mu_5}}=QRL - QL$, ${{\bf{J}}_{\omega,2}}=\left[ {{{\bf{I}}_{{\omega}_{\mu_4}}}:{{\bf{0}}_{{{\omega}_{\mu_4}} \times QR}}} \right]$, and ${{\bf{J}}_{\omega,3}} = \left[ {{{\bf{I}}_{{\omega}_{\mu_5}}}:{{\bf{0}}_{{{\omega}_{\mu_5}} \times QL}}} \right]$.
Define ${{\bf{S}}_{{\rm{left}},\omega,h}}$ and ${{\bf{S}}_{{\rm{left}},\omega,v}}$ as two new shuffling matrices that share the similar expressions with ${{\bf{S}}_{{\rm{left}},h}}$ and ${{\bf{S}}_{{\rm{left}},v}}$ in Eq. (\ref{Shuffling_matrix_h}) and Eq. (\ref{Shuffling_matrix_v}). The two shuffling matrices ${{\bf{S}}_{{\rm{left}},\omega,h}}$ and ${{\bf{S}}_{{\rm{left}},\omega,v}}$ are generated by replacing the parameter $K$ with $Q$ in Eq. (\ref{Shuffling_matrix_h}) and Eq. (\ref{Shuffling_matrix_v}).
The two matrices ${{\bf{\Psi }}_{\omega_\tau,\phi }}$ and ${{\bf{\Psi }}_{\omega_\tau,\theta }}$ are similar to two real diagonal matrices ${{\bf{\hat Z}}_{\omega_\tau,\phi}}$ and ${{\bf{\hat Z}}_{\omega_\tau,\theta}}$:
\begin{equation}
\begin{array}{l}
\!\!\!\!\!\!\!\!\!\!\!\!\!\!{{{\bf{\hat Z}}}_{{\omega _\tau },\theta }}\! =\! {\rm{diag}}\{ \tan (\frac{{\pi d_v^{{\rm{tx}}}\sin ({\theta _{\omega ,1}})}}{{{\lambda _0}}}), \cdots \!,\tan (\frac{{\pi d_v^{{\rm{tx}}}\sin ({\theta _{\omega ,{{P}}}})}}{{{\lambda _0}}})\},
\end{array}\!\!\!\!\!\!\!\!\!\!
\!\!\!\!\label{Z_doppler_EOD}
\end{equation}
\begin{equation}
\begin{array}{l}
{{{\bf{\hat Z}}}_{{\omega _\tau },\phi }} = {\rm{diag}}\{ \tan (\frac{{\pi d_h^{{\rm{tx}}}\cos ({\theta _{\omega ,1}}){\rm{sin}}({\phi _{\omega ,1}})}}{{{\lambda _0}}}),\\
 \ \ \ \ \ \ \ \ \ \ \cdots ,\tan (\frac{{\pi d_h^{{\rm{tx}}}\cos ({\theta _{\omega ,{{P}}}}){\rm{sin}}({\phi _{\omega ,{{P}}}})}}{{{\lambda _0}}})\},
\end{array}\!\!\!\!\!\!\!
\!\!\!\!\label{Z_doppler_AOD}
\end{equation}
which are given by ${{\bf{\hat Z}}_{\omega_\tau,\phi}} = {\bf{W}}_{\omega_\tau}^{ - 1}{{\bf{\Psi }}_{\omega_\tau,\phi }}{{\bf{W}}_{\omega_\tau}}$ and ${{\bf{\hat Z}}_{\omega_\tau,\theta}} = {\bf{W}}_{\omega_\tau}^{ - 1}{{\bf{\Psi }}_{\omega_\tau,\theta }}{{\bf{W}}_{\omega_\tau}}$.
The EOD and the AOD are estimated by
\begin{equation}
{\hat \theta _{\omega,p}} = {\sin ^{ - 1}}\left( {\frac{{{\tan ^{-1}}({{\hat z}_{\omega_\tau,\phi,p}}) {\lambda _0}}}{{\pi d_v^{{\rm{tx}}}}}} \right),
\!\!\!\!\label{Estimation_EOD_doppler}
\end{equation}
\begin{equation}
\ {\hat \phi _{\omega,p}} \!=\! {\sin ^{ - 1}}\!\left( {\frac{{{\tan ^{-1}}({{\hat z}_{\omega_\tau,\theta,p}}) {\lambda _0}}}{{\pi d_h^{{\rm{tx}}} \cos \left( {\sin{^{ - 1}}\left( {\frac{{{\tan ^{-1}}({{\hat z}_{\omega_\tau,\phi,p}}) {\lambda _0}}}{{\pi d_v^{{\rm{tx}}}}}} \right)} \right)}}} \right),
\!\!\!\!\label{Estimation_AOD_doppler}
\end{equation}
where ${\hat z_{\omega_\tau,\phi,p}}$ and ${\hat z_{\omega_\tau,\theta,p}}$ denote the $p$-th estimated diagonal elements of ${{\bf{\hat Z}}_{\omega_\tau,\phi}}$ and ${{\bf{\hat Z}}_{\omega_\tau,\theta}}$. 

The procedures of our proposed 3-D angle-Doppler estimation scheme are summarized in Algorithm \ref{algorithm_2}. 
Similar to the time and memory complexity analysis of Algorithm \ref{algorithm_1}, we may obtain the time complexity of Algorithm \ref{algorithm_2} as $\mathcal{O}(2{\omega _{{\mu _1}}^2}{\omega _{{\mu _2}}}) + \mathcal{O}(4{\omega _{{\mu _1}}}{\omega _{{\mu _2}}^2}) + \mathcal{O}({\omega _{{\mu _1}}^3})$, and the memory complexity as $\mathcal{S}({\omega _{{\mu _1}}^2}) + \mathcal{S}(4{\omega _{{\mu _2}}^2})$.
\begin{algorithm}[htb]
\renewcommand{\algorithmicrequire}{\textbf{Input:}}
\renewcommand{\algorithmicensure}{\textbf{Output:}}
\caption{Proposed 3-D angle-Doppler estimation scheme.}
\label{algorithm_2}
\begin{algorithmic}[1]
\REQUIRE ${{\bf{H}}_u}(t)$, $L$, $R$, $Q$, ${N_s}$, ${P}$;
\STATE Generate ${{\bm{\mathcal G}}_u}({f_1})$ according to the procedure between Eq. (\ref{3D-MP_angular_frequency}) and Eq. (\ref{3D-MP_angular_frequency_f1});
\STATE Convert ${{\bm{\mathcal G}}_u}({f_1})$ to a real matrix ${{\bm{\mathcal G}}_{\rm{re}}}({f_1})$;
\STATE Generate the matrix ${{\bf{\Psi }}_{\omega_\tau}}$ with Eq. (\ref{RealMatrix_doppler_f1_kdelay});
\STATE Estimate ${\hat \omega_{\tau _p}}$ by Eq. (\ref{Estimation_doppler_f1_kdelay}) with the EVD of ${{\bf{\Psi }}_{\omega_\tau}}$;
\STATE Generate two matrices ${{\bf{\Psi }}_{\omega_\tau,\phi }}$ and ${{\bf{\Psi }}_{\omega_\tau,\theta }}$ with Eq. (\ref{RealMatrix_EOD_f1_kdelay}) and Eq. (\ref{RealMatrix_AOD_f1_kdelay});
\STATE Estimate the EOD ${\hat \theta _{\omega,p}}$ and the AOD ${\hat \phi _{\omega,p}}$ by Eq. (\ref{Estimation_EOD_doppler}) and Eq. (\ref{Estimation_AOD_doppler});
\ENSURE ${\hat \omega_{\tau _p}}$, ${\hat \theta _{\omega,p}}$, ${\hat \phi _{\omega,p}}$, $p \in \left\{ {1, \cdots ,{P}} \right\}$
\end{algorithmic}
\end{algorithm}
\par Although we have obtained the angular-delay information of the paths in Algorithm \ref{algorithm_1}, and the angular-Doppler information of the paths in Algorithm \ref{algorithm_2}, the angles, delay, and Doppler of the paths have to be paired so as to reconstruct the channel.

\subsection{The Proposed Angle-Delay-Doppler Pairing Scheme }
 
\par We propose to pair the angles, delay, and Doppler of the paths, with the super-resolution property of the angle estimation. Let some ${P} \times 1$ vectors $\bm{\hat \theta }$, $\bm{\hat \phi }$, $\bm{\hat \tau (t)}$, $\bm{\hat k_\tau }$, $\bm{\hat \theta _\omega}$, $\bm{\hat \phi _\omega}$ and $\bm{\hat \omega_{\tau}}$ denote the corresponding estimated parameters of multipaths. A pairing matrix ${[{{\bf{s}}_{{\rm{pair}},1}}, \cdots ,{{\bf{s}}_{{\rm{pair}},{P}}}]}$ is defined to map the relationship between $[\bm{\hat \theta },\bm{\hat \phi }]$ and $[\bm{\hat \theta _\omega},\bm{\hat \phi _\omega}]$ as
\begin{equation}
[\bm{\hat \theta },\bm{\hat \phi }] = {[{{\bf{s}}_{{\rm{pair}},1}}, \cdots ,{{\bf{s}}_{{\rm{pair}},{P}}}]}[\bm{\hat \theta _\omega},\bm{\hat \phi _\omega}],
\!\!\!\!\label{Mapping_matrix}
\end{equation}
where ${{\bf{s}}_{{\rm{pair}},p}} = [{s_{p,1}}, \cdots ,{s_{p,{P}}}]^T$, $p = 1, \cdots ,{P}$ is a ${P} \times 1$ unit vector with only one element being one. Denote the $p$-th row of $[{\bm{\hat \theta _w}},{\bm{\hat \phi _w}}]$ as $[{\hat \theta _{w,p}},{\hat \phi _{w,p}}]$. By solving the minimization problem
\begin{equation}
{{\bf{s}}_{{\rm{pair}},p}} = \arg \mathop {\min }\limits_{{{\bf{s}}_{{\rm{pair}},p}}} (\lvert {{{\hat \theta }_{w,p}}{{\bf{1}}_{{P} \times {1}}} - \bm{\hat \theta }} \rvert + \lvert {{{\hat \phi }_{w,p}}{{\bf{1}}_{{P} \times {1}}} - \bm{\hat \phi }} \rvert),
\!\!\!\!\label{Pairing_matrix}
\end{equation}
\noindent we may obtain ${{\bf{s}}_{\text{pair},p}}$.
Specifically, $\lvert {{{\hat \theta }_{w,p}}{{\bf{1}}_{{P} \times {1}}} - \bm{\hat \theta } } \rvert$ is the error vector between ${\hat \theta _{w,p}}$ and $\bm{\hat \theta } $, and $\lvert {{{\hat \phi }_{w,p}}{{\bf{1}}_{{P} \times {1}}} - \bm{\hat \phi } } \rvert$ is the error vector between ${\hat \phi _{w,p}}$ and $\bm{\hat \phi } $. The row index of $[{\bm{\hat \theta _w}},{\bm{\hat \phi _w}}]$ satisfying the minimize value of $\lvert {{{\hat \theta }_{w,p}}{{\bf{1}}_{{P} \times {1}}} - \bm{\hat \theta } } \rvert + \lvert {{{\hat \phi }_{w,p}}{{\bf{1}}_{{P} \times {1}}} - \bm{\hat \phi } } \rvert$ is the row index of the only non-zero entry in ${{\bf{s}}_{\text{pair},p}}$.

\par By reordering the entries of $\bm{\hat \omega_{\tau}}$ with the pairing result, the Doppler $\bm{\hat \omega}$ is obtained as
\begin{equation}
\bm{\hat \omega} = {({[{{\bf{s}}_{{\rm{pair}},1}}, \cdots ,{{\bf{s}}_{{\rm{pair}},{P}}}]})}\bm{\hat \omega_{\tau}} + {f_1}{\bm{\hat k_\tau }}.
\!\!\!\!\label{Estimation_Doppler}
\end{equation}
Now, $\bm{\hat \tau (t)}$ and $\bm{\hat \omega}$ are paired correctly and associated with the corresponding paths. With the estimated parameters, the future channel at time $t$ can be reconstructed as ${{\bf{\hat H}}_u}(t)$. 

\begin{algorithm}[htb]
\renewcommand{\algorithmicrequire}{\textbf{Input:}}
\renewcommand{\algorithmicensure}{\textbf{Output:}}
\caption{Proposed MDMP channel prediction method.}
\label{algorithm_3}
\begin{algorithmic}[1]
\STATE Estimate $\bm{\hat \theta }$, $\bm{\hat \phi }$, $\bm{\hat \tau (t)}$ and $\bm{\hat k_\tau }$ with Algorithm \ref{algorithm_1};
\STATE Estimate $\bm{\hat \theta _\omega}$, $\bm{\hat \phi _\omega}$ and $\bm{\hat \omega_{\tau _p}}$ with Algorithm \ref{algorithm_2};
\FOR{$p = 1:{P}$}
\STATE Find the pairing matrix ${[{{\bf{s}}_{{\rm{pair}},1}}, \cdots ,{{\bf{s}}_{{\rm{pair}},{P}}}]^T}$ by solving the minimization problem of Eq. (\ref{Pairing_matrix});
\ENDFOR
\STATE Calculate the Doppler $\bm{\hat \omega}$ by Eq. (\ref{Estimation_Doppler});
\STATE Predict future channel ${{\bf{\hat H}}_u}(t)$ with the estimated parameters;
\end{algorithmic}
\end{algorithm}

Our proposed MDMP method is summarized in Algorithm \ref{algorithm_3}. 
The step 3 - step 6 of Algorithm \ref{algorithm_3} has a time complexity order of $\mathcal{O}({P^2})$, and a memory complexity order of $\mathcal{S}({P^2})$. The time complexity of channel reconstruction in step 7 is $\mathcal{O}({N_h^{{\rm{BS}}}}{N_v^{{\rm{BS}}}}{N_f}{P})$. The memory complexity of step 7 is $\mathcal{S}({N_h^{{\rm{BS}}}}{N_v^{{\rm{BS}}}}{N_f})$.
The time and memory complexity of the MDMP method are primarily determined by Algorithm \ref{algorithm_1} and Algorithm \ref{algorithm_2}. 
Therefore, the global time complexity of the MDMP method is $\mathcal{O}(2{\mu _1^2}{\mu _2}) + \mathcal{O}(4{\mu _1}{\mu _2^2}) + \mathcal{O}({\mu _1^3}) + \mathcal{O}(2{\omega _{{\mu _1}}^2}{\omega _{{\mu _2}}}) + \mathcal{O}(4{\omega _{{\mu _1}}}{\omega _{{\mu _2}}^2}) + \mathcal{O}({\omega _{{\mu _1}}^3})$. The MDMP method has a global memory complexity order of $\max(\mathcal{S}(\mu _1^2) + \mathcal{S}(4\mu _2^2), \mathcal{S}({\omega _{{\mu _1}}^2}) + \mathcal{S}(4{\omega _{{\mu _2}}^2}))$. 
The PAD method in \cite{Yin20JSAC} has a complexity order of $\mathcal{O}(N{N_t}{N_f}\log({N_t}{N_f})) + \mathcal{O}({N_p}{N^{2.37}}) + \mathcal{O}({N_d}{N_p}N)$, where ${N_p} \ll {N_t}{N_f}$, $N$ and ${N_d}$ denote the prediction order of the PAD method and the number of the predicted samples. 
The MDMP method may have a larger time complexity compared to the PAD method.
However, the MDMP method has better performance and is applicable in more general settings, i.e., the CSI delay does not have to be an interval multiple of the pilot interval, the historical channel samples are not necessarily neighboring ones.
The detailed proofs will be shown in the next section.


\section{Performance Analysis of the MDMP Prediction Method}\label{sec:performance analysis}
\par In this section, we show the performance analysis of our proposed MDMP prediction method.
Denote the observation sample at time $t$ by ${{{\bf{\tilde H}}}_u}(t)$. The vectorized forms of ${{\bf{H}}_u}(t)$, ${{{\bf{\tilde H}}}_u}(t)$ and ${{\bf{\hat H}}_u}(t)$ are denoted respectively by ${{\bf{h}}_u}(t)$, ${{{{\bf{\tilde h}}}_u}(t)}$, and ${{{{\bf{\hat h}}}_u}(t)}$, i.e., ${{\bf{h}}_u}(t)\! =\! {\rm{vec}}({{\bf{H}}_u}(t))$, ${{{\bf{\tilde h}}}_u}(t) = {\rm{vec(}}{{{\bf{\tilde H}}}_u}(t))$ and ${{\bf{\hat h}}_u}(t) = {\rm{vec(}}{{{\bf{\hat H}}}_u}(t))$. 
Before the asymptotic performance analysis, we introduce an assumption.
\begin{assumption}\label{assumption1}
The observation sample yields
\begin{equation}
{{\bf{\tilde h}}_u}(t) = {{\bf{h}}_u}(t) + {\bf{n}},
\!\!\!\!\label{NoiseGaussian}
\end{equation}
where ${\bf{n}} = {[{n_1}, \cdots ,{n_{N_h^{{\rm{BS}}}N_v^{{\rm{BS}}}{N_f}}}]^T} \in {{\mathbb{C}}^{N_h^{{\rm{BS}}}N_v^{{\rm{BS}}}{N_f} \times 1}}$ is the independent identically distributed (i.i.d.) Gaussian white noise with zero mean and element-wise variance ${\sigma ^2}$. As $N_h^{{\rm{BS}}},N_v^{{\rm{BS}}},{N_f} \to \infty $, the variance ${\sigma ^2}$ converges to zero, such that:
\begin{equation}
\mathop {\lim }\limits_{N_v^{{\rm{BS}}},N_h^{{\rm{BS}}},{N_f} \to \infty } \frac{{\left\| {{{{\bf{\tilde h}}}_u}(t) - {{\bf{h}}_u}(t)} \right\|_2^2}}{{\left\| {{{\bf{h}}_u}(t)} \right\|_2^2}} = 0.
\!\!\label{Samples_accurate}
\end{equation}
\end{assumption}
Remarks: This technical assumption means the normalized channel sample error converges to zero when the number of the BS antennas and the bandwidth increase. In fact, the condition of Eq. (\ref{Samples_accurate}) can be achieved even in the multi-user multi-cell scenario with pilot contamination, by some non-linear signal processing technologies \cite{Yin16TSP}. 

For simplicity, we also introduce a vector in Definition \ref{definition1}. 
\begin{definition}\label{definition1}
Define a vector containing the parameters of multipaths as
\begin{equation}
{\bf{\Omega}}(t) = {\left[ {{{\bf{\Omega }}_1}(t), \cdots ,{{\bf{\Omega }}_P}(t)} \right]^T}, 
\!\!\!\!\label{Omega_allparmeters}
\end{equation}
where ${{\bf{\Omega }}_p}(t) = \left[ {\cos {\theta _p}\sin{\phi _p},\sin{\theta _p},{\omega _p},{\tau _p}(t)} \right]$, $p = 1, \cdots, P$.
\end{definition}

\begin{theorem}\label{theorem1}
Under Assumption \ref{assumption1}, for an arbitrary CSI delay ${t_\tau }$, the asymptotic performance of MDMP prediction method yields: 
\begin{equation}
\mathop {\lim }\limits_{N_h^{{\rm{BS}}},N_v^{{\rm{BS}}},{N_f} \to \infty }\!\!\!\!\!\! \frac{{\left\| {{{{{\bf{\hat h}}}_u}(t + {t_\tau })} \!-\! {{{\bf{h}}_u}(t + {t_\tau })}} \right\|_2^2}}{{\left\| {{{{\bf{h}}_u}(t + {t_\tau })}} \right\|_2^2}} = 0,
\!\!
\end{equation}
providing that the pencil sizes satisfy $N_h^{{\rm{BS}}} - P + 1 > L > P$, $N_v^{{\rm{BS}}} - P + 1 > R > P$, ${N_f} - P + 1 > K > P$, and ${N_s} - P + 1 > Q > P$.
\end{theorem}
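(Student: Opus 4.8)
\emph{Proof plan.} The strategy is to show that every quantity returned by Algorithms \ref{algorithm_1}--\ref{algorithm_3} converges to its true value as $N_h^{{\rm{BS}}},N_v^{{\rm{BS}}},N_f\to\infty$, and then to conclude that the reconstructed channel, being a finite continuous function of those quantities, inherits the convergence in the normalized sense.

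\textbf{Step 1 (noiseless exactness).} First I would verify that, in the absence of noise, the $3$-D MP matrix ${{\bf{G}}_u}(t)$ of (\ref{3D-MP-Gu}) and the $3$-D MP matrix ${{\bm{\mathcal G}}_u}({f_1})$ of (\ref{3D-MP_angular_frequency_f1}) have rank exactly $P$ whenever the pencil-size inequalities (\ref{NC_3D-MP}) and (\ref{NC_3D-MP_angular_frequency}) hold: the factors ${\check{\bf{E}}}_2,{\check{\bf{F}}}_2$ (resp. ${\check{\bf{E}}}_{\omega,2},{\check{\bf{F}}}_{\omega,2}$) are built from Vandermonde-type blocks generated by the pairwise-distinct modes $a_h^{{\rm{tx}}}$, $a_v^{{\rm{tx}}}$, $e^{-j2\pi\Delta f\tau_p(t)}$ (resp. $e^{j2\pi\omega_{\tau_p}T}$), so each has full column rank $P$, while ${\bf{Y}}$ (resp. ${{\bf{Y}}_\omega}$) is nonsingular. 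Given this, the UMP construction (\ref{3D-Gre})--(\ref{Q:odd}) and the similarity relations ${{\bf{\Psi}}_\tau}\sim{{\bf{\hat Z}}_\tau}$, ${{\bf{\Psi}}_{\phi}}\sim{{\bf{\hat Z}}_{\phi}}$, ${{\bf{\Psi}}_\theta}\sim{{\bf{\hat Z}}_\theta}$ (and their angle-Doppler analogues) recover $\tau_p(t)$, $\theta_p$, $\phi_p$, $\omega_{\tau_p}$ exactly through (\ref{Estimation_delay}), (\ref{Estimation_EOD}), (\ref{Estimation_AOD}), (\ref{Estimation_doppler_f1_kdelay}), under the usual non-ambiguity conditions that keep the $\tan^{-1}$ and $\sin^{-1}$ inversions well-defined; with $t=t_1,t_2$ fixed, (\ref{Estimation_kdelay}) then returns $k_{\tau_p}$ exactly and, after pairing, $\omega_p=\omega_{\tau_p}+f_1 k_{\tau_p}$ follows from (\ref{doppler_f1_kdelay}).

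\textbf{Step 2 (perturbation under Assumption \ref{assumption1}).} I would then pass to the observed data. By Assumption \ref{assumption1} the relative sample error vanishes; since every entry of ${{\bf{G}}_u}(t)$ and of its shuffled variants is a single channel coefficient repeated a bounded number of times by the windowing, the induced perturbation $\Delta{\bf{G}}$ satisfies $\|\Delta{\bf{G}}\|_F=o(\|{{\bf{G}}_u}(t)\|_F)$. The quantitative crux is that, because the $P$ modes are fixed and mutually distinct while the dimensions grow, the $P$-th singular value of ${{\bf{G}}_u}(t)$ grows proportionally to $\|{{\bf{G}}_u}(t)\|_F$, so $\|\Delta{\bf{G}}\|_F$ relative to this singular value tends to zero. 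A Wedin $\sin\Theta$ bound then forces the estimated signal subspace ${\rm{span}}({{\bf{U}}_s})$ to converge to the true one, and Weyl-type eigenvalue perturbation for the resulting $P\times P$ matrices ${{\bf{\Psi}}_\tau},{{\bf{\Psi}}_{\phi}},{{\bf{\Psi}}_\theta}$ (whose eigenvalues are simple and distinct) gives ${\hat z}_{\tau,p}\to z_{\tau,p}$, ${\hat z}_{\theta,p}\to z_{\theta,p}$, ${\hat z}_{\phi,p}\to z_{\phi,p}$, hence ${\hat\tau}_p(t)\to\tau_p(t)$, ${\hat\theta}_p\to\theta_p$, ${\hat\phi}_p\to\phi_p$; the same argument applied to ${{\bm{\mathcal G}}_u}({f_1})$ yields ${\hat\omega}_{\tau_p}\to\omega_{\tau_p}$ and ${\hat\theta}_{\omega,p}\to\theta_p$, ${\hat\phi}_{\omega,p}\to\phi_p$. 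Since $t_1,t_2$ are fixed, ${\hat k}_{\tau_p}=({\hat\tau}_p(t_2)-{\hat\tau}_p(t_1))/(t_2-t_1)\to k_{\tau_p}$.

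\textbf{Step 3 (pairing and reconstruction) and the main obstacle.} Because the true paths have pairwise-distinct angle pairs $(\cos\theta_p\sin\phi_p,\sin\theta_p)$ and both Algorithm \ref{algorithm_1} and Algorithm \ref{algorithm_2} return angle estimates converging to these same values, for the dimensions large enough the minimizer of (\ref{Pairing_matrix}) is the correct permutation, so (\ref{Estimation_Doppler}) gives ${\hat\omega}_p\to\omega_p$ with delays and Dopplers attached to the right paths. Finally the prediction ${{\bf{\hat h}}}_u(t+t_\tau)$ is the model (\ref{Channel_UE_frequency}) evaluated at $t+t_\tau$ with the estimated $\hat\theta_p,\hat\phi_p,\hat\tau_{p,0},\hat k_{\tau_p},\hat\omega_p$ and the complex path gains obtained by a least-squares fit to the known sample, whose dictionary becomes increasingly well conditioned since steering vectors of distinct directions become asymptotically orthogonal; as $P$ is correctly identified and each summand depends continuously on these parameters, $\|{{\bf{\hat h}}}_u(t+t_\tau)-{{\bf{h}}}_u(t+t_\tau)\|_2^2$ is of smaller order than $N_h^{{\rm{BS}}}N_v^{{\rm{BS}}}N_f$, while $\|{{\bf{h}}}_u(t+t_\tau)\|_2^2$ is of that exact order, which is the claimed limit. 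The delicate point is the uniform spectral-gap control in Step 2: one must show that the $P$-th singular value of the growing structured matrix stays proportional to its Frobenius norm --- equivalently that the Vandermonde / Khatri--Rao factors remain well conditioned as the dimensions grow --- and that the Moore--Penrose pseudoinverses appearing in ${{\bf{\Psi}}_\tau},{{\bf{\Psi}}_{\phi}},{{\bf{\Psi}}_\theta}$ and in the least-squares gain step stay bounded; this is where the ``distinct paths'' hypothesis is genuinely used and where care is needed, since pseudoinversion is continuous only on a fixed-rank stratum. By contrast the time-varying delay is harmless: at a fixed $t$ the subcarrier exponents $e^{-j2\pi\Delta f\tau_p(t)}$ are ordinary constants, so Algorithm \ref{algorithm_1} is unaffected, and the extra coupling $\omega_p-f_1 k_{\tau_p}$ in (\ref{doppler_f1_kdelay}) is absorbed into the single mode estimated by Algorithm \ref{algorithm_2} and removed afterwards through ${\hat k}_{\tau_p}$.
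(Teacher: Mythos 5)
Your proposal is correct in substance and follows the same overall architecture as the paper's proof: establish noiseless exactness of the multi-dimensional pencil (rank-$P$ factorizations ${{\check{\bf{E}}}_{2}}{\bf{Y}}{{\check{\bf{F}}}_{2}}$ and ${{\check{\bf{E}}}_{\omega,2}}{{\bf{Y}}_\omega}{{\check{\bf{F}}}_{\omega,2}}$ under the stated pencil-size inequalities), show that every estimated parameter converges to its true value under Assumption~\ref{assumption1}, and conclude that the reconstructed channel converges in the normalized sense. The treatment of the time-varying delay is also the same in spirit: the paper absorbs the delay rate into the modified shift matrix ${{\bf{Z}}_{t_\tau}}$ and the composite mode ${\omega _p} - {f_1}{k_{{\tau _p}}}$, exactly as you observe. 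Where you genuinely diverge is in the perturbation step. The paper works with the correlation matrix $E\{ {{{\bf{\tilde G}}}_{{\rm{re}}}}{{{\bf{\tilde G}}}_{{\rm{re}}}}^H\} = {{\bm{R}}_{{{\bf{G}}_{{\rm{re}}}}}} + {\sigma ^2}{{\bf{I}}_{{\mu_1}}}$: since the white-noise term only shifts the spectrum by ${\sigma ^2}$ without rotating the eigenvectors, and ${\sigma ^2}$ vanishes relative to the signal by Assumption~\ref{assumption1}, the signal subspace is recovered up to a diagonal phase matrix ${\bf{\Gamma }}$, after which ${{\bf{\tilde \Psi }}_{{t_\tau }}} = {\bf{\Gamma }}^{-1}{{\bf{\hat \Psi }}_{t_\tau}}{\bf{\Gamma }}$ is a similarity transform and the eigenvalues (hence the delays, angles, Doppler) are exact in the limit. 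You instead apply Wedin $\sin\Theta$ and Weyl bounds directly to the data matrix, which buys a more honest, non-averaged argument but forces you to confront the spectral-gap question --- that the $P$-th singular value of the growing structured matrix remains proportional to its Frobenius norm --- which you correctly flag as the delicate point but do not resolve. The paper's correlation-matrix route sidesteps that gap at the cost of silently identifying the SVD of the data matrix with the eigendecomposition of its expected outer product; neither version closes this loop fully, so your proposal is at a comparable level of rigor and, if anything, more explicit about where the conditioning of the Vandermonde/Khatri--Rao factors and of the pseudoinverses in ${{\bf{\Psi }}_\tau }$, ${{\bf{\Psi }}_{\phi }}$, ${{\bf{\Psi }}_\theta }$ is actually used. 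One further point in your favor: you spell out how the complex path gains are recovered (least-squares fit against an asymptotically orthogonal dictionary), a step the paper leaves implicit when it passes from ${{\bf{\hat \Omega }}(t + {t_\tau })}$ to ${{\bf{\hat H}}_u}(t + {t_\tau })$.
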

\begin{proof}
The proof can be found in Appendix A. 
\end{proof}
Remarks: Theorem \ref{theorem1} indicates that the channel prediction error converges to zero when the number of the BS antennas and the bandwidth are large. Note that the CSI delay in Theorem \ref{theorem1} does not have to be an integral multiple of the pilot interval, which indicates our proposed MDMP method is more general than the PAD method in \cite{Yin20JSAC} because the PAD achieves asymptotically error-free  performance when the CSI delay is an integral multiple of the channel sampling interval.

\par If $Q = P+1$, the least number of samples in Theorem 1 should satisfy ${N_s} = 2P+1$.
The number of paths $P$ might be large in rich scattering environment. In such cases, Theorem 1 indicates that we may need a large number of samples $N_s$ to achieve asymptotically error-free prediction. In the following, we will show the asymptotic performance of the prediction error with only two arbitrary samples known.

\begin{theorem}\label{theorem22}  
Under Assumption \ref{assumption1}, if the EOD and the AOD satisfy $({\theta _p},{\phi _p}) \ne ({\theta _q},{\phi _q})$, $\forall p \ne q$, the number of subcarriers satisfies ${N_f} \ge 2$, and the configuration of the BS antennas satisfies $N_h^{{\rm{BS}}} - P + 1 > L > P$ and $N_v^{{\rm{BS}}} - P + 1 > R > P$, then with only two arbitrary samples ${{{\bf{\tilde h}}}_u}({n_{{s_1}}}{T})$ and ${{{\bf{\tilde h}}}_u}({n_{{s_2}}}{T})$ known, the asymptotic performance of the MDMP prediction method yields:
\begin{equation}
\mathop {\lim }\limits_{N_h^{{\rm{BS}}},N_v^{{\rm{BS}}} \to \infty } \frac{{\left\| {{{{{\bf{\hat h}}}_u}(t+ {t_\tau })} \!-\! {{{\bf{h}}_u}(t+ {t_\tau })}} \right\|_2^2}}{{\left\| {{{{\bf{h}}_u}(t+ {t_\tau })}} \right\|_2^2}} = 0.
\!\!\!\!\label{theorem2}
\end{equation}
\end{theorem}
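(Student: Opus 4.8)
The plan is to mirror the architecture of the proof of Theorem~\ref{theorem1} --- reduce everything to the consistency of the estimated path parameters and then invoke continuity of the channel model~(\ref{ChannelModel}) --- but to replace the time-domain matrix pencil, which no longer exists when $N_s=2$, by a two-point finite difference, and to carry the separation of the $P$ paths entirely with the two antenna dimensions. This is exactly why the extra hypothesis $(\theta_p,\phi_p)\ne(\theta_q,\phi_q)$ for $p\ne q$ is imposed here: with the frequency and time dimensions too short to build pencils of size larger than $P$, the only thing that can resolve the paths is the large panel, and distinct angle pairs make the transmit steering vectors ${\bf a}^{\rm tx}(\theta_p,\phi_p)$ asymptotically orthogonal. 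Concretely, for the usual non-ambiguous array spacing the $(p,q)$ entry of the normalized Gram matrix of $\{{\bf a}^{\rm tx}(\theta_p,\phi_p)\}$ factors as a product of two Dirichlet kernels, one in $\cos\theta\sin\phi$ and one in $\sin\theta$, at least one of which decays like $1/N_h^{\rm BS}$ or $1/N_v^{\rm BS}$ as soon as $(\theta_p,\phi_p)\ne(\theta_q,\phi_q)$; hence the Gram matrix tends to ${\bf I}_P$ and ${\bf A}_u^{{\rm tx}}$ becomes full column rank and well conditioned after normalization.

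First I would work at a single sample time $t\in\{n_{s_1}T,n_{s_2}T\}$ and run only the angular part of Algorithm~\ref{algorithm_1} (the $L,R$ windows on the panel and the EVDs producing $\hat\theta_p,\hat\phi_p$, stacking the $N_f\ge 2$ subcarriers as ordinary columns), skipping the frequency pencil. By the paragraph above and the consistency of the UMP angle estimates already established for Theorem~\ref{theorem1}, one gets $\hat\theta_p\to\theta_p$, $\hat\phi_p\to\phi_p$, and --- this is the point that has to be done carefully --- the normalized error of $\hat{\bf A}_u^{{\rm tx}}$ vanishes, i.e. the angle errors decay faster than $1/\min(N_h^{\rm BS},N_v^{\rm BS})$ so that the wrapped phases $(\hat a_h^{\rm tx})^{s_h-1},(\hat a_v^{\rm tx})^{s_v-1}$ stay close to the true ones across the whole panel. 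Projecting the observation onto $\hat{\bf A}_u^{{\rm tx}}$ then gives, using~(\ref{Channel_UE_angular-frequency})--(\ref{Delay_vector}) and Assumption~\ref{assumption1} (the pseudoinverse hurts the noise by at most $O(1)$ while the array gain is of order $N_h^{\rm BS}N_v^{\rm BS}$), the limit $(\hat{\bf A}_u^{{\rm tx}})^\dagger\tilde{\bf H}_u(t)\to{\rm diag}\{g_p(t)\}{\bf B}_u$ with $g_p(t)=\beta_p e^{j2\pi({\bf\hat r}_p^{\rm rx})^T{\bf\bar d}_u^{\rm rx}/\lambda_0}e^{j2\pi\omega_p t}$ and $({\bf B}_u)_{p,n_f}=e^{-j2\pi f_{n_f}\tau_p(t)}$. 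From two neighbouring subcarriers the increment $e^{-j2\pi\Delta f\tau_p(t)}$ is recovered, so $\hat\tau_p(t)\to\tau_p(t)$ as in~(\ref{Estimation_delay}) under the standard no-aliasing condition $|\Delta f\,\tau_p(t)|<1$, and dividing out the delay phase at $f_1$ yields $\hat g_p(t)\to g_p(t)$.

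Next, having run this at both samples, I would pair the two index sets with the angle-matching rule~(\ref{Pairing_matrix}) --- which picks the correct permutation in the limit because the angle estimates converge to the distinct true values --- and then take finite differences: $\hat k_{\tau_p}=(\hat\tau_p(n_{s_2}T)-\hat\tau_p(n_{s_1}T))/((n_{s_2}-n_{s_1})T)\to k_{\tau_p}$ by~(\ref{Delay})--(\ref{Estimation_kdelay}), and the Doppler from the gain ratio $\hat g_p(n_{s_2}T)/\hat g_p(n_{s_1}T)\to e^{j2\pi\omega_p(n_{s_2}-n_{s_1})T}$, hence $\hat\omega_p\to\omega_p$ under the Nyquist-type resolvability condition $|\omega_p(n_{s_2}-n_{s_1})T|<\frac{1}{2}$ already implicit in Theorem~\ref{theorem1}. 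Finally, with all parameters consistent, I would rebuild the future channel from~(\ref{ChannelModel}) as $\hat h_{u,s_h,s_v}(t+t_\tau,f)=\sum_p\hat g_p(n_{s_1}T)\,e^{j2\pi\hat\omega_p(t+t_\tau-n_{s_1}T)}\,e^{-j2\pi f\hat\tau_p(t+t_\tau)}\,(\hat a_{h,p}^{\rm tx})^{s_h-1}(\hat a_{v,p}^{\rm tx})^{s_v-1}$, with $\hat a_{h,p}^{\rm tx},\hat a_{v,p}^{\rm tx}$ built from $\hat\theta_p,\hat\phi_p$ via~(\ref{ah})--(\ref{av}) and $\hat\tau_p(t+t_\tau)$ extrapolated by $\hat k_{\tau_p}$, and close the argument with continuity of each path term in its parameters plus the triangle inequality over the $P$ paths --- verbatim the last step of the proof of Theorem~\ref{theorem1} --- to conclude that the normalized prediction error~(\ref{theorem2}) tends to zero.

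The main obstacle is the ``fast enough'' part of the angular step: the finite-$N_f$, two-sample setting removes the frequency- and time-domain pencils that in Theorem~\ref{theorem1} did part of the separation, so one must show not just that the $P$ angular steering vectors become linearly independent but that the associated estimation errors shrink faster than the reciprocal of the panel dimension --- otherwise the projection used to read off the delays and gains, and especially the phase wrappings $(\hat a_h^{\rm tx})^{N_h^{\rm BS}-1}$ and $(\hat a_v^{\rm tx})^{N_v^{\rm BS}-1}$ in the reconstruction, are not asymptotically exact. Everything else is a rerun of Theorem~\ref{theorem1} with the time-domain pencil replaced by the two-point estimates of $k_{\tau_p}$ and $\omega_p$, the distinctness hypothesis $(\theta_p,\phi_p)\ne(\theta_q,\phi_q)$ supplying the asymptotic orthogonality that the missing dimensions used to provide.
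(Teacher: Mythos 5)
Your proposal is correct in outline but takes a genuinely different route from the paper. The paper does \emph{not} discard the time-domain pencil: it keeps the full MDMP machinery and simply verifies that the rank conditions survive when the pencil degenerates to $N_s=Q=2$ (and $N_f=K=2$). Concretely, it factors ${{\bm{\mathcal G}}_u}({f_1}) = {{\check{\bf{E}}}_{\omega,2}}{{\bf{Y}}_\omega}{{\check{\bf{F}}}_{\omega,2}}$ and uses the Sylvester-type inequalities $r(A)+r(B)-n \le r(AB) \le \min(r(A),r(B))$ to show $r({{\check{\bf{E}}}_{\omega,2}})=r({{\check{\bf{F}}}_{\omega,2}})=r({{\bm{\mathcal G}}_u}({f_1}))=P$ even for $Q=2$, the point being that ${{\check{\bf{E}}}_{1}}$ and ${{\check{\bf{F}}}_{1}}$ already have rank $P$ from the spatial windows alone --- this is exactly where the hypotheses $(\theta_p,\phi_p)\ne(\theta_q,\phi_q)$ and $L,R>P$ enter. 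The remainder is then a verbatim rerun of the Theorem~\ref{theorem1} argument (SVD subspace convergence under Assumption~\ref{assumption1}, similarity of $\tilde{\bf\Psi}$ to the diagonal parameter matrices, reconstruction). Your version instead estimates the angles with the spatial pencils only, projects the observation onto $\hat{\bf A}_u^{\rm tx}$, and recovers $k_{\tau_p}$ and $\omega_p$ by two-point finite differences of the per-path delays and gains. This is more elementary and makes the role of the distinct-angles hypothesis and the implicit Nyquist/no-aliasing conditions explicit (the paper leaves both tacit), but it introduces an extra step --- the pseudoinverse projection onto estimated steering vectors --- whose asymptotic exactness requires the angle errors to decay faster than $1/\min(N_h^{\rm BS},N_v^{\rm BS})$, a rate statement you correctly flag but do not prove. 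To be fair, the paper's proof asserts parameter consistency and jumps to the vanishing normalized channel error without addressing that same rate question, so this is a shared weakness rather than a defect unique to your route; if you were to complete your argument you would either need a perturbation bound on the unitary-ESPRIT eigenvalues of order $o(1/N)$, or you could sidestep the issue entirely by adopting the paper's rank-counting shortcut and inheriting the Theorem~\ref{theorem1} machinery unchanged.
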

\begin{proof}
The proof can be found in Appendix B. 
\end{proof}
Remarks: Theorem \ref{theorem22} requires that the angles of any two paths are different, which is a stronger assumption than in Theorem \ref{theorem1}. However a better result is obtained since only two samples are needed. 
Note that the two samples are not necessarily neighboring ones, which is also more general than the PAD method.

In Theorem \ref{theorem1} and Theorem \ref{theorem22}, the conditions of the BS antennas, e.g., $N_h^{{\rm{BS}}} - P + 1 > L > P$ and $N_v^{{\rm{BS}}} - P + 1 > R > P$, ensure the rank of the 3-D MP matrices satisfy $r({{{\bf{G}}}_{{\rm{re}}}}(t + {t_\tau })) = r({{\bm{\mathcal G}}_u}({f_1})) = P$. 
If $L = P+1$ and $R = P+1$, the least number of the BS antennas is ${N_t} = (2P+1)^2$, which seems to be large.
Next, by assuming $r({{{\bf{G}}}_{{\rm{re}}}}(t + {t_\tau })) = r({{\bm{\mathcal G}}_u}({f_1})) = P$ known, we derive a lower-bound of the BS antennas to give a satisfactory performance. In order to do so, 
we introduce two functions.
\begin{definition}\label{definition2}  
Define two functions ${F_1}(L,R)$ and ${F_2}(R)$.
\begin{equation}
{F_1}(L,R) = \frac{{\cal {N}}_s}{{N_v^{{\rm{BS}}} - R + 1}} + L - 1,
\!\!\!\!\label{definition_F1}
\end{equation}
where $L$ and $R$ satisfy
\begin{equation}
\left\{ \begin{array}{l}
LR \ge {\cal Q}\\
L \ge 2\\
N_v^{{\rm{BS}}} \ge R \ge 2
\end{array} \right.,
\!\!\!\!\label{Condition L R F1}
\end{equation}
${\cal {N}}_s = \max (\frac{Q}{{{N_s} - Q + 1}},1)$ and ${\cal Q} = \max (\frac{P}{{Q - 1}},4)$. The other function ${F_2}(R)$ is defined as
\begin{equation}
{F_2}(R) \!=\! \frac{P}{{({N_s} \!-\! Q \!+\! 1)(N_v^{{\rm{BS}}} \!-\! R \!+\! 1)}} \!+\! \frac{P}{{(R \!-\! 1)Q}} \!-\! 1,
\!\!\!\!\label{definition_F2}
\end{equation}
with
\begin{equation}
\left\{ \begin{array}{l}
LQ \ge {{\max \left( {\frac{P}{{R - 1}},4} \right)}}\\
N_v^{{\rm{BS}}} \ge R \ge 2
\end{array} \right..
\!\!\!\!\label{Condition L R F2}
\end{equation}
\end{definition}
\par Next, we derive a lower-bound of the number of the BS antennas in Proposition \ref{proposition1} to achieve a satisfactory performance.
For an arbitrary given $N_v^{{\rm{BS}}}$, we derive the bound of the total number of the BS antennas $N_t$.
In a wideband system, ${N_f}$ and ${N_s}$ usually satisfy ${N_f} > {N_s}$. According to Eq. (\ref{NC_3D-MP}) and Eq. (\ref{NC_3D-MP_angular_frequency}), the lower-bound of the number of the BS antennas is determined by ${N_s}$ in Eq. (\ref{NC_3D-MP_angular_frequency}).
The lower-bound of $N_t$ consists of two sub-bounds ${f_{N_v^{{\rm{BS}}},{N_s},1}}$ and ${f_{N_v^{{\rm{BS}}},{N_s},2}}$, 
where 
${f_{N_v^{{\rm{BS}}},{N_s},1}}$ is derived from the 3-D MP matrix in estimating Doppler, and ${f_{N_v^{{\rm{BS}}},{N_s},2}}$ is derived from the 3-D MP matrix in estimating angles.

\begin{Proposition}\label{proposition1} 
In a wideband channel, for a given ${N_v^{{\rm{BS}}}}$, a lower-bound of the number of the BS antennas $N_t$ is given as
\begin{equation}
N_t \ge \max ({f_{N_v^{{\rm{BS}}},{N_s},1}},{f_{N_v^{{\rm{BS}}},{N_s},2}}),
\!\!\!\!\label{Proposition1}
\end{equation}
where  
\begin{equation}
\begin{array}{l}
\!\!\!\!\!{f_{N_v^{{\rm{BS}}},{N_s},1}}=\!\! \max (\!{N_v^{{\rm{BS}}}}{F_1}({L_1},\!{R_1}),\!{N_v^{{\rm{BS}}}}{F_1}({L_2},\!{R_2}),\\
\!{N_v^{{\rm{BS}}}}{F_1}({L_3},\!{R_3}),{N_v^{{\rm{BS}}}}{F_1}({L_4},\!{R_4}),\!{N_v^{{\rm{BS}}}}{F_1}({L_5},\!{R_5})),\!\!\!
\end{array}
\!\!\!\!\label{Proposition1_f1}
\end{equation}
and $({L_{{n_1}}},{R_{{n_1}}})$, ${n_1} = 1,2,3,4,5$ are defined as
\begin{equation}
\begin{array}{l}
{R_1} = \frac{{\sqrt {{\cal Q}} }}{{\sqrt {{\cal {N}}_s}  + \sqrt {{\cal Q}} }}(N_v^{{\rm{BS}}} + 1)\\
{L_1} = \frac{{\sqrt {{\cal Q}} (\sqrt {{\cal {N}}_s}  + \sqrt {{\cal Q}} )}}{{N_v^{{\rm{BS}}} + 1}}
\end{array},
\!\!\!\!\label{R1-L1}
\end{equation}
\begin{equation}
\begin{array}{l}
{R_2} = \frac{{\sqrt {{\cal Q}} }}{{\sqrt {{\cal Q}}  - \sqrt {{\cal {N}}_s} }}(N_v^{{\rm{BS}}} + 1)\\
{L_2} = \frac{{\sqrt {{\cal Q}} (\sqrt {{\cal Q}}  - \sqrt {{\cal {N}}_s} )}}{{N_v^{{\rm{BS}}} + 1}}
\end{array},
\!\!\!\!\label{R2-L2}
\end{equation}
and $({R_3} = \frac{{{\cal Q}}}{2}, {L_3} = 2)$, $({R_4} = 2, {L_4} = \frac{{{\cal Q}}}{2})$, and $({R_5} = N_v^{{\rm{BS}}}, {L_5} = \frac{{{\cal Q}}}{N_v^{{\rm{BS}}}})$. The value
${f_{N_v^{{\rm{BS}}},{N_s},2}}$ is defined as
\begin{equation}
\begin{array}{l}
\!\!{f_{N_v^{{\rm{BS}}},{N_s},2}} \!\!=\\
\!\!\left\{ \begin{array}{l}
\!\!\!\!\!{N_v^{{\rm{BS}}}}(\frac{1}{{({N_s} - 1)}} + 1),\ \ \!\!\!\!\!{\rm{if}}\ \!N_v^{{\rm{BS}}} \!-\! P \!+\! 1 \!\ge\! R \!\ge\! \frac{P}{4} \!+\! 1\\
\!\!\!\!\!{N_v^{{\rm{BS}}}}(\frac{1}{{({N_s} - Q + 1)}} \!+\! \frac{P}{Q} \!-\! 1),
\!{\rm{if}}\ \!\min (N_v^{{\rm{BS}}} \!-\!\! P\! \!+\! 1,\!\frac{P}{4} \!\!+\! 1) \!\ge\! R\! \ge\! 2\\
\!\!\!\!\!{N_v^{{\rm{BS}}}}(\frac{P}{{({N_s} - 1)}} + 1),\!{\rm{if}}\ \!N_v^{{\rm{BS}}} \!\!\ge\!\! R \!\!\ge\!\! \max (N_v^{{\rm{BS}}} \!\!-\! P \!\!+\! 1,\frac{P}{4} \!+\!\! 1,2)\\
\!\!\!\!\!{N_v^{{\rm{BS}}}}{F_{2,\rm{max}}}(R),\!{\rm{if}}\ \!P \!>\! R \!\ge\! \max (N_v^{{\rm{BS}}} \!-\! P \!+\! 1,2)
\end{array} \right.\!\!\!\!\!\!,
\end{array}
\!\!\!\!\!\label{Proposition1_f2}
\end{equation}
where
\begin{equation}
\begin{array}{l}
\!\!{F_{2,\rm{max}}}(R) \!\!=\\
\!\!\left\{ \begin{array}{l}
\!\!\!\!\max (\!{F_2}(\max (2,\!{R_6})),\!{F_2}(\max (2,\!{R_7})),\\
\!{F_2}(N_v^{{\rm{BS}}})),\; \!{\rm{if}} \; Q < \frac{{{N_s} + 1}}{2}\\
\!\!\!\!\max ({F_2}(\max (2,{R_6})),{F_2}(\max (2,{R_7}))),{\rm{if}}\; Q > \frac{{{N_s} + 1}}{2}\\
\!\!\!\!{F_2}({\frac{(N_v^{{\rm{BS}}} + 1)}{2}}),\; {\rm{if}}\ \!Q = \frac{{{N_s} + 1}}{2}
\end{array} \right.\!,
\end{array}
\!\!\!\!\!\label{F2max}
\end{equation}
and
\begin{equation}
{R_6} = \frac{{\sqrt {{N_s} - Q + 1} (N_v^{{\rm{BS}}} + 1)}}{{\sqrt {{N_s} - Q + 1}  + \sqrt Q }},
\!\!\!\!\label{R6}
\end{equation}

\begin{equation}
{R_7} = \frac{{\sqrt {{N_s} - Q + 1} (N_v^{{\rm{BS}}} + 1)}}{{\sqrt {{N_s} - Q + 1}  - \sqrt Q }}.
\!\!\!\!\label{R7}
\end{equation}
\end{Proposition}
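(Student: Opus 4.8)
The plan is to reduce ``satisfactory performance'' to the requirement that every 3-D MP matrix processed in Algorithm~\ref{algorithm_2} (and hence also in Algorithm~\ref{algorithm_1}) has rank exactly $P$, and then to minimize $N_t = N_h^{{\rm{BS}}}N_v^{{\rm{BS}}}$ over all admissible pencil sizes $L,R,Q$ subject to the resulting dimension inequalities. By Theorem~\ref{theorem1} and Theorem~\ref{theorem22}, once $r({{\bf{G}}_{{\rm{re}}}}(t + {t_\tau })) = r({{\bm{\mathcal G}}_u}({f_1})) = P$, the MDMP prediction error vanishes asymptotically; for the angular-time matrix ${{\bm{\mathcal G}}_u}({f_1})$ of Eq.~(\ref{3D-MP_angular_frequency_f1}) this rank condition is exactly Eq.~(\ref{NC_3D-MP_angular_frequency}) --- the three ``near'' inequalities $LR(Q-1) \ge P$, $LQ(R-1) \ge P$, $RQ(L-1) \ge P$ keep the row-selected factors feeding ${{\bf{\Psi }}_{\omega _\tau }}$ of Eq.~(\ref{RealMatrix_doppler_f1_kdelay}) and ${{\bf{\Psi }}_{\omega _\tau ,\phi }}$, ${{\bf{\Psi }}_{\omega _\tau ,\theta }}$ of Eq.~(\ref{RealMatrix_EOD_f1_kdelay})--(\ref{RealMatrix_AOD_f1_kdelay}) at full column rank $P$, while the ``far'' inequality $(N_h^{{\rm{BS}}}-L+1)(N_v^{{\rm{BS}}}-R+1)({N_s}-Q+1) \ge P$ keeps the corresponding right factor at full row rank. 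The counterpart for the angular-frequency matrix is Eq.~(\ref{NC_3D-MP}) with $({N_s},Q)$ replaced by $({N_f},K)$; since $N_f > N_s$ in a wideband channel, taking $K = Q$ shows that Eq.~(\ref{NC_3D-MP}) is automatically implied, so the governing constraints are Eq.~(\ref{NC_3D-MP_angular_frequency}). Because $N_t$ must be large enough to admit a feasible $(L,R,Q)$ both for the Doppler real matrix (whose governing near inequality is $LR(Q-1) \ge P$) and for the angle real matrices (governed by $LQ(R-1) \ge P$ and $RQ(L-1) \ge P$), the two minimizations produce the sub-bounds ${f_{N_v^{{\rm{BS}}},{N_s},1}}$ and ${f_{N_v^{{\rm{BS}}},{N_s},2}}$, whence $N_t \ge \max(\cdot,\cdot)$.

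For the Doppler sub-bound, the first step is to fix $N_v^{{\rm{BS}}}$ and regard $Q$ as a given pencil size. The far inequality reads $N_h^{{\rm{BS}}} \ge P/[(N_v^{{\rm{BS}}}-R+1)({N_s}-Q+1)] + L - 1$, and the near inequality $LR(Q-1) \ge P$ together with $L,R \ge 2$ is exactly $LR \ge {\cal Q}$ with ${\cal Q} = \max(P/(Q-1),4)$; absorbing the $Q$-dependence of the far term into ${\cal {N}}_s = \max(Q/({N_s}-Q+1),1)$ as in Definition~\ref{definition2}, the requirement becomes $N_h^{{\rm{BS}}} \ge {F_1}(L,R)$ with ${F_1}$ and its side conditions as in Definition~\ref{definition2}. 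It then remains to minimize ${F_1}(L,R)$ over $\{LR \ge {\cal Q},\ L \ge 2,\ 2 \le R \le N_v^{{\rm{BS}}}\}$. Since ${F_1}$ is strictly increasing in $L$ and strictly decreasing in $R$, the optimum lies on the active edge $LR = {\cal Q}$; substituting $L = {\cal Q}/R$ yields a strictly convex function of $R$ on $(0,N_v^{{\rm{BS}}}+1)$ whose stationary equation ${\cal {N}}_sR^2 = {\cal Q}(N_v^{{\rm{BS}}}+1-R)^2$ is a quadratic with the two roots $(L_1,R_1)$ and $(L_2,R_2)$ of Eq.~(\ref{R1-L1})--(\ref{R2-L2}) (an AM--GM balance of the two terms), while the box edges $L = 2$, $R = 2$ and $R = N_v^{{\rm{BS}}}$ supply the corner candidates $(L_3,R_3) = (2,{\cal Q}/2)$, $(L_4,R_4) = ({\cal Q}/2,2)$ and $(L_5,R_5) = ({\cal Q}/N_v^{{\rm{BS}}},N_v^{{\rm{BS}}})$. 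Evaluating $N_v^{{\rm{BS}}}{F_1}$ at these five candidates and collecting them as in Eq.~(\ref{Proposition1_f1}) delivers ${f_{N_v^{{\rm{BS}}},{N_s},1}}$.

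For the angle sub-bound the governing near inequalities are $RQ(L-1) \ge P$ and $LQ(R-1) \ge P$. Repeating the elimination while keeping the factor $R-1$ explicit leaves the single-variable objective ${F_2}(R)$ of Eq.~(\ref{definition_F2}) to be minimized under the constraints of Definition~\ref{definition2}. The argument splits on (i) whether $R$ can be taken $\ge P/4 + 1$, so that $P/(R-1) \le 4$ makes one near inequality automatic, and (ii) whether $R$ can exceed $N_v^{{\rm{BS}}}-P+1$, which controls the far factor $N_v^{{\rm{BS}}}-R+1$; the four resulting regimes are exactly the four lines of Eq.~(\ref{Proposition1_f2}). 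In the last regime, $P > R \ge \max(N_v^{{\rm{BS}}}-P+1,2)$, minimizing ${F_2}(R)$ produces the AM--GM balance points $R_6$ and $R_7$ of Eq.~(\ref{R6})--(\ref{R7}) together with the boundary $R = N_v^{{\rm{BS}}}$; whether $R_6,R_7$ fall in the admissible interval is decided by the sign of $({N_s}-Q+1)-Q$, i.e.\ by comparing $Q$ with $({N_s}+1)/2$, which is precisely the case split of Eq.~(\ref{F2max}). Multiplying the resulting optimized ${F_2}$ values by $N_v^{{\rm{BS}}}$ gives ${f_{N_v^{{\rm{BS}}},{N_s},2}}$, and combined with the previous paragraph this establishes Eq.~(\ref{Proposition1}).

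The main obstacle is not any individual estimate but the case bookkeeping. The quantity being minimized is non-convex --- a sum of a term increasing in $L$ and terms decreasing in $R$ and $Q$, coupled through the three bilinear near constraints --- so convexity cannot be invoked globally, and in each sub-region one must identify the active constraints and verify that the claimed candidate is indeed the relevant extremum of ${F_1}$ or ${F_2}$. A closely related technical point is justifying the reduction of the near inequalities of Eq.~(\ref{NC_3D-MP_angular_frequency}), together with the auxiliary dimension requirements created by the UMP real transformation and by the shuffling matrices ${{\bf{S}}_{{\rm{left}},\omega,h}}$ and ${{\bf{S}}_{{\rm{left}},\omega,v}}$, to the compact forms of Definition~\ref{definition2} built from ${\cal Q}$ and ${\cal {N}}_s$, which rests on comparing $Q-1$, $L-1$ and $R-1$ once $L,R \ge 2$. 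Finally, one should check that the integrality of $L,R,Q$ perturbs the continuous optima only by rounding that does not break the stated bound, and that the $N_f$-side conditions Eq.~(\ref{NC_3D-MP}) remain slack when $N_f > N_s$, so that they contribute no further terms.
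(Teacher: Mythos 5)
Your proposal follows essentially the same route as the paper's Appendix C: reduce the requirement to the pencil-size feasibility conditions of Eq.~(\ref{NC_3D-MP_angular_frequency}) (with the $N_f$-side conditions slack since $N_f>N_s$), obtain $N_h^{{\rm{BS}}}\ge F_1(L,R)$ for the Doppler matrix and locate the candidates $(L_1,R_1),(L_2,R_2)$ on the active edge $LR=\mathcal{Q}$ (your AM--GM/stationarity balance is exactly the paper's Lagrange condition) together with the three boundary points, then split the angle-estimation constraint into the four $R$-regimes of Eq.~(\ref{Proposition1_f2}) with stationary points $R_6,R_7$ and the $Q$ versus $(N_s+1)/2$ case distinction, and finally take the maximum of the two sub-bounds. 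Your write-up actually supplies more justification than the paper does (monotonicity/convexity of $F_1$ restricted to the edge, the reduction of the near inequalities to $\mathcal{Q}$ and $\mathcal{N}_s$, and the integrality caveat), but the decomposition and all key steps coincide.
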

\begin{proof} 
The proof can be found in Appendix C.
\end{proof}
Remarks: 
In Proposition \ref{proposition1}, for a given $N_v^{{\rm{BS}}}$, we derive the lower-bound of $N_t$.
Such a bound means that the MDMP method does not work if the BS antenna configuration cannot satisfy the lower-bound.
As a remedy, the lower-bound can be made smaller to satisfy the antenna configuration, by increasing the number of channel samples $N_s$. 
Note that if $N_h^{{\rm{BS}}}$ is given, the lower-bound of $N_t$ can also be derived. The detailed derivation is omitted.
\section{Numerical Results}\label{sec:simulation}

\par In this section, the simulation results of our proposed method are shown. 
The clustered delay line (CDL) channel model of 3GPP is adopted. The number of scattering clusters is 9. Each cluster contains 20 rays and the total number of propagation paths is 180. According to \cite{3GPPR16}, the $p$-th path delay is modeled as ${\tau _p}(t) = \tau {'_{p,0}} - \min \left\{ {\tau {'_{1,0}}, \cdots ,\tau {'_{P,0}}} \right\} + \tau '{'_p}(t)$, where $\tau '{'_p}(t)$ is the cluster delay, and $\tau {'_{p,0}} - \min \left\{ {\tau {'_{1,0}}, \cdots ,\tau {'_{P,0}}} \right\}$ denotes the initial delay and is modeled as a spatially random variable related to the correlation distance. The root mean square (RMS) angular spreads of AOD, EOD, AOA and EOA are $87.1^\circ $, $56.4^\circ $, $102.1^\circ $ and $65.3^\circ $, respectively. The main simulation parameters are listed in Table I. 
Considering the 3-D Urban Macro (3-D UMa) scenario, the UEs have certain velocity, e.g., 60 km/h and 120 km/h. The central frequency is 3.5 GHz and the bandwidth is 100 MHz, which is composed of 273 resource blocks (RBs) in frequency domain. The duration of a slot is 0.5 ms, which contains 14 OFDM symbols. One channel sample is available for each slot. 
\begin{table}[!bht]
\caption{The main simulation parameters.}
\label{table I}
\centering
\begin{tabular}{|l|p{12em}|}
\hline
Scenario & 3D Urban Macro (3D UMa)\\
\hline
Carrier frequency (GHz) & 3.5\\
\hline
Bandwidth (MHz) & 100 (273 RBs)\\
\hline
Subcarrier spacing (kHz) & 30\\
\hline
Number of UEs & 16\\
\hline
BS antenna configuration & $(\underline{M},\underline{N}, \underline{P}, \underline{M}_g, \underline{N}_g) = (8,8,1,1,1)$, $(d_h^{{\rm{tx}}},d_v^{{\rm{tx}}})=(0.5\lambda,0.8\lambda)$\\
\hline
UE antenna configuration & $(\underline{M},\underline{N}, \underline{P}, \underline{M}_g, \underline{N}_g) = (1,1,2,1,1)$, the polarization angles are $0^\circ $ and $90^\circ $\\
\hline
Delay Spread (ns) & 616\\
\hline
CSI delay (ms) & 12, 16\\
\hline
UEs speed (km/h) & 60, 120\\
\hline
\end{tabular}
\end{table}
The antenna configuration is denoted by $(\underline{M},\underline{N}, \underline{P}, \underline{M}_g, \underline{N}_g)$, where each antenna panel has $\underline{M}$ rows and $\underline{N}$ columns of antenna elements; $\underline{P}$ is the number of polarizations; $\underline{M}_g$ is the number of panels in a row and $\underline{N}_g$ is the number of panels in a column. 
In the horizontal and vertical directions, the antenna spacings are $0.5\lambda $ and $0.8\lambda $ respectively. 
The pencil sizes such as $L$, $R$, $K$ and $Q$, are set as 6, 5, 137 and 15, respectively.
The DL precoder is eigen-based zero-forcing (EZF) \cite{Sun10TSP}.  
Two performance metrics are evaluated, which are the DL spectral efficiency and the prediction error. The DL spectral efficiency is calculated by $\sum\limits_{u = 1}^{N_\text{UE}} {E\left\{ {{{\log }_2}(1 + {\rm{SINR}}_u)} \right\}} $, where $N_\text{UE}$ is the number of UEs,  ${\rm{SINR}}_u$ is the signal-to-noise ratio of the $u$-th UE, and the expectation is taken over frequency and time. The DL prediction error is defined by $10\log \left\{ {E\left\{ {\frac{{\left\| {{{{\bf{\hat H}}}_u} - {{\bf{H}}_u}} \right\|_F^2}}{{\left\| {{{\bf{H}}_u}} \right\|_F^2}}} \right\}} \right\}$, where ${{{\bf{\hat H}}}_u}$ and ${{\bf{H}}_u}$ are the predicted and exact channels, respectively, and the expectation is taken over time, frequency and UEs.


\begin{figure}[ht]
\centering
\subfigure[]{
\includegraphics[width=3.5in]{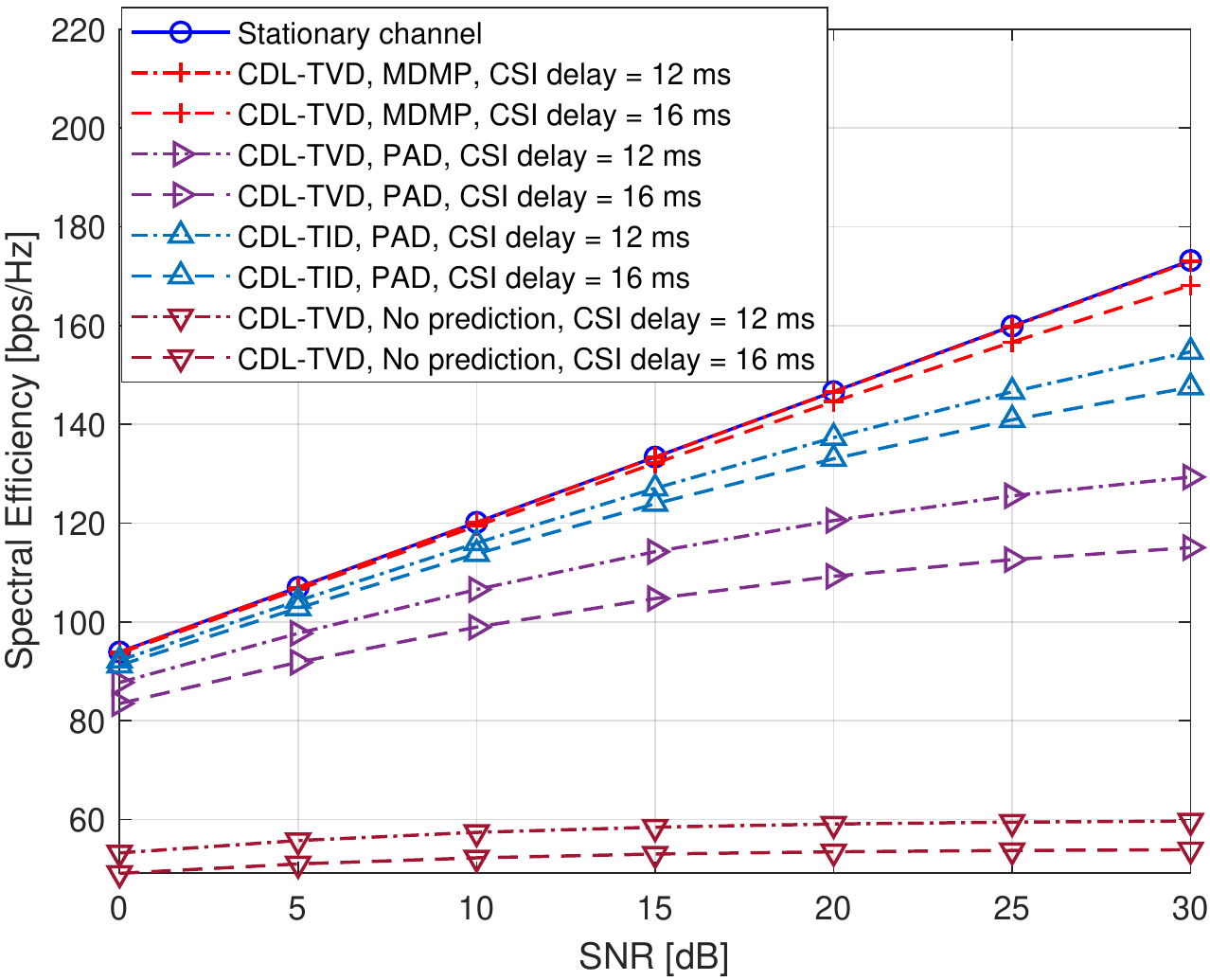}}
\\
\subfigure[]{
\includegraphics[width=3.5in]{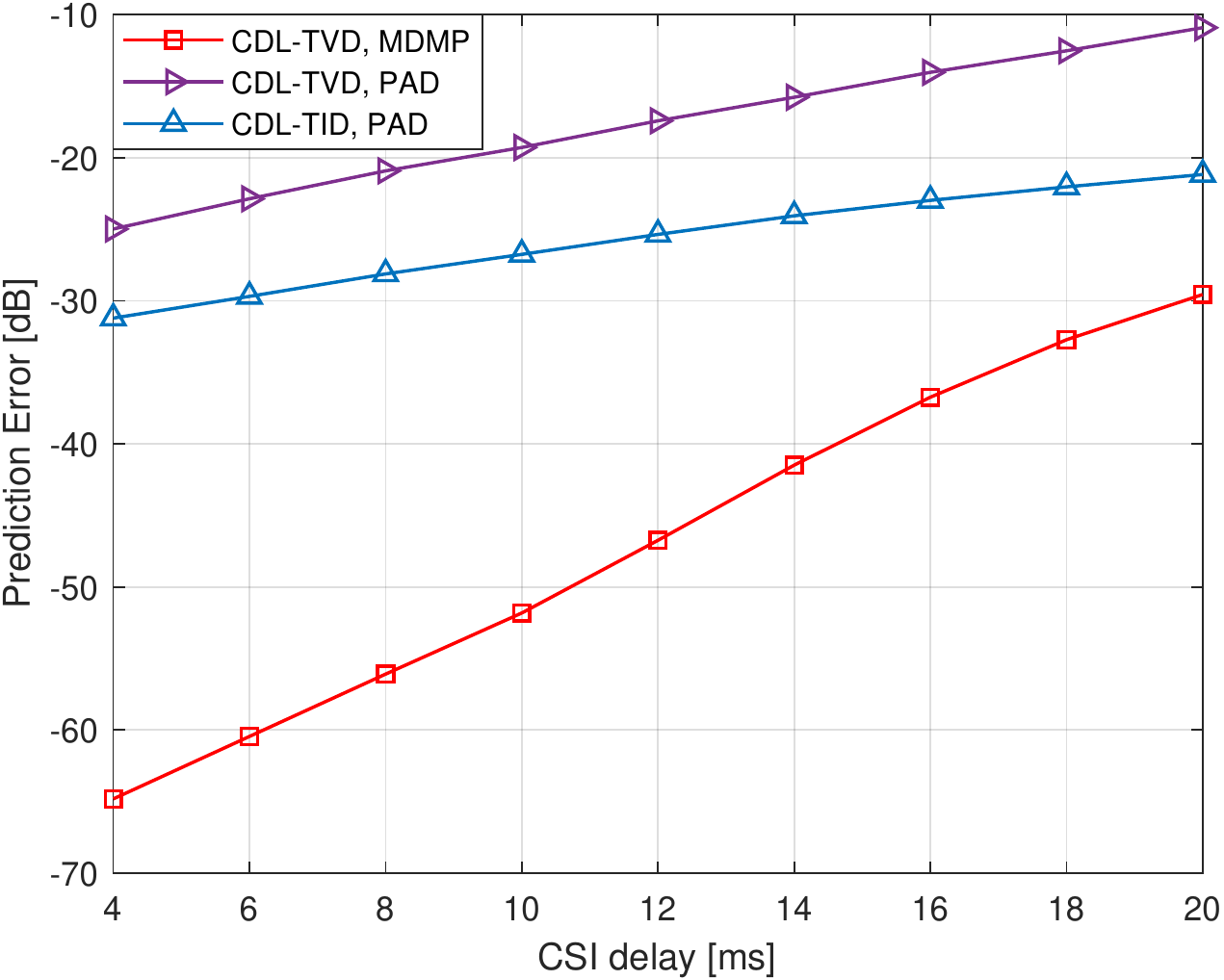}
}
\caption{(a) The spectral efficiency versus SNR and (b) the prediction error versus the CSI delay, the UEs move at 120 km/h, the BS has 64 antennas.}
\label{fig:2}
\end{figure}
\par Let CDL-TID and CDL-TVD denote the simulated CDL model with time-invariant and time-varying path delay, respectively. 
In Fig. \ref{fig:2}(a) and (b), we show the spectral efficiencies and prediction errors of different methods with the UEs moving at the speed of 120 km/h. In Fig. \ref{fig:2}(a), the CSI delay is relatively large, e.g., 12 ms and 16 ms, while in Fig. \ref{fig:2}(b), the CSI delay is within the range of 4 to 20 ms. 
The ideal case of the stationary setting is also shown as a reference curve labeled as ``Stationary channel", which is the upper-bound of the performance. The curves labeled with ``No prediction" mean channel predictions are not carried out. 
In Fig. \ref{fig:2}(a), the different performances of the PAD method in CDL-TID and CDL-TVD show that time-varying delay brings an observable decrease of the spectral efficiency. However, our proposed MDMP method is able to deal with the time-varying path delay, as it approaches the ideal case of the stationary setting in the scenario with a large CSI delay of 16 ms and high velocity level of 120 km/h.

\begin{figure}[!htb]
\centering
\includegraphics[width=3.5in]{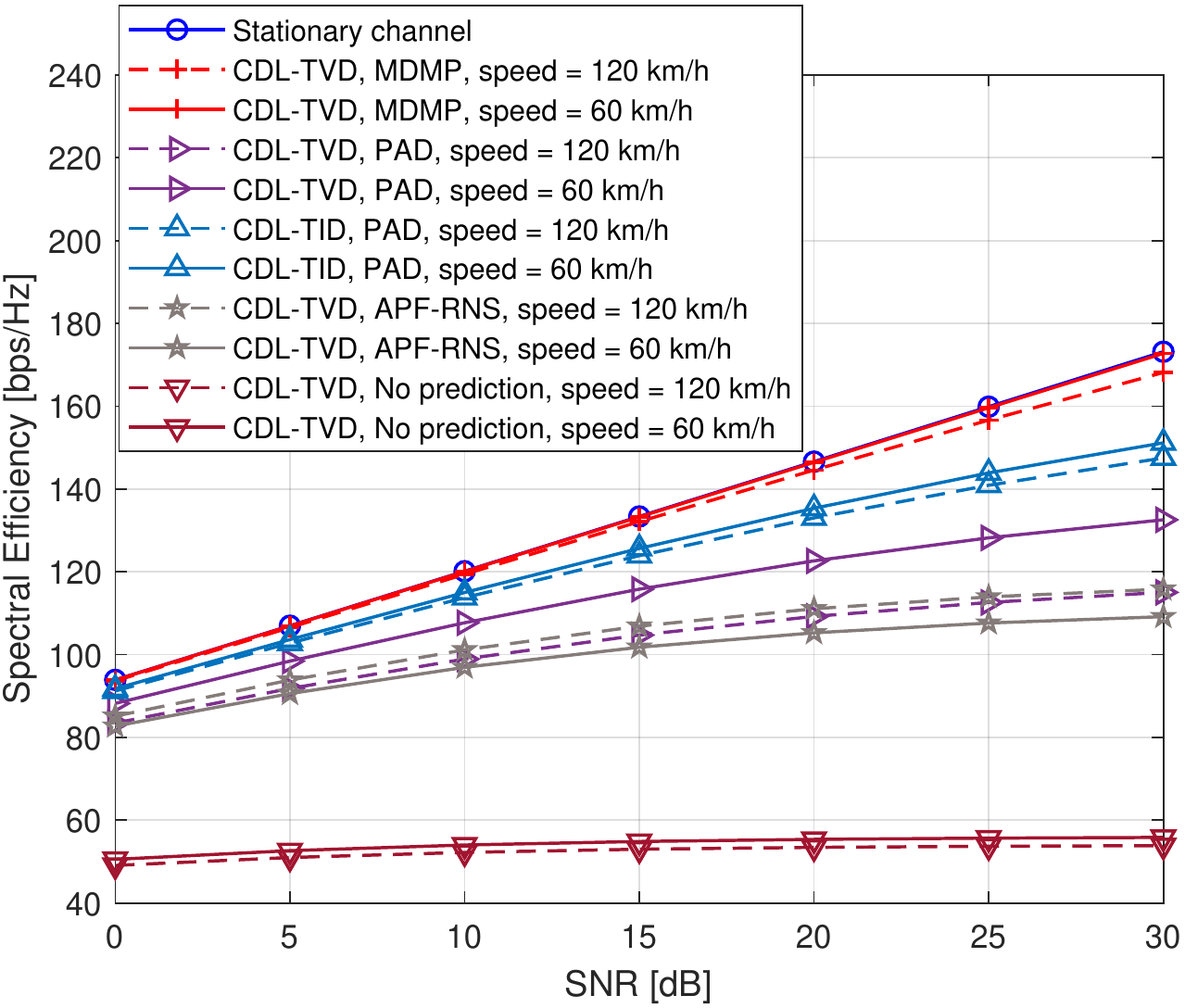}
\caption{The spectral efficiency versus SNR, 16 ms of CSI delay, the BS has 64 antennas.}
\label{fig:3}
\end{figure}
\begin{figure}[!htb]
\centering
\includegraphics[width=3.5in]{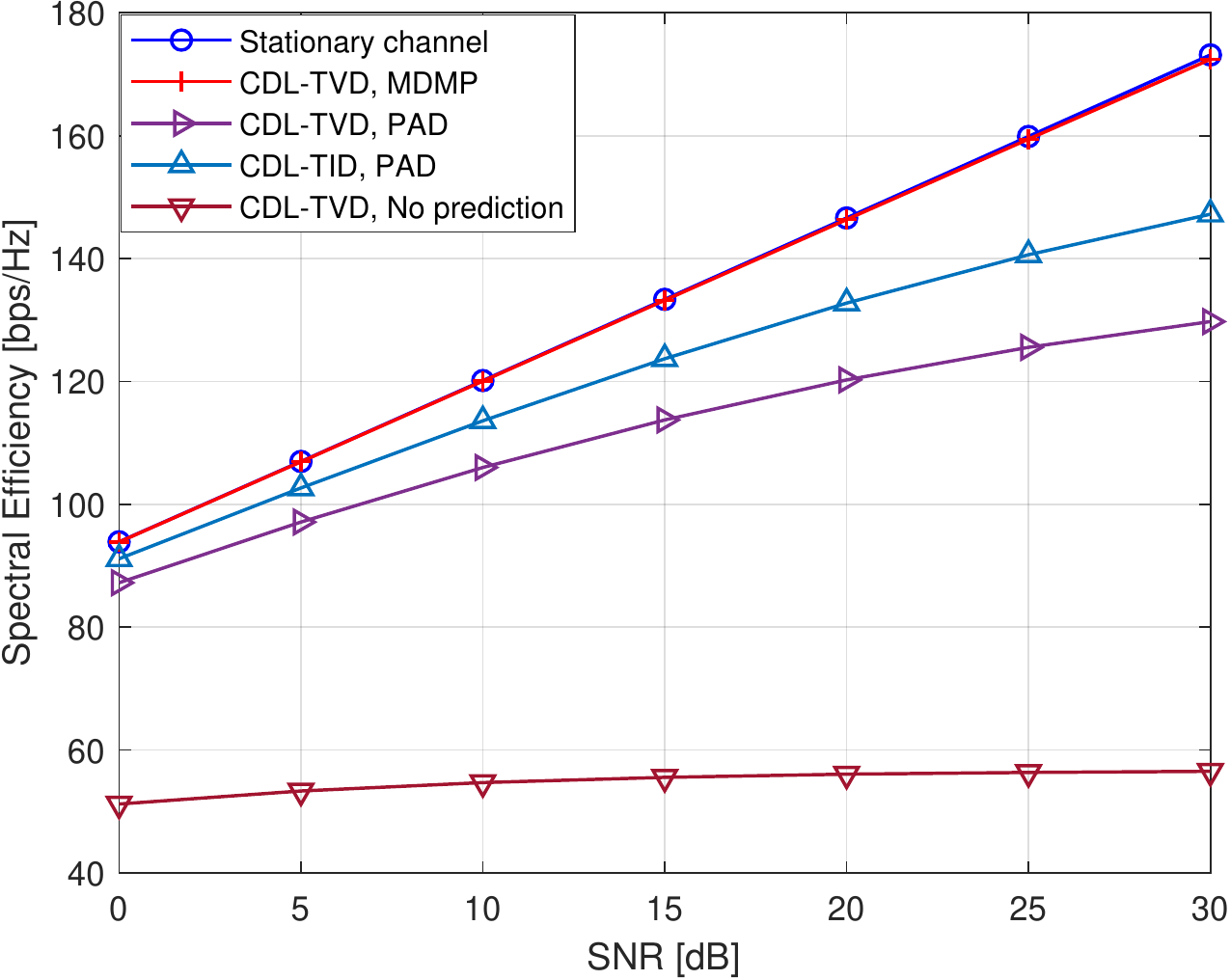}
\caption{The spectral efficiency versus SNR, 16 ms of CSI delay, the BS equipped with 64 antennas, multiple  velocity levels of UEs, i.e., four at 30 km/h, four at 60 km/h, four at 90 km/h and four at 120 km/h.}
\label{fig:4}
\end{figure}
\par Fig. \ref{fig:3} compares the spectral efficiencies of different methods with UEs moving at 60 km/h and 120 km/h, respectively. 
The curves labeled as ``APF-RNS", are the prediction performance of an adaptive and parameter-free recurrent neural structure (APF-RNS) \cite{Zhu19TCom}, which uses 19 historical sequential channel samples to predict the channel at the next moment.
In CDL-TVD, it can be observed that the performance of PAD method will decrease more obviously at higher velocity because of the faster varying delay and Doppler. However, our proposed MDMP method is close to the upper-bound of the stationary setting even in high-mobility scenarios.

\par Fig. \ref{fig:4} shows the spectral efficiencies of the MDMP and PAD methods with multiple-speed UEs, where every four UEs move at 30 km/h, 60 km/h, 90 km/h and 120 km/h, respectively. One may observe that our proposed MDMP method still performs well in this setting.

\begin{figure}[!htb]
\centering
\includegraphics[width=3.5in]{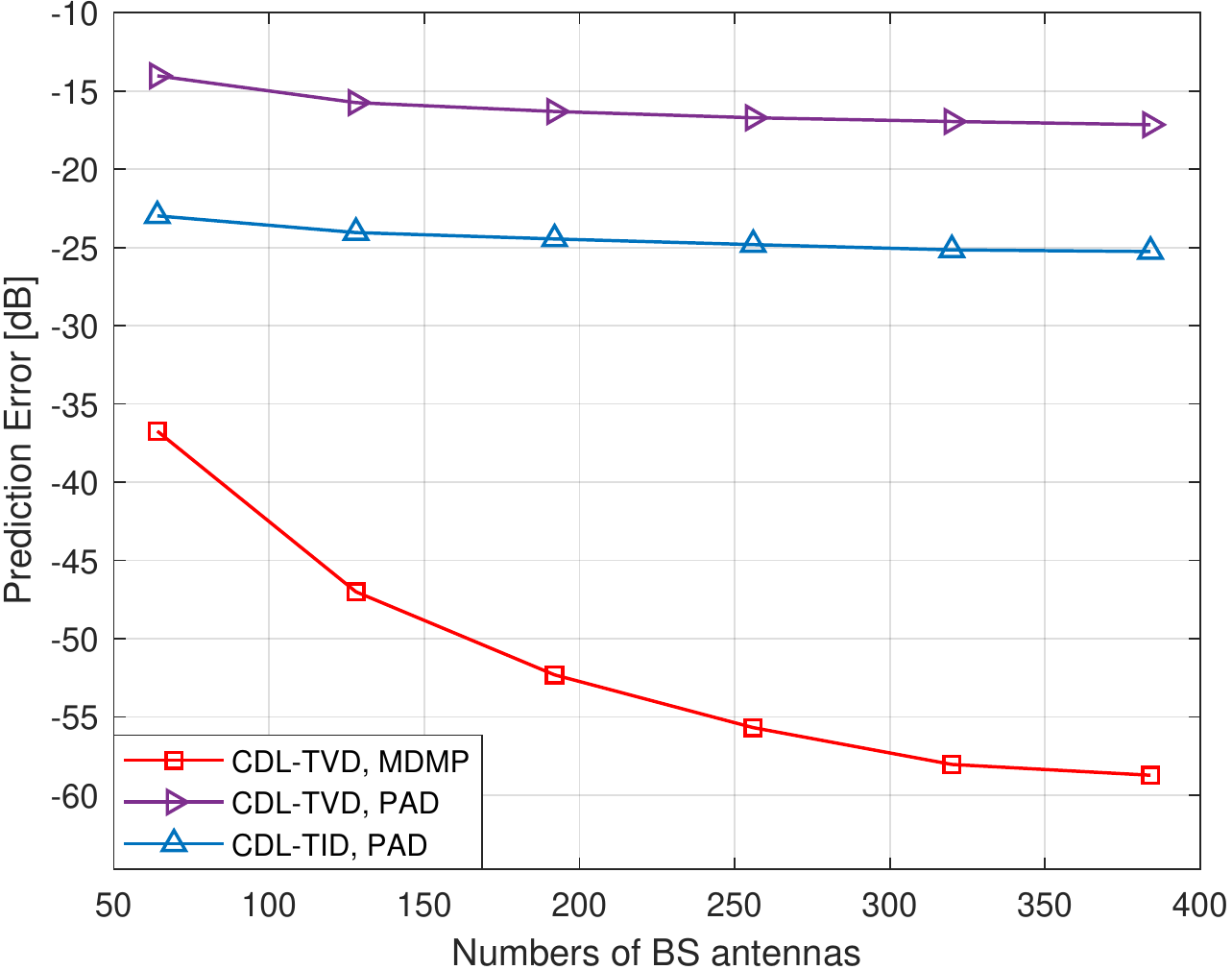}
\caption{The prediction error versus the numbers of BS antennas, UEs move at 120 km/h, 16 ms of CSI delay.}
\label{fig:5}
\end{figure}

\par Fig. \ref{fig:5} gives the prediction errors of MDMP and PAD methods with different numbers of the BS antennas, e.g., 64, 128, 192, 256, 320 and 384. The corresponding antenna configurations are $(8,8,1,1,1)$, $(8,8,1,1,2)$, $(8,8,1,1,3)$, $(8,8,1,1,4)$, $(8,8,1,1,5)$ and $(8,8,1,1,6)$, respectively. 
One may notice that the prediction error of our MDMP method decreases obviously with the increasing numbers of the BS antennas.
One may notice the error floor of the MDMP method. This is because the bandwith is fixed and not large enough.

\begin{figure}[!htb]
\centering
\includegraphics[width=3.5in]{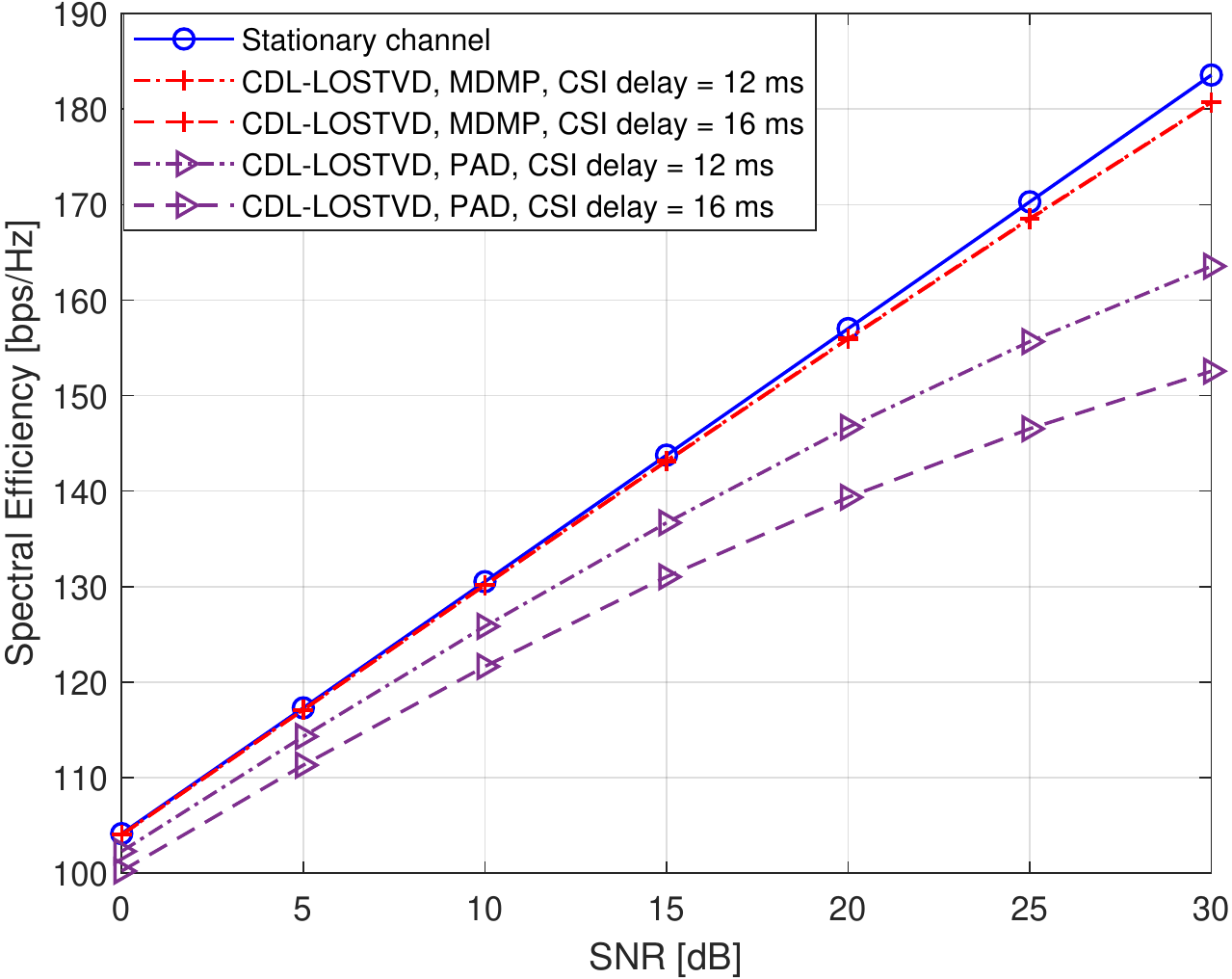}
\caption{The spectral efficiency versus SNR, UEs move at 120 km/h, the BS equipped with 64 antennas.}
\label{fig:6}
\end{figure}

\par By adding a line-of-sight (LOS) path, CDL-TVD is changed to a new model named CDL-LOSTVD, which contains 181 propagation paths. The RMS angular spreads of AOD, EOD, AOA and EOA are updated to $47.9^\circ $, $82.9^\circ $, $89.9^\circ $ and $84.6^\circ $, respectively. The spectral efficiencies of different methods in CDL-LOSTVD are shown in Fig. \ref{fig:6}. It is clear that our proposed MDMP method is capable of dealing with the effect of time-varying delay in this setting, and nearly approaches the upper-bound of the stationary setting. 
\section{Conclusion}\label{sec:conclusion}
In this paper, we addressed the challenge of mobility under the industrial channel model with time-varying path delay in massive MIMO. 
We proposed a novel multi-dimension matrix pencil channel prediction method, which estimates the angles, delay and Doppler simultaneously by exploiting the angular-time-domain and angular-frequency-domain structures of the wideband massive MIMO channel.
We also proposed a pairing method to associate the paths with the estimated angles, delay and Doppler by exploiting the super-resolution property of the angle estimation.
In the theoretical analysis, we proved that the proposed MDMP method asymptotically achieves the error-free performance with arbitrary CSI delay, providing that only two arbitrary samples are known.
We also derived a lower-bound of the number of the BS antennas for this method to work. By increasing the number of channel samples, the bound can be made lower.
Simulation results demonstrated that our proposed MDMP method in high-mobility scenarios with a large CSI delay, is very close to the performance of the stationary setting. 

\appendix
\subsection{Proof of Theorem 1}\label{appendix:arbitrary CSI delay}

For ease of exposition, we will first calculate the estimations of parameters with the observation channel samples, and then we prove that the estimations of parameters converge to the exact ones. 


Firstly, the channel parameters are estimated as follows: 
After an arbitrary CSI delay ${t_\tau }$, $t$ is updated as $t + {t_\tau }$. The channel in Eq. (\ref{Channel_UE_angular-frequency}) is rewritten as
\begin{equation}
{{\bf{\tilde H}}_u}(t + {t_\tau }) = {{\bf{A}}_u}{{\bf{C}}_u}{{\bf{\bar C}}_u}\left( {{{\bf{\bar B}}_u} \odot {{\bf{B}}_u}} \right) + {\bf{N}},
\!\!\!\!\label{Hu(t+delay)}
\end{equation}
where ${\bf{N}}$ is the sample noise matrix. The matrix ${{\bf{\bar C}}_u}$ is introduced by UE movement during ${t_\tau }$, and is expressed as
\begin{equation}
{{\bf{\bar C}}_u} = {\rm{diag}}\left\{ {\begin{array}{*{20}{c}}{e^{j2\pi {\omega _1}{t_\tau }},}&{\cdots,}&{e^{j2\pi {\omega _P}{t_\tau }}} \end{array}} \right\}.
\!\!\!\!\label{Cu_delay}
\end{equation}
The matrix ${{\bf{\bar B}}_u}$ contains multipath delay response vectors and is given by \begin{equation}
{{\bf{\bar B}}_u} = \left[ {\begin{array}{*{20}{c}}{{\bf{\bar b}}({f_1}),}&{\cdots,}&{{\bf{\bar b}}({f_{{N_f}}})} \end{array}} \right],
\!\!\!\!\label{bu_delay}
\end{equation}
where
\begin{equation}
{\bf{\bar b}}({f_{n_f}}) = {\left[ {\begin{array}{*{20}{c}}{e^{ - j2\pi {f_{n_f}} {k_{{\tau _1}}} {t_\tau }},}&{\cdots,}&{e^{ - j2\pi {f_{n_f}} {k_{{\tau _P}}} {t_\tau }}} \end{array}} \right]^T}\!\!.
\!\!\!\!\label{bf_delay}
\end{equation}
Then, the 3-D MP matrix is expressed as
\begin{equation}
\!{{\bf{\tilde G}}_u}(t + {t_\tau }) \!=\! {{\check{\bf{E}}}_{2}}(t + {t_\tau }){\bf{Y}}(t + {t_\tau }){{\check{\bf{F}}}_{2}}(t + {t_\tau }) + {\bf{N}_{{\bf{\tilde G}}_u}},
\!\!\!\!\label{Gu_delay}
\end{equation}
where ${\bf{N}_{{\bf{\tilde G}}_u}} \in {{\mathbb{C}}^{{\mu_1} \times {\mu_2}}}$ is generated by ${\bf{N}}$, and
\begin{equation}
{\bf{Y}}(t + {t_\tau }) = {\bf{Y}}{{\bf{\bar C}}_u}{\rm{diag}}\left\{ {{\bf{\bar b}}({f_1})} \right\},
\!\!\!\!\label{S_delay}
\end{equation}
\begin{equation}
\ \ \ \ \ \ \ \ \ {{\check{\bf{E}}}_{2}}(t + {t_\tau }) = {\left[\! {\begin{array}{*{20}{c}}{({{\check{\bf{E}}}_{1}})^T,}\!\!&\!{\cdots,}\!\!&\!{({{\check{\bf{E}}}_{1}}{{\bf{Z}}_{{t_\tau}}^{K - 1}})^T} \end{array}} \!\right]^T}\!,
\!\!\!\!\label{Eleft_2_delay}
\end{equation}
\begin{equation}
\!{{\check{\bf{F}}}_{2}}(t + {t_\tau }) = \left[\! {\begin{array}{*{20}{c}}{{\check{\bf{F}}}_{1}}\!&\!{\cdots,}\!&\!{{{\bf{Z}}_{{t_\tau}}^{{N_f} - K}}{{\check{\bf{F}}}_{1}}} \end{array}} \!\right],
\!\!\!\!\label{Eright_2_delay}
\end{equation}
and 
\begin{equation}
{{\bf{Z}}_{t_\tau}} = {{\bf{Z}}_\tau} {\rm{diag}}\left\{ {{\bf{\bar b}}({f_{{n_f}+1}})} \right\}{\left( {{\rm{diag}}\left\{ {{\bf{\bar b}}({f_{n_f}})} \right\}} \right)^*}.
\!\!\!\!\label{Diagonal_matrix_t_delay}
\end{equation}
As $N_h^{{\rm{BS}}} - P + 1 > L > P$, $N_v^{{\rm{BS}}} - P + 1 > R > P$ and ${N_f} - P + 1 > K > P$, $r({{\bf{\tilde G}}_u}(t + {t_\tau })) \ge P$ and ${{\bf{\tilde G}}_u}(t + {t_\tau })$ contains the angular and path delay information of the paths.
The SVD of ${{\bf{\tilde G}}_u}(t + {t_\tau })$ is ${{\bf{\tilde G}}_u}(t + {t_\tau }) = {{\bf{\tilde U}}_{u,t_\tau} }{{\bf{\tilde S}}_{u,t_\tau} }{\bf{\tilde V}}_{u,t_\tau} ^H$, where ${{\bf{\tilde U}}_{u,{t_\tau }}} = \left[ {{{\bf{\tilde u}}_{u,{t_\tau },1}}, \cdots ,{{\bf{\tilde u}}_{u,{t_\tau },{M_\tau }}}} \right]$, ${{\bf{\tilde S}}_{u,{t_\tau}} } = {\rm{diag}}\{ {{\tilde s}_{u,{t_\tau },1}}, \cdots ,{{\tilde s}_{u,{t_\tau },{M_\tau }}}\} $, and the diagonal elements of ${{\bf{\tilde S}}_{u,{t_\tau}} }$ are in non-increasing order. 
Without loss of generality, we let ${\mu_1} < {\mu_2}$, and ${M_\tau }={\mu_1}$.
We set a new threshold ${\gamma _2}$ to determine the number of the non-negligible paths as $P_1 \le P$.
Define the corresponding $P_1$ columns of ${{\bf{\tilde U}}_{u,{t_\tau}} }$ as ${{\bf{\tilde U}}_{u,s,t_\tau} }$.

We select $P_1$ columns from ${{\check{\bf{E}}}_{2}}(t + {t_\tau })$ to form ${{\check{\bf{E}}}_{s,2}}(t + {t_\tau })$. 
Since $r({{\bf{\tilde U}}_{u,s,t_\tau} })=r({{\check{\bf{E}}}_{s,2}}(t + {t_\tau }))={P_1}$, they may satisfy a mapping relationship as ${{\bf{\tilde U}}_{u,s,t_\tau} }={{\check{\bf{E}}}_{s,2}}(t + {t_\tau }){\bf{T}}_1$, where ${\bf{T}}_1 \in {{\mathbb{C}}^{{P_1} \times {P_1}}}$ is a full-rank matrix. Define the first $KRL-RL$ rows and the last $KRL-RL$ rows from ${{\bf{\tilde U}}_{u,s,t_\tau} }$ as ${{\bf{\tilde U}}_{u,s1}}$ and ${{\bf{\tilde U}}_{u,s2}}$, respectively, which satisfy
\begin{equation}
\!{{\bf{\tilde U}}_{u,s2}} \!-\! {{\bf{\tilde U}}_{u,s1}}{{\bm{\lambda }}_{t_\tau} } \!\!=\! {{\bf{J}}_1}{{\check{\bf{E}}}_{s,2}}(t \!+\! {t_\tau })({{\bf{T}}_1^{ - 1}}{{\bf{ Z}}_{t_\tau}}{\bf{T}}_1 \!-\! {{\bm{\lambda }}_{t_\tau} }) \!= \!{\bf{0}},
\!\!\!\!\label{Us1Us2_Relationship}
\end{equation}
where ${{\bm{\lambda }}_{t_\tau} } = {{\bf{\tilde U}}_{u,s1}}^\dag {{\bf{\tilde U}}_{u,s2}}={{\bf{T}}_1^{ - 1}}{{\bf{ Z}}_{t_\tau}}{\bf{T}}_1$. The matrix ${{\bm{\lambda }}_{t_\tau} }$ and the diagonal matrix ${{\bf{ Z}}_{t_\tau}}$ are similarity matrices, and share the same eigenvalues.
With the UMP method, ${{\bf{\tilde G}}_u}(t + {t_\tau })$, ${{\bf{Z}}_{t_\tau}}$, and ${{\bm{\lambda }}_{t_\tau} }$ are transformed to ${{{\bf{\tilde G}}}_{{\rm{re}}}}(t + {t_\tau })$, ${{{\bf{\hat Z}}}_{t_\tau}}$ and ${{\bf{\tilde \Psi }}_{{t_\tau }}}$, respectively, where ${{{\bf{\tilde G}}}_{{\rm{re}}}}(t + {t_\tau })$ is defined as
\begin{equation}
\begin{array}{l}
{{{\bf{\tilde G}}}_{{\rm{re}}}}(t + {t_\tau }) = {{{\bf{G}}}_{{\rm{re}}}}(t + {t_\tau }) + {{{\bf{N}}}_{{\rm{re}}}},
\end{array}
\!\!\!\!\label{Gre observation sample}
\end{equation}
and ${{{\bf{N}}}_{{\rm{re}}}}$ is generated from ${\bf{N}_{{\bf{\tilde G}}_u}}$. The diagonal matrix ${{{\bf{\hat Z}}}_{t_\tau}}$ and the matrix ${{\bf{\tilde \Psi }}_{{t_\tau }}}$ are introduced by
\begin{equation}
\begin{array}{l}
{{{\bf{\hat Z}}}_{t_\tau}} = {\rm{diag}}\{ \tan (\pi \Delta f({\tau _1}(t + {t_\tau }))), \cdots ,\\
\ \ \ \ \ \ \ \tan (\pi \Delta f({\tau _{{P_1}}}(t+{t_\tau })))\},
\end{array}
\!\!\!\!\label{Tan_t_delay}
\end{equation}
\begin{equation}
{{\bf{\tilde \Psi }}_{{t_\tau }}} = {\left[ {{\mathop{\rm Re}\nolimits} ({\bf{Q}}_{\mu_3}^H{{\bf{J}}_1}{\bf{Q}}_{\mu_1}^{}){{\bf{\tilde U}}_{s,t_\tau} }} \right]^\dag }{\mathop{\rm Im}\nolimits}({\bf{Q}}_{\mu_3}^H{{\bf{J}}_1}{\bf{Q}}_{\mu_1}^{}){{\bf{\tilde U}}_{s,t_\tau} },
\!\label{RealMatrix_CSIdelay}
\end{equation}
where ${{\bf{\tilde U}}_{s,t_\tau} }$ is the $P_1$ columns from ${{\bf{\tilde U}}_{t_\tau} }$, and ${{\bf{\tilde U}}_{t_\tau} }$ is calculated by the SVD of ${{{\bf{\tilde G}}}_{{\rm{re}}}}(t + {t_\tau })$: 
\begin{equation}
{{{\bf{\tilde G}}}_{{\rm{re}}}}(t + {t_\tau }) = {{\bf{\tilde U}}_{t_\tau} }{{\bf{\tilde S}}_{t_\tau} }{\bf{\tilde V}}_{t_\tau} ^H = {{\bf{ U}}_{t_\tau} }{{\bf{ S}}_{t_\tau} }{\bf{ V}}_{t_\tau} ^H + {{\bf{ U}}_{{\bf{N}}_{\rm{re}}} }{{\bf{ S}}_{{\bf{N}}_{\rm{re}}} }{\bf{ V}}_{{\bf{N}}_{\rm{re}}} ^H,
\!\label{Gre CSIdelay SVD}
\end{equation}
where ${{\bf{\tilde U}}_{t_\tau} } = \left[ {{{\bf{\tilde u}}_{{t_\tau },1}}, \cdots ,{{\bf{\tilde u}}_{{t_\tau },{\mu_1}}}} \right]$, ${{\bf{\tilde S}}_{{t_\tau}} } = {\rm{diag}}\{ {{\tilde s}_{{t_\tau },1}}, \cdots ,{{\tilde s}_{{t_\tau },{\mu_1}}}\} $, ${{\bf{ U}}_{t_\tau} } = \left[ {{{\bf{ u}}_{{t_\tau },1}}, \cdots ,{{\bf{ u}}_{{t_\tau },{\mu_1}}}} \right]$, and ${{\bf{ S}}_{{t_\tau}} } = {\rm{diag}}\{ {{ s}_{{t_\tau },1}}, \cdots ,{{ s}_{{t_\tau },{\mu_1}}}\} $.
Obviously, ${{\bf{\tilde \Psi }}_{{t_\tau }}}$ and ${{{\bf{\hat Z}}}_{t_\tau}}$ are similarity matrices.
With the EVD of ${{\bf{\tilde \Psi }}_{{t_\tau }}}$, we may estimate ${{{\bf{\hat Z}}}_{t_\tau}}$ by ${{{\bf{\hat Z}}}_{t_\tau}} = {\bf{W}}_{{t_\tau }} ^{ - 1}{{\bf{\tilde \Psi }}_{{t_\tau }}}{{\bf{W}}_{{t_\tau }}}$, where ${\bf{W}}_{{t_\tau }}$ contains the eigenvectors of ${{\bf{\tilde \Psi }}_{{t_\tau }}}$. After CSI delay ${t_\tau }$, the path delay ${\hat \tau _p}(t + {t_\tau })$ is estimated as
\begin{equation}
\begin{array}{l}
{\hat \tau _p}(t + {t_\tau }) = \frac{{{{\tan }^{ - 1}}\left( {{{\hat z}_{t_\tau,p}}(t + {t_\tau })} \right)}}{{\pi \Delta f}},
\end{array}
\!\!\!\!\label{Estimation_t_delay}
\end{equation}
where ${\hat z_{t_\tau,p}}(t+ {t_\tau })$ is the $p$-th estimated diagonal element of ${{\bf{\hat Z}}_{t_\tau}}$. The changing rate of path delay ${\hat k_{{\tau _p}}}$ is estimated by two different samples as
\begin{equation}
{\hat k_{{\tau _p}}} = \frac{{{{\tan }^{ - 1}}\left( {{{\hat z}_{t_\tau,p}}({t_2} + {t_\tau })} \right) - {{\tan }^{ - 1}}\left( {{{\hat z}_{t_\tau,p}}({t_1} + {t_\tau })} \right)}}{{\pi \Delta f({t_2} - {t_1})}}\!.
\!\!\!\!\label{Estimation_t_kdelay}
\end{equation}
The other parameters, i.e., Doppler, EODs and AODs can also be estimated by the similar estimation procedure of ${\hat \tau _p}(t + {t_\tau })$, and the details are omitted. 
So far, the parameters have been estimated with the observation samples.
Then, we derive the asymptotic performance of parameter estimations under Assumption 1. 

When the number of the BS antennas and the bandwidth are large, the correlation matrix of ${{{\bf{\tilde G}}}_{{\rm{re}}}}(t + {t_\tau })$ is calculated by
\begin{equation}
\begin{array}{l}
\mathop {\lim }\limits_{N_h^{{\rm{BS}}},N_v^{{\rm{BS}}},{N_f} \to \infty } {{\bm{R}}_{{{{\bf{\tilde G}}}_{{\rm{re}}}}}} = E\{ {{{\bf{\tilde G}}}_{{\rm{re}}}}(t + {t_\tau }){{{{\bf{\tilde G}}}_{{\rm{re}}}}^H}(t + {t_\tau })\}  \\
\ \ \ \ \ \ \ \ \ \ \ \ \ \ \ \ \ \ \ \ \ \ \ \ \ \ = {{\bm{R}}_{{{\bf{G}}_{{\rm{re}}}}}} + {\sigma ^2}{{\bf{I}}_{{\mu_1}}},
\end{array}
\!\!\!\!\label{Gre_correlation matrix}
\end{equation}
where ${{\bm{R}}_{{{\bf{G}}_{{\rm{re}}}}}} = E\{ {{\bf{G}}_{{\rm{re}}}}(t + {t_\tau }){{\bf{G}}_{{\rm{re}}}^H}(t + {t_\tau })\} $, ${\sigma ^2}{{\bf{I}}_{{\mu_1}}} = E\{ {{\bf{N}}_{{\rm{re}}}}{\bf{N}}_{{\rm{re}}}^H\} $, and $E\{  \cdot \} $ is the expectation over the BS antennas and bandwidth. By performing the SVD of ${{\bm{R}}_{{{{\bf{\tilde G}}}_{{\rm{re}}}}}}$, we obtain 
\begin{equation}
\mathop {\lim }\limits_{N_h^{{\rm{BS}}},N_v^{{\rm{BS}}},{N_f} \to \infty } \!\!\!\!\!\!\!{{\bf{\tilde U}}_{{t_\tau }}}{{\bf{\tilde \Sigma }}_{{t_\tau }}}{\bf{\tilde U}}_{{t_\tau }}^H \!=\! {{\bf{U}}_{{t_\tau }}}{{\bf{\Sigma }}_{{t_\tau }}}{\bf{U}}_{{t_\tau }}^H \!+\! {{\bf{U}}_{{t_\tau }}}{{\bf{\Sigma }}_{{\bf{N}}_{\rm{re}}}}{\bf{U}}_{{t_\tau }}^H,
\label{Gre_correlation SVD}
\end{equation}
where ${{\bf{\Sigma }}_{{\bf{N}}_{\rm{re}}}}={\sigma ^2}{{\bf{I}}_{{\mu_1}}}$ and
\begin{equation}
{{\bf{\tilde \Sigma }}_{{t_\tau }}}={\rm{diag}}\{ {\begin{array}{*{20}{c}}{\lvert{{\tilde s}_{{t_\tau },1}}\rvert^2,}&{\cdots,}&{\lvert{{\tilde s}_{{t_\tau },{\mu_1}}}\rvert^2} \end{array}}\},
\!\!\!\!\label{observation Sigma: delay}
\end{equation}
\begin{equation}
{{\bf{\Sigma }}_{{t_\tau }}} = {\rm{diag}}\{ {\begin{array}{*{20}{c}}{\lvert{s_{{t_\tau },1}}\rvert^2,}&{\cdots,}&{\lvert{s_{{t_\tau },{\mu_1}}}\rvert^2} \end{array}}\}.
\!\!\!\!\label{Sigma: delay}
\end{equation}
Under Assumption 1, we may obtain
\begin{equation}
\mathop {\lim }\limits_{N_h^{{\rm{BS}}},N_v^{{\rm{BS}}},{N_f} \to \infty } {{{\bf{\tilde u}}}_{{t_\tau },n}} = {{\bf{u}}_{{t_\tau },n}}{e^{j{\vartheta _n}}},
\!\!\!\!\label{Gre_SVD: U}
\end{equation}
where ${\vartheta _n}  \in [0,2\pi ]$, and
\begin{equation}
\mathop {\lim }\limits_{N_h^{{\rm{BS}}},N_v^{{\rm{BS}}},{N_f} \to \infty } \!{{\bf{\tilde \Sigma }}_{{t_\tau }}} \!=\! \mathop {\lim }\limits_{N_h^{{\rm{BS}}},N_v^{{\rm{BS}}},{N_f} \to \infty } ({{\bf{\Sigma }}_{{t_\tau }}} \!+\! {{\bf{\Sigma }}_{{\bf{N}}_{\rm{re}}}}) = {{\bf{\Sigma }}_{{t_\tau }}}.
\!\!\!\!\label{Gre_SVD: S}
\end{equation}
In other words, the $n$-th diagonal elements in ${{\bf{\tilde \Sigma }}_{{t_\tau }}}$ and ${{\bf{\Sigma }}_{{t_\tau }}}$ satisfy: $\mathop {\lim }\limits_{N_h^{{\rm{BS}}},N_v^{{\rm{BS}}},{N_f} \to \infty } {\lvert{{\tilde s}_{{t_\tau },{n}}}\rvert^2} = {\lvert{{s}_{{t_\tau },{n}}}\rvert^2 }$.
According to Eq. (\ref{Gre CSIdelay SVD}), we may obtain $\mathop {\lim }\limits_{N_h^{{\rm{BS}}},N_v^{{\rm{BS}}},{N_f} \to \infty } {{\bf{\tilde S}}_{t_\tau} } = {{\bf{S}}_{t_\tau} }$ and $\mathop {\lim }\limits_{N_h^{{\rm{BS}}},N_v^{{\rm{BS}}},{N_f} \to \infty } {P_1}=P$.
According to Eq. (\ref{Gre CSIdelay SVD}), we denote the $P$ columns of ${{{\bf{ U}}}_{{t_\tau }}}$ as ${{{\bf{ U}}}_{{s, t_\tau}}}$ and may obtain

\begin{equation}
\mathop {\lim }\limits_{N_h^{{\rm{BS}}},N_v^{{\rm{BS}}},{N_f} \to \infty } {{{\bf{\tilde U}}}_{{s, t_\tau}}} = {{{\bf{ U}}}_{{s, t_\tau}}}{\bf{\Gamma }},
\!\!\!\!\label{Gre_SVD: Us}
\end{equation}
where ${\bf{\Gamma }} = \left[ {{{\bf{\chi }}_1}, \cdots ,{{\bf{\chi }}_{{P}}}} \right] \in {{\mathbb{C}}^{P \times P}}$ and ${{\bf{\chi }}_{{n}}}$ is a ${P \times 1}$ unitary vector with the $n$-th element being ${e^{j{\vartheta _n}}}$. 
The real matrix related to path delays asymptotically converges to
\begin{equation}
\begin{array}{l}
\!\!\!\!\!\!\mathop {\lim }\limits_{N_h^{{\rm{BS}}},N_v^{{\rm{BS}}},{N_f} \to \infty }\!\!\!\!{{\bf{\tilde \Psi }}_{{t_\tau }}}\!=\!\!\!\mathop {\lim }\limits_{N_h^{{\rm{BS}}},N_v^{{\rm{BS}}},{N_f} \to \infty } \!{\left[{{\rm{Re}}({\bf{Q}}_{\mu_3}^H{{\bf{J}}_1}{\bf{Q}}_{\mu_1}^{}){{{\bf{\tilde U}}}_{s,{t_\tau }}}} \!\right]^\dag }\\
 \ \ \ \ \ \ \ \ \ \ \ \ \ \ \ \ \ \ \ \ \ \ \ \ {\rm{Im}}({\bf{Q}}_{\mu_3}^H{{\bf{J}}_1}{\bf{Q}}_{\mu_1}^{}){{{\bf{\tilde U}}}_{s,{t_\tau }}}\\
\ \ \ \ \ \ \ \ \ \ \ \ \ \ \ \ \ \ \ = \mathop {\lim }\limits_{N_h^{{\rm{BS}}},N_v^{{\rm{BS}}},{N_f} \to \infty } {\bf{\Gamma }}^{ - 1}{{\bf{\hat \Psi }}_{t_\tau}}{{\bf{\Gamma }}},
\end{array}
\!\label{RealMatrix_delay: Observation}
\end{equation}
where ${{\bf{\hat \Psi }}_{t_\tau}}$ is generated by the sample without noise:
\begin{equation}
\begin{array}{l}
{{\bf{\hat \Psi }}_{t_\tau}} = {\left[ {{\rm{Re}}({\bf{Q}}_{\mu_3}^H{{\bf{J}}_1}{\bf{Q}}_{\mu_1}^{}){{{\bf{U}}}_{s,{t_\tau }}}} \right]^\dag }{\rm{Im}}({\bf{Q}}_{\mu_3}^H{{\bf{J}}_1}{\bf{Q}}_{\mu_1}^{}){{{\bf{U}}}_{s,{t_\tau }}}
\end{array}.
\!\label{RealMatrix_delay: accurate}
\end{equation}
Obviously, ${{{\bf{\tilde \Psi }}}_{t_\tau}}$ and ${{\bf{\hat \Psi }}_{t_\tau}}$ are similar and share the same eigenvalues. In other words, despite the observation noise, the estimation of the $p$-th path delay ${\hat \tau _p}(t+{t_\tau })$ asymptotically converges to the exact value, i.e.,
\begin{equation}
\mathop {\lim }\limits_{N_h^{{\rm{BS}}},N_v^{{\rm{BS}}},{N_f} \to \infty } {\hat \tau _p}(t+{t_\tau }) = { \tau _p}(t+{t_\tau }).
\!\!\!\!\label{CSI delay: delay estimation}
\end{equation}
Likewise,
\begin{equation}
\mathop {\lim }\limits_{N_h^{{\rm{BS}}},N_v^{{\rm{BS}}},{N_f} \to \infty } {{\bf{\hat \Omega }}(t + {t_\tau })} = {{\bf{\Omega }}(t + {t_\tau })}.
\!\!\!\!\label{CSI delay: parameters estimation}
\end{equation}
With the estimated parameters in ${{\bf{\hat \Omega }}(t + {t_\tau })}$, the channel after ${t_\tau}$ is reconstructed as ${{\bf{\hat H}}_u}(t + {t_\tau })$. Then, we may obtain
\begin{equation}
\mathop {\lim }\limits_{N_h^{{\rm{BS}}},N_v^{{\rm{BS}}},{N_f} \to \infty }\!\!\!\!\!\! \frac{{\left\| {{{{{\bf{\hat h}}}_u}(t + {t_\tau })} \!-\! {{{\bf{h}}_u}(t + {t_\tau })}} \right\|_2^2}}{{\left\| {{{{\bf{h}}_u}(t + {t_\tau })}} \right\|_2^2}} = 0.
\!\!\label{CSI delay: channel error-free}
\end{equation}
\begin{prf}
Thus, Theorem 1 is proved.
\end{prf}

\subsection{Proof of Theorem 2}


We first derive the least number of samples below. 
The 2-D MP matrix ${{{\bm{\tilde \mathcal G}}_u}({n_s},{f_1})}$ generated by the observation samples, is defined as
\begin{equation}
{{{\bm{\tilde \mathcal G}}_u}({n_s},{f_1})} = {{{\bm{\mathcal G}}_u}({n_s},{f_1})} + {{\bf{N}}_{{\bm{\mathcal G}},{n_s}}},
\!\!\!\!\label{Gu 2D observation}
\end{equation}
where ${{\bf{N}}_{{\bm{\mathcal G}},{n_s}}} \in {{\mathbb{C}}^{LR \times (N_h^{{\rm{BS}}} - L + 1)(N_v^{{\rm{BS}}} - R + 1)}}$ is the noise matrix.
The 3-D MP matrix ${{{\bm{\tilde \mathcal G}}_u}({f_1})}$ is introduced by
\begin{equation}
{{{\bm{\tilde \mathcal G}}_u}({f_1})} = {{{\bm{\mathcal G}}_u}({f_1})} + {{\bf{N}}_{{\bm{\mathcal G}}}},
\!\!\!\!\label{Gu 3D observation}
\end{equation}
where ${{\bf{N}}_{{\bm{\mathcal G}}}} \in {{\mathbb{C}}^{{{\omega}_{\mu_1}} \times {{\omega}_{\mu_2}}}}$. 
Since $N_h^{{\rm{BS}}} - P + 1 > L > P$ and $N_v^{{\rm{BS}}} - P + 1 > R > P$, $r({{{\bm{\mathcal G}}_u}({n_s},{f_1})}) = P$. Therefore, ${{{\bm{\mathcal G}}_u}({n_s},{f_1})}$ and ${{{\bm{\tilde \mathcal G}}_u}({n_s},{f_1})}$ contain the angular information.
Based on the expression of ${{{\bm{\mathcal G}}_u}({n_s},{f_1})}$ in Eq. (\ref{Gu_f1_nt}), $r({{\check{\bf{E}}}_{1}}) = r({{\bf{Y}}_\omega}) = r({\bf{Z}}_{\omega_\tau}) = r({{\check{\bf{F}}}_{1}}) = P$.

Based on Eq. (\ref{Eleft2_doppler}), by applying the inequalities
\begin{equation}
r(A) \!+\! r(B) \!-\! {n_1} \!\le\! r(AB) \!\le\! \min(r(A),\!r(B)),
\!\!\!\!\label{inequality1}
\end{equation}
\begin{equation}
r(A + B) \le r(A,B) \le r(A) + r(B),
\!\!\!\!\label{inequality2}
\end{equation}
with $A \in {{\mathbb{C}}^{{m_1} \times {n_1}}}$ and $B \in {{\mathbb{C}}^{{n_1} \times {m_2}}}$, the rank of ${{\check{\bf{E}}}_{\omega,2}}$ satisfies
\begin{equation}
\begin{array}{l}
r({{\check{\bf{E}}}_{\omega,2}}) \ge r({{\check{\bf{E}}}_{1}} +  \cdots  + {{\check{\bf{E}}}_{1}}{\bf{Z}}_{\omega_\tau}^{Q - 1})\\
\ \ \ \ \ \ \ \ \ \ \ge r({{\check{\bf{E}}}_{1}}) + r((\sum\limits_{q = 1}^Q {{\bf{Z}}_{\omega_\tau}^{q - 1}} )) - P\\
\ \ \ \ \ \ \ \ \ \  = P,
\end{array}
\!\!\!\!\label{inequality1_Eleft}
\end{equation}
and $r({{\check{\bf{E}}}_{\omega,2}}) \le \min (LRQ,P) = P$, since ${{\check{\bf{E}}}_{\omega,2}} \in {{\mathbb{C}}^{LRQ \times P}}$.
Consequently, $r({{\check{\bf{E}}}_{\omega,2}}) = P$. 
Likewise, $r({{\check{\bf{F}}}_{\omega,2}}) = r({{\bf{Y}}_\omega}) = P$. 
Based on Eq. (\ref{inequality1}), $r({{\bm{\mathcal G}}_u}({f_1}))$ satisfies: 
\begin{equation}
\begin{array}{l}
r({{\bm{\mathcal G}}_u}({f_1})) \!\le\! \min(r({{\check{\bf{E}}}_{\omega,2}}),r({{\bf{Y}}_\omega}),r({{\check{\bf{F}}}_{\omega,2}})) \!= \!P,
\end{array}
\!\!\!\!\label{inequality1_3D-MP_f1_delay}
\end{equation}
\begin{equation}
\ \ \ \ \ r({{\bm{\mathcal G}}_u}({f_1})) \!\ge\! r({{\check{\bf{E}}}_{\omega,2}}) \!+\! r({{\bf{Y}}_\omega}) \!+\! r({{\check{\bf{F}}}_{\omega,2}}) \!-\! 2P \!\!=\! P.
\!\!\!\!\label{inequality2_3D-MP_f1_delay}
\end{equation}
Thus, $r({{\bm{\mathcal G}}_u}({f_1})) = P$,  which still holds if ${N_s} = Q = 2$. 

Likewise, we may derive that the smallest number of subcarriers satisfies ${N_f}=K=2$.
The estimations of parameters can be calculated by setting ${N_s}=Q=2$ and ${N_f}=K=2$. The details are omitted.
Following the similar proof procedure in Appendix \ref{appendix:arbitrary CSI delay} between Eq. (\ref{Gre_correlation matrix}) and Eq. (\ref{CSI delay: delay estimation}), we may prove 
\begin{equation}
\mathop {\lim }\limits_{N_h^{{\rm{BS}}},N_v^{{\rm{BS}}} \to \infty} {{\bf{\hat \Omega }}(t + {t_\tau })} = {{\bf{\Omega }}(t + {t_\tau })}.
\!\!\!\!\label{Two samples: parameters estimation}
\end{equation}
In other words, we may obtain
\begin{equation}
\mathop {\lim }\limits_{N_h^{{\rm{BS}}},N_v^{{\rm{BS}}} \to \infty}\!\!\!\!\!\! \frac{{\left\| {{{{{\bf{\hat h}}}_u}(t + {t_\tau })} \!-\! {{{\bf{h}}_u}(t + {t_\tau })}} \right\|_2^2}}{{\left\| {{{{\bf{h}}_u}(t + {t_\tau })}} \right\|_2^2}} = 0.
\!\!\label{Two samples: channel error-free}
\end{equation}
\begin{prf}
Thus, Theorem 2 is proved.
\end{prf}

\subsection{Proof of Proposition 1}


During the procedure of Doppler estimation, according to Eq. (\ref{NC_3D-MP_angular_frequency}), $N_h^{{\rm{BS}}}$ should satisfy $N_h^{{\rm{BS}}} \ge {F_1}(L,R)$.
Under the condition $LR = {\cal Q}$, we define a Lagrange function as
\begin{equation}
F(L,R,\lambda ) = {F_1}(L,R) + \lambda (LR - {\cal Q}).
\!\!\!\!\label{F1_lambda}
\end{equation}
Letting $\frac{{\partial F(L,R,\lambda )}}{{\partial L}} = \frac{{\partial F(L,R,\lambda )}}{{\partial R}} = \frac{{\partial F(L,R,\lambda )}}{{\partial \lambda }} = 0$, we obtain the two extreme points of $R$ and $L$: $(R_1,L_1)$ and $(R_2,L_2)$, which are shown in Eq. (\ref{R1-L1}) and Eq. (\ref{R2-L2}).
If ${\cal Q} = {\cal {N}}_s$, ${R_2}$ and ${L_2}$ are ignored. In addition, the condition of $L$ and $R$ in Eq. (\ref{Condition L R F1}) covers three extra extreme points e.g., $({R_3} = \frac{{{\cal Q}}}{2}, {L_3} = 2)$, $({R_4} = 2, {L_4} = \frac{{{\cal Q}}}{2})$, and $({R_5} = N_v^{{\rm{BS}}}, {L_5} = \frac{{{\cal Q}}}{N_v^{{\rm{BS}}}})$. 
By substituting these extreme points into ${F_1}(L,R)$, $N_t$ satisfies $N_t \ge {f_{N_v^{{\rm{BS}}},{N_s},1}}$, where ${f_{N_v^{{\rm{BS}}},{N_s},1}}$ is shown in Eq. (\ref{Proposition1_f1}).

Up to now, the sub-bound of  Doppler estimation is derived. Next, we derive the sub-bound of angle estimation.

In the procedure of angle estimation, $N_h^{{\rm{BS}}}$ and ${N_s}$ satisfy 
\begin{equation}
\begin{array}{l}
(N_h^{{\rm{BS}}} \!-\! L \!+\! 1)({N_s} \!-\! Q \!+\! 1) \!\ge\! \max \left(\! {\frac{P}{{N_v^{{\rm{BS}}} - R + 1}},1}\! \right).
\end{array}
\!\!\!\!\label{Inequality_Nh_NT}
\end{equation}
Based on the different sizes of $R$, the condition of $R$ in Eq. (\ref{Condition L R F2}) ($LQ \ge \max \left( {\frac{P}{{R - 1}},4} \right)$, and $N_v^{{\rm{BS}}} \ge R \ge 2$) is divided into four different conditions, i.e., i) $N_v^{{\rm{BS}}} - P + 1 \ge R \ge \frac{P}{4} + 1$; ii) $\min (N_v^{{\rm{BS}}} - P + 1,\frac{P}{4} + 1) \ge R \ge 2$; iii) $N_v^{{\rm{BS}}} \ge R \ge \max (N_v^{{\rm{BS}}} - P + 1,\frac{P}{4} + 1,2)$; iv) $P > R \ge \max (N_v^{{\rm{BS}}} - P + 1,2)$, which lead to  ${f_{N_v^{{\rm{BS}}},{N_s},2}}$ as in Eq. (\ref{Proposition1_f2}).
If $\frac{{\partial {F_2}(R)}}{{\partial R}}=0$, we may obtain $R_6$ and $R_7$, shown in Eq. (\ref{R6}) and Eq. (\ref{R7}).
Depending on the value of $Q$, i.e., $Q < \frac{{{N_s} + 1}}{2}$, $Q > \frac{{{N_s} + 1}}{2}$ and $Q = \frac{{{N_s} + 1}}{2}$, ${F_{2,\rm{max}}}(R)$ is obtained as in Eq. (\ref{F2max}).

Finally, the lower-bound of $N_t$ is determined by $N_t \ge \max({f_{N_v^{{\rm{BS}}},{N_s},1}}, {f_{N_v^{{\rm{BS}}},{N_s},2}})$, 
\begin{prf}
and Proposition 1 is proved.
\end{prf}

\ifCLASSOPTIONcaptionsoff
  \newpage
\fi



%

\bibliographystyle{IEEEtran}
\bibliography{references}
\end{document}